\newtheorem{theorem}{Theorem}[section]
\newtheorem{corollary}[theorem]{Corollary}
\newtheorem{lemma}[theorem]{Lemma}
\newtheorem{example}[theorem]{Example}
\newtheorem{definition}[theorem]{Definition}
\numberwithin{equation}{subsection}
\numberwithin{theorem}{subsection}
\newcommand{\UHaddress}{\em University of Helsinki, Department of Mathematics 
and Statistics\\
\em P.O. Box 68, FI-00014 Helsingin yliopisto, Finland}
\newcommand{\email}[1]{E-mail: \tt #1}
\newcommand{\emailkalle}{\email{kalle.koskinen@helsinki.fi}}
\title{Infinite volume Gibbs states of the generalized mean-field orthoplicial model}
\author{Kalle Koskinen\thanks{\emailkalle} \\[1em] \UHaddress}
\date{\today}
\begin{document}
\begin{titlingpage}
\maketitle
\begin{abstract}
\noindent
The generalized mean-field orthoplicial model is a mean-field model on a space of continuous spins on $\mathbb{R}^n$ that are constrained to a scaled $(n-1)$-dimensional $\ell_1$-sphere, equivalently a scaled $(n-1)$-dimensional orthoplex, and interact through a general interaction function. The  finite volume Gibbs states of this model correspond to singular probability measures. In this paper, we use probabilistic methods to rigorously classify the infinite volume Gibbs states of this model, and we show that they are convex combinations of product states. The predominant methods utilize the theory of large deviations, relative entropy, and equivalence of ensembles, and the key technical tools utilize exact integral representations of certain partition functions and locally uniform estimates of expectations of certain local observables. 
\end{abstract}
\end{titlingpage}
\tableofcontents
\addtocontents{toc}{\protect\thispagestyle{empty}}
\newpage
\section{Introduction} \label{sec:Introduction}
\noindent
The purpose of this paper is to present rigorous probabilistic methods to compute and classify the large $n$-limits of integrals of the form
\begin{align} \label{def:primary integral}
\mu_n^g [f \circ \pi_I] := \frac{1}{Q_n (g)}\int_{\mathbb{R}^n} d \phi \ e^{n g \left( \frac{1}{n} \sum_{i=1}^n \phi_i \right)} \delta \left( \sum_{i=1}^n |\phi_i| - n \right) (f \circ \pi_I)(\phi) ,
\end{align}   
where $g : \mathbb{R} \to \mathbb{R}$ is a "sufficiently regular" function which will be referred to as an interaction function, $d \phi$ is the Lesbesgue measure on $\mathbb{R}^n$, $\delta (\cdot)$ is formally a delta function, $f \in C_b (\mathbb{R}^I)$, where $C_b (\mathbb{R}^I)$ is the space of continuous bounded functions on a finite index set $I \subset [n] := \{ 1,2,..., n\}$, $\pi_I  :\mathbb{R}^n \to \mathbb{R}^I$ is the canonical coordinate projection, and $Q_n (g)$ is a normalization constant, which will be referred to as the partition function, which make $\mu_n^g$ into a probability measure. The main result in this paper is given in \cref{thm:limiting state full}, and it constitutes a full characterization of the infinite volume Gibbs states corresponding to the models given by the probability measures in \cref{def:primary integral} given some regularity of the interaction function $g$.
\\
\\
We will refer to the probability measure $\mu_n^g$ as a finite volume Gibbs state and the infinite volume limit, i.e. the large $n$-limit, when it exists, will be referred to as the infinite volume Gibbs state. We refer to \cref{sec:Heuristics} for a complete definition and discussion of the notion of infinite volume Gibbs state in this context.
\\
\\
At a heuristic level, to make such finite volume Gibbs states rigorous, we use the fact that the constraint function inside the delta function, when restricted to an orthant of $\mathbb{R}^n$, precisely defines a uniform measure over a scaled $(n-1)$-dimensional simplex. This method is used in \cref{sec:Microcanonical probability measures}. Since the $(n-1)$-dimensional $\ell_1$-sphere corresponds to the $(n-1)$-dimensional orthoplex, we refer to this model as the generalized mean-field orthoplicial model. This naming convention is similar to the convention used for the mean-field spherical model, see \cite{Kastner2006}, but the constraint changes from the $\ell_2$-sphere to the orthoplex.
\\
\\
Mean-field models of equilibrium statistical mechanics have been studied extensively as toy models of spins on various types of spaces, and the most famous model belonging to this class is the Curie-Weiss model, see \cite{Ellis1978,Ellis2006, Friedli2017} . The classical spin-$\frac{1}{2}$-Curie-Weiss model is a relatively simple exactly solvable model with interesting statistical mechanical phenomena such as phase transitions, anomalous scaling, infinite volume Gibbs states, etc. In addition, the model can be generalized in a variety of ways while retaining the essential simplicity of the models. Such generalizations are for instance the modifications of the interaction function, see \cite{Eisele1988}, additions of external random fields, see \cite{Matos1991}, and modifications to the ambient space, see \cite{Ellis1978}. In this direction, there are also modifications to the entire measure of the ambient spin space, where one changes the product structure to a constrained singular measure such as the uniform measure on a scaled sphere, see \cite{Kastner2006}. The model here also falls into this category, since the ambient space of spins is assumed to be constrained to scaled spheres in the $\ell_1$ norm.
\\
\\
To our knowledge, the orthoplicial model and the methods presented to solve the various problems associated with the orthoplicial model are novel in the literature, and known methods, such as in \cite{Ellis1978}, are not necessarily applicable. A particular approach, which is quite natural, is to try to swap the delta function for an appropriately parametrized exponential function, and solve rigorously the problem of interchanging such functions. This approach, however, fails, see \cref{sec:Heuristics}. Instead, the general method introduced in this paper, presented heuristically in \cref{sec:Heuristics}, is to go from a singly delta constrained probability measure to one which is doubly constrained. This particular doubly constrained probability measures is more tractable, and one can characterize both its limiting probability measure, and the uniform parametric convergence it has in the large $n$-limit. The limiting measure of the doubly constrained measure has a product structure, and, subject to further analysis, we find the general result that for a wide variety of suitable interaction functions, the limiting states of the model are convex combinations of product states. The suitable swapping between constrained and non-constrained measures is one aspect of the equivalence of ensembles, see \cite{Touchette2015}.
\\
\\
Let us now remark on the main methods and concepts used in this paper in more detail. The first step involves writing the finite volume Gibbs state as an integral mixture of probability measures such that the mixture acts on two variables which parametrize a doubly constrained measure which we will call a microcanonical probability measure. This step is carried out in \cref{sec:Main results} and \cref{sec:Microcanonical probability measures}. The general strategy is then to utilize a type of generalized dominated convergence theorem where both the integrating measure and the functions we are integrating are varying, see \cref{thm:general convergence}.
\\
\\
Using relative entropy methods, we are able to show that the difference in expectations of local observables of microcanonical probability measure and the completely unconstrained probability measures, referred to as the grand-canonical probability measure, depend explicitly on their corresponding statistical mechanical entropies, see \cref{thm:fundamental inequality}. Using large deviations methods, we are able to prove a broadly applicable theorem which allows one to prove locally uniform convergence of the finite volume microcanonical entropies using the concavity of the microcanonical entropies along with convergence of the grand-canonical entropy, see \cref{thm:entropy convergence}. Note that we are using a non-standard terminology by referring to the normalized logarithm of a partition function as an entropy irrespective of the ensemble the partition function comes from. We should emphasize that \cref{thm:entropy convergence} formalizes, at least at this level of regularity, the notion that one can rigorously deduce many properties of the limiting microcanonical entropy from the grand canonical entropy, which is typically far more tractable mathematically. 
\\
\\
For the orthoplicial model, one can easily verify some of the conditions of \cref{thm:entropy convergence} that are required, since the corresponding grand-canonical probability measure is a product state. Of some methodological interest is the fact that we use the notion of Lorentzian polynomials, see \cite{Braenden2020}, to prove that the microcanonical entropies are log-concave functions. The end result is that by combining together, \cref{thm:fundamental inequality}, \cref{thm:free energy properties}, and \cref{thm:limiting microcanonical entropy}, we obtain the locally uniform convergence of the difference of expectations of local observables for the microcanonical and grand-canonical probability measure in the form of \cref{thm:locally uniform convergence expectations}.
\\
\\
As for the mixture probability measure, we begin by once again applying the general theorem \cref{thm:entropy convergence}, to deduce the entropy of the corresponding canonical model, see \cref{thm:half-constrained ensemble free energy}, which is directly related to our model with a linear interaction function. Using tilting, we are then able to show that the mixture probability measures satisfies a large deviations principle, see \cref{thm:full free energy}. Using large deviations techniques found in \cref{sec:Large deviations and weak convergence}, we are able to already classify the limiting states of our model for a variety of relevant non-trivial interaction functions, see \cref{ex:CW non-vanishing} for the quadratic mean-field interaction with a non-vanishing magnetic field, see \cref{ex:CW vanishing}, for the quadratic mean-field interaction without an external magnetic field, and see \cref{thm:single point} and \cref{thm:double point} for the rigorous results concerning these two examples. 
\\
\\
In order to fully classify the limiting states of the model for more general interaction functions, we need an additional result concerning the microcanonial partition function which comes in the form of an exact generating function representation, see \cref{thm:microcanonical partition function integral}. The generating function that we obtain is a modified Bessel function of the first kind, and we utilize a particular integral representation of it. This allows one to fully characterize the weak convergence of the mixture probability measure by relating it to to Laplace-type integrals in three variables for which we can exactly deduce their asymptotics, see \cref{thm:exact asymptotics}. The exact result employs the notions of type, and maximal type, given in \cite{Ellis1978}, adapted to this particular model. The primary pair of results concerning this final result are \cref{thm:mixture weak full } and \cref{thm:limiting state full}, which can be summarized by stating that given sufficient regularity of the interaction function $g$, which are intimately related to properties of the limiting entropy, one is able to show that the limiting states are convex combinations of products states. 
\\
\\
In the literature, the closest works are \cite{Kastner2006} and \cite{Koskinen2020}, in which similar results, with entirely different methods, are produced for the so-called mean-field spherical model. Another similar work which considers a Berlin-Kac-type, see \cite{Berlin1952}, model with a spherical constraint is given in \cite{Lukkarinen2019}. From the pure mathematical perspective, non-interacting continuous models with multiple constraints have been considered in \cite{Chatterjee2017} and \cite{Nam2020}. These works both consider the particular phenomenon of condensation, and their approach could be described as probabilistic ones. For discrete two-constraint models, and formalism for the equivalence of ensembles for such models, see \cite{Grosskinsky2008}. In terms of methods, in \cite{Caputo2003}, there is an approach to proving a type of uniform convergence between constrained and non-constrained probability measures by adapting a uniform local central limit theorem. For a random-field model constrained to the sphere, a similar uniform convergence result between constrained and non-constrained probability measures is obtained in \cite{Koskinen2023}. Finally, we should also remark that this paper does not make use of the method of steepest descent, see \cite{Berlin1952}, nor do we rely on characteristic functions in any particular way to complete any of the proofs.
\subsection{Reading guide}
\noindent
This paper is primarily organized so that a majority of the concepts and methods without proofs can be gathered by reading the introduction contained in \cref{sec:Introduction} and the heuristics contained in \cref{sec:Heuristics}. These sections do not contain any proofs, but they do contain some definitions and outline the basic approach to the problems in this paper. 
\\
\\
The statements of the results, some important intermediate results, short or simple proofs, and relevant expository computations are done in \cref{sec:Main results}. The more involved proofs or methods are contained in \cref{sec:Intermediate results and proofs}. Note that \cref{sec:Intermediate results and proofs} also contains an entire subsection devoted to some results in theory of large deviations, see \cref{sec:Large deviations and weak convergence}, and the basic concepts and properties of relative entropy are given in \cref{sec:Relative entropy and local observables}.
\section{Heuristics} \label{sec:Heuristics}
\noindent
The functions $f$ used in \cref{def:primary integral} will be referred to as local functions and their associated finite index sets $I$ will be referred to as local index sets. Such local functions $f$ are naturally functions on $\mathbb{R}^n$ for large enough $n$ by using the coordinate projection $\pi_I : \mathbb{R}^n \to \mathbb{R}^I$, and representing them as a composition $f \circ \pi_I$. If one is able to resolve the large-$n$ limits of integrals of the form given in \cref{def:primary integral}, then one is able to specify, in the limit, the "expectations" of a large class of local observables. In doing so, subject to other regularity conditions on this limiting state one is able to produce a genuine probability measure on $\mathbb{R}^\mathbb{N}$. From now on, we will omit the coordinate projection $\pi_I$, and simple write the expectation with respect to a local function $f$ without the composition, unless it becomes pertinent for a specified reason. We will use the following definition of weak convergence and limit points of probability measures. 
\begin{definition} A sequence of probability measures $\mathcal{G} := \{ \mu_n \}_{n \in \mathbb{N}}$, such that each $\mu_n$ is a probability measure on $\mathbb{R}^n$, is said to converge weakly to a probability measure $\mu_\infty$ on $\mathbb{R}^\mathbb{N}$ if 
\begin{align*}
\lim_{n \to \infty} \mu_n [f] = \mu_\infty [f]
\end{align*}
for any $f \in C_b (\mathbb{R}^I)$.
\\
\\
The set of limit points $\mathcal{G}_\infty$ of $\mathcal{G}$ is given by
\begin{align*}
\mathcal{G}_\infty := \left\{ \mu \in \mathcal{P} (\mathbb{R}^\mathbb{N}) : \exists \{ n_k \}_{k \in \mathbb{N}}, \ \lim_{k \to \infty} \mu_{n_k} = \mu  \right\} ,
\end{align*}
where the limit is understood in the sense of the weak limit given here.
\end{definition}
\noindent 
There are simple extensions, see \cite{Koskinen2023} for an extension by "tensoring on 0" to the remaining $\mathbb{N} \setminus [n]$ components, that make the probability measure $\mu_n$ in this definition into probability measures on $\mathbb{R}^\mathbb{N}$, and using these extensions the definitions above are equivalent to the standard definitions of weak convergence of probability measure on Polish spaces, and the notion of limit points is to be understood as limit points with respect to the Lévy–Prokhorov metric. For our purposes, understanding that we are predominantly interested in studying the limit of expectations of local observables is sufficient for the contents of this paper. 
\\
\\
Using this notation, we are then interested in studying and classifying the structure and content of the sets $\mathcal{G}^g$, corresponding to the sequence of probability measures $\{ \mu_n^g \}_{n \in \mathbb{N}}$ specified in their functional form in \cref{def:primary integral}, which will be called the collection of finite volume Gibbs states, and $\mathcal{G}_\infty^g$, which will be called the collection of infinite volume Gibbs states, and their dependence on the interaction function $g$. 
\\
\\
The prototypical interaction function $g$ of this paper is based on the the Curie-Weiss Hamiltonian $H^J_{\operatorname{CW}, n} : \mathbb{R}^n \to \mathbb{R}$ given by
\begin{align*}
H^J_{\operatorname{CW}, n}(\phi) := - \frac{J}{2 n} \sum_{i,j = 1}^n \phi_i \phi_j = n \left( - \frac{J}{2} \left( \frac{1}{n} \sum_{i=1}^n \phi_i  \right)^2 \right) ,
\end{align*}
where $J > 0$ is a coupling constant, with the associated interaction function $g^{\beta, J} : \mathbb{R} \to \mathbb{R}$ given by
\begin{align*}
g^{\beta, J} (m) := \frac{\beta J}{2} m^2 ,
\end{align*}
where $\beta > 0$. With this interaction function, the probability measure in \cref{def:primary integral} takes the form
\begin{align*}
\mu^{\beta,J}_n [f] := \frac{1}{Q_n (\beta,J)}\int_{\mathbb{R}^n} d \phi \ e^{\frac{\beta J}{2 n} \sum_{i,j=1}^n \phi_i \phi_j} \delta \left( \sum_{i=1}^n |\phi_i| - n \right) f (\phi), 
\end{align*} 
and can be seen to contain two competing weights in the integrand: the interaction function gives larger weight to fields $\phi$ in which the components are of the same sign and as large as possible, this type of behaviour is why we refer to this interaction as ferromagnetic, while the delta function terms constrains the size aspect of the interaction. It is this competition which produces the non-trivial nature of the limiting state.
\\
\\
From this recipe of going from the Hamiltonian to the interaction function $g$, we can produce a number of "generalized" interactions such as $k$-body interactions corresponding to interaction functions of polynomial-type
\begin{align*}
g (m) := \sum_{j=1}^k \alpha_j m^{2j} ,
\end{align*}
where $\alpha_j$ are some real constants, even convex smooth interactions intended to model non-polynomial ferromagnetic interaction, and countless others which might be of interest.
\\
\\
The problem described here is well understood for models where the delta function is replaced by a product of density function, see \cite{Ellis1978}. Let us now remark on the connection between these types of generalized Curie-Weiss models, and the generalized mean-field orthoplicial model.
\\
\\
Formally, using delta functions, we have
\begin{align} \label{eq:formal mixture calculation}
&\int_{\mathbb{R}^n} d \phi \ e^{n g \left( \frac{1}{n} \sum_{i=1}^n \phi_i \right)} \delta \left( \sum_{i=1}^n |\phi_i| - n \right) f (\phi) \\ &= n \int_{-1}^1 dm \ e^{n g (m)} \int_{\mathbb{R}^n} d \phi \ \delta \left( \sum_{i=1}^n \phi_i - m n\right)\delta \left( \sum_{i=1}^n |\phi_i| - n \right)  f (\phi) \notag \\
&=  n \int_{-1}^1 dm \ e^{n g (m)} Z_n (m n,n) \nu_n (m,1) [f] , \notag
\end{align}
where
\begin{align} \label{def:microcanonical probability measure}
\nu_n (m,\rho) [f] := \frac{1}{Z_n (m n, \rho n)}  \int_{\mathbb{R}^n} d \phi \ \delta \left( \sum_{i=1}^n \phi_i - m n\right)\delta \left( \sum_{i=1}^n |\phi_i| - \rho n \right)  f (\phi) ,
\end{align}
where $\rho > 0$, $|m| \leq \rho$, and $Z_n (m n, \rho n)$ is a normalization constant which makes $\nu_n (m, \rho)$ into a probability measure. The values of $(m, \rho)$ for which the probability measure $\nu_n (m, \rho)$ exists in some formal sense are given by pairs satisfying $\rho > 0$, and $|m| \leq \rho$. These statements can be heuristically guessed "geometrically" by considering the intersection of hyperplanes with the $\ell_1$-spheres. For reasons which will become clear later, we will consider the interior of this set of existence, given and denoted by $\mathcal{A} := \{ (m, \rho) : \rho > 0, |m| < \rho\}$. Returning to \cref{def:primary integral}, we see that
\begin{align} \label{def:primary integral mixture form}
\mu_n^g [f] = \frac{n}{Q_n (g)} \int_{-1}^1 dm \ e^{n g (m)} Z_n (m,1) \nu_n (m,1) [f] .
\end{align}
In this form, the finite volume Gibbs state is written as an integral mixture of another probability measure. 
\\
\\
Although the original problem constrained the integrals to the $\ell_1$ ball of radius $n$, we have suggestively modified the notation so as to include the other possible values of the radius. This suggestive notation is due to the principle or phenomenon of the equivalence of ensembles, see \cite{Touchette2015}. We will refer to the probability measure $\nu_n (m, \rho)$ given formally in \cref{def:microcanonical probability measure}  as the  microcanonical probability measure. This probability measure is constrained by two functions $M_n, N_n : \mathbb{R}^n \to \mathbb{R}$ given by
\begin{align} \label{def:magnetization and particle number}
M_n (\phi) := \sum_{i=1}^n \phi_i, \ N_n (\phi) := \sum_{i=1}^n |\phi_i| .
\end{align}
We will refer to these functions as macrostates and the individual functions will be referred to as the magnetization and particle number respectively. In this paper, we will often refer to either ensembles or probability measures when discussing a particular thermodynamic model. Integrals with delta functions of the macrostates are referred to as constrained, and whenever we replace a delta function by some non-singular "function" of a macrostate, we are moving toward a less constrained state. With this perspective in mind, we will focus on the connection between the microcanonical probability measure and the grand canonical probability measure $\eta (\beta, \mu)$ on $\mathbb{R}^\mathbb{N}$ given by its action on $f \in C_b (\mathbb{R}^I)$ given by
\begin{align} \label{def:grand canonical probability measure}
\eta (\beta, \mu) [f \circ \pi_I] := \frac{1}{q(\beta, \mu)^{|I|}} \int_{\mathbb{R}^I} d \phi \ e^{- \beta \sum_{i \in I} \phi_i - \mu \sum_{i \in I}|\phi_i|} f (\phi) ,
\end{align}
where $\mu > 0$, $|\beta| < \mu$, and $q(\beta, \mu)^{|I|}$ is  a normalization constant making the finite marginals into probability measures. One can compute, by direct integration, that
\begin{align}
q(\beta, \mu) := \frac{1}{\mu + \beta} + \frac{1}{\mu - \beta} .
\end{align}
Note that, strictly speaking, the grand canonical probability measure should refer to the probability measure obtained from $\eta (\beta, \mu)$ by considering its marginal distribution on the index set $[n]$. 
\\
\\
The equivalence of ensembles principle states that, subject to some yet to be verified properties of the microcanonical and grand canonical partition functions, there are a number of ways in which these two probability measures are the same. For our purposes, we will utilize ideas stemming from the ensemble equivalence principle corresponding in some sense to thermodynamic, macrostate, and measure level equivalence of these probability measures. For a complete view on the principle of the equivalence of ensembles, see \cite{Touchette2015}.
\\
\\
To that end, we will need the finite and infinite volume specific microcanonical entropies $s_n, s : \mathcal{A} \to \mathbb{R}$ given respectively by
\begin{align} \label{def: microcanonical entropy}
s_n (m, \rho) := \frac{1}{n} \ln Z_n (m n, \rho n), \ s (m, \rho) := \lim_{n \to \infty} s_n (m, \rho) .
\end{align}
In addition, for the grand canonical ensemble we will need the finite and infinite volume specific entropies $f_n,f : \mathcal{A} \to \mathbb{R}$ given respectively by
\begin{align} \label{def: grand canonical entropy}
f_n (\beta, \mu) := \frac{1}{n} \ln q (\beta, \mu)^n, \ f (\beta, \mu) := \lim_{n \to \infty} f_n (\beta, \mu) .
\end{align}
Note the sign conventions used here. We will omit the specific part in their naming, and refer simply to entropies. For this particular model, as for all product state models, we trivially have $f_n (\beta, \mu) = f (\beta, \mu) = \ln q (\beta, \mu)$.
\\
\\
Using the entropies, we can rewrite \cref{def:primary integral mixture form} as
\begin{align}
\mu_n^g [f] = \frac{n}{Q_n (g)} \int_{-1}^1 dm \ e^{n (g (m) + s_n (m,1))} \nu_n (m,1) [f] .
\end{align}
The first type of equivalence property that we wish to utilize is the following pair of relations
\begin{align}
\sup_{(m, \rho) \in \mathcal{A}} \{ s (m, \rho) - \beta m - \mu \rho \} = f(\beta, \mu), \ \inf_{(\beta, \mu) \in \mathcal{A}} \{ f(\beta, \mu) + \beta m + \mu \rho \} = s (m, \rho) .
\end{align}
This relation is practically equivalent to that of two functions being Legendre conjugates, see \cite{Rockafellar1997}. Since we already have a closed form for $f(\beta, \mu)$, we may extract the form of $s(m, \rho)$ if this relation holds. 
\\
\\
The second equivalence property is the parameter matching scheme given by
\begin{align}
\eta (\beta, \mu) \left[ \frac{M_n}{n} \right] = m, \ \eta (\beta, \mu) \left[ \frac{N_n}{n} \right] = \rho . 
\end{align}
If for every pair $(m, \rho) \in \mathcal{A}$ there exists a corresponding pair $(\beta, \mu) \in \mathcal{A}$ satisfying the above relations and vice versa, then these corresponding pairs of values are the values for which we would expect the probability measures to be the same. We will use the notations $m (\beta, \mu)$, $\rho (\beta, \mu)$, $\beta(m, \rho)$, and $\mu (m, \rho)$ for this bijection. This bijection is intimately connected to the first equivalence property through the Legendre conjugates.
\\
\\
The final form of equivalence is then the rough statement that in the large $n$-limit, we have
\begin{align}
\nu_\infty (m, \rho) [f] := \lim_{n \to \infty} \nu_n (m, \rho) [f] = \eta(\beta (m, \rho), \mu (m, \rho)) [f] 
\end{align}
for local functions $f \in C_b (\mathbb{R}^{I})$.
\\
\\
If we now return to \cref{def:primary integral mixture form}, the heuristic behaviour of the model in the large $n$-limit is roughly speaking that
\begin{align}
\mu_n^g [f] \approx \left( \int_{-1}^1 dm \ e^{n (g(m) + s (m,1))} \right)
^{-1} \int_{-1}^1 dm \ e^{n (g(m) + s (m,1))} \nu_\infty (\beta (m,1), \mu (m,1)) [f] ,
\end{align}
and using the Laplace method, see \cite{Wong2001}, one would expect that
\begin{align}
&\left( \int_{-1}^1 dm \ e^{n (g(m) + s (m,1))} \right)
^{-1} \int_{-1}^1 dm \ e^{n (g(m) + s (m,1))} \nu_\infty (m,1) [f] &\approx \int_{M^* (\psi^g)} \alpha(dm) \ \nu_\infty (m,1) [f] ,
\end{align}
where $\alpha$ is a probability measure on $[-1,1]$ and $M^* (\psi^g) \subset (-1,1)$ is the set of global maximizing points of the mapping $[-1,1] \ni m \mapsto \psi^g (m):= g(m) + s(m,1)$. This is to be expected since integrands of the form above have an exponential rate concentration to the global maximum points of the given function.
\\
\\
The connection between this model and the generalized Curie-Weiss model is now evident. The limiting states of both models are given by mixtures of product states. However, for this model, one cannot realize these limiting states without the $\ell_1$ constraint. To see this, let us consider the following integral
\begin{align*}
W_n (\beta, \mu) := \int_{\mathbb{R}^n} d \phi \ e^{\frac{\beta J}{2 n}  \sum_{i,j=1}^n \phi_i \phi_j - \mu \sum_{i=1}^n |\phi_i|} ,
\end{align*}
where $\mu > 0$ and $\beta \leq 0$. This would be the less constrained grand canonical partition function to which one would hope that an equivalence principle holds. The partition function here is not finite if $\beta > 0$. For the allowed values of $\beta$, using the Fourier transform of the Gaussian, we have
\begin{align*}
W_n (\beta, \mu) &= \frac{1}{\sqrt{2 \pi}} \int_{- \infty}^\infty dz \ e^{- \frac{1}{2} z^2} \left( \int_{-\infty}^\infty d \phi \ e^{i \sqrt{\frac{(-\beta) J}{n}} z \phi - \mu |\phi|} \right)^n \\
&= \sqrt{\frac{n}{2 \pi}} \int_{- \infty}^\infty dz \ e^{- \frac{1}{2} nz^2} \left( \frac{2 \mu}{\mu^2 + (-\beta) J z^2} \right)^n \\
&= (2 \mu)^n \sqrt{\frac{n}{2 \pi}} \int_{- \infty}^\infty dz \ e^{- n \left( \frac{1}{2} z^2 + \ln (\mu^2 + (-\beta) J z^2)\right)} .
\end{align*}
Since the function $z \mapsto \frac{1}{2} z^2 + \ln (\mu^2 + (-\beta) J z^2)$ is trivially minimized when $z = 0$, by the Laplace method, it follows that
\begin{align*}
\lim_{n \to \infty} \frac{1}{n} \ln W_n (\beta, \mu) = \ln (2 \mu) - \mu^2 .
\end{align*} 
Now, if we include the mixture measure form of this integral, it follows that
\begin{align*}
&\frac{1}{W_n (\beta, \mu)} \int_{\mathbb{R}^n} d \phi \ e^{\frac{\beta J}{2 n}  \sum_{i=1}^n \phi_i \phi_j - \mu \sum_{i=1}^n |\phi_i|} f (\phi) \\ &= \left(\int_{- \infty}^\infty dz \ e^{- n \left( \frac{1}{2} z^2 + \ln (\mu^2 + (-\beta) J z^2)\right)}  \right)^{-1}\int_{- \infty}^\infty dz \ e^{- n \left( \frac{1}{2} z^2 + \ln (\mu^2 + (-\beta) J z^2)\right)} \eta (i \sqrt{(- \beta) J} z, \mu) [f] ,
\end{align*}
where $f \in C_b (\mathbb{R}^I)$ is a local function, from which we have
\begin{align*}
\lim_{n \to \infty} \frac{1}{W_n (\beta, \mu)} \int_{\mathbb{R}^n} d \phi \ e^{\frac{\beta J}{2 n}  \sum_{i=1}^n \phi_i \phi_j - \mu \sum_{i=1}^n |\phi_i|} f (\phi) = \eta (0, \mu) [f] .
\end{align*}
As can be seen, the limiting state is trivial in the sense that it is a pure state, i.e. not a convex combination of any other probability measures, and it does not depend on $\beta \leq 0$. It is this property why it is desirable to study the $\ell_1$ constrained model, since the replacement of the product measure, for this particular model, with a delta function reproduces the non-trivial limiting states.
\\
\\
The heuristic is then that the limiting states of the model are mixtures of product states of the form given in \cref{def:grand canonical probability measure}, where the mixture probability measure is determined by the properties of the interaction function $g$. This is precisely what we will prove rigorously.
\\
\\
Before presenting the main results and proofs, let us remark on the what exactly is not rigorous, incorrect, or too formal in the above exposition. The delta functions appearing in \cref{def:primary integral} and \cref{def:microcanonical probability measure} are completely formal objects, and we will rigorously define the microcanonical probability measure on which we can actually preform non-formal computations. In particular, the formal calculation presented in \cref{eq:formal mixture calculation} is strictly speaking incorrect. For this particular model, it is important to take into consideration the "boundary values" of the set $\mathcal{A}$. That is to say, the admissible pairs which satisfy $\rho > 0$ and $|m| = \rho$ produce partition functions which can not be neglected if one wants to verify the formal calculation in \cref{eq:formal mixture calculation}. In addition, the form of equivalence of ensembles we have specified here are vague and unverified. We will verify these forms of equivalence explicitly, and they will be presented as lemmas.
\section{Main results} \label{sec:Main results}
\noindent
In this section, we present the main results, short or simple proofs, and expository computations concerning the main results.
\subsection{Locally uniform convergence of observables and entropy of the microcanonical ensemble}
\noindent
We begin by rigorously defining the microcanonical probability measure $\nu_n (m, \rho)$ from \cref{def:microcanonical probability measure} for $(m, \rho) \in \mathcal{A}$, and the so-called "boundary values" corresponding to $\rho > 0$ and $|m| = \rho$. This is done by identifying the microcanonical probability measure as a convex combination of products of uniform measures on simplexes. The uniform measures on simplexes are rigorously definable via the so-called flag coordinates, and these uniform measures are computationally tractable. Due to the large number of properties that need to be shown for the microcanonical probability measures, we dedicate an entire section, see \cref{sec:Microcanonical probability measures}, to the rigorous definition and methods of use of this particular probability measure. The key definitions are for the microcanonical probaiblity measures $\nu_n (m, \rho)$ and the microcanonical partion functions $Z_n (M,N)$, now defined in \cref{def:microcanonical probability measure rigorous}.
\\
\\
In this work, we will often refer to Polish spaces and probability measures on them. Whenever we do so without an explicit reference to a $\sigma$-algebra, we implicitly mean with respect to the Borel $\sigma$-algebra associated with the topology of the Polish space. The basic principle by which we will identify the infinite volume Gibbs states is presented in the following lemma.
\begin{lemma} \label{thm:general convergence}Let $X$ be a Polish space. If $\{ \mu_n \}_{n \in \mathbb{N}}$ is a sequence of probability measures on $X$ converging weakly to a probability measure $\mu$ on $X$, $K \subset X$ is a compact continuity set of $\mu$ such that $\operatorname{supp} (\mu) \subset K$,
and $\{ f_n \}_{n \in \mathbb{N}}$ is a sequence of uniformly bounded functions on $X$ converging uniformly on $K$ to a function $f$, then it follows that
\begin{align*}
\lim_{n \to \infty} \int_{X} \mu_n (dx) \ f_n(x) = \int_{K} \mu (dx) \ f(x) .
\end{align*}
\end{lemma}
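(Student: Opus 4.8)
The plan is to compare $\int_X\mu_n(dx)\,f_n(x)$ with $\int_K\mu(dx)\,f(x)$ by passing through a single fixed member $f_N$ of the approximating sequence, which is continuous and bounded and therefore well behaved under weak convergence, and to absorb all remaining errors into an $\varepsilon$. Set $M:=\sup_n\sup_{x\in X}|f_n(x)|<\infty$, which is finite by uniform boundedness, and note that since a Borel probability measure on a Polish space assigns full mass to its topological support, the hypothesis $\operatorname{supp}(\mu)\subseteq K$ forces $\mu(K)=1$. The first step is to show that the mass of $\mu_n$ outside $K$ is asymptotically negligible: since $K$ is a continuity set of $\mu$, i.e.\ $\mu(\partial K)=0$, the portmanteau theorem yields $\mu_n(K)\to\mu(K)=1$, so $\mu_n(X\setminus K)\to 0$, and hence $\big|\int_{X\setminus K}\mu_n(dx)\,f_n(x)\big|\le M\,\mu_n(X\setminus K)\to 0$.

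It then remains to compare $\int_K\mu_n(dx)\,f_n(x)$ with $\int_K\mu(dx)\,f(x)$. Fix $\varepsilon>0$ and choose $N$ so large that $\sup_{x\in K}|f_m(x)-f(x)|<\varepsilon$ for all $m\ge N$. For $n\ge N$ I would use the triangle-inequality decomposition
\begin{align*}
\left|\int_X\mu_n(dx)\,f_n(x)-\int_K\mu(dx)\,f(x)\right|\le{}&\left|\int_X\mu_n(dx)\,\big(f_n(x)-f_N(x)\big)\right|\\
&+\left|\int_X\mu_n(dx)\,f_N(x)-\int_X\mu(dx)\,f_N(x)\right|\\
&+\left|\int_K\mu(dx)\,\big(f_N(x)-f(x)\big)\right| .
\end{align*}
Splitting the first term over $K$ and $X\setminus K$ and using $|f_n-f_N|\le|f_n-f|+|f-f_N|<2\varepsilon$ on $K$ together with $|f_n-f_N|\le 2M$ off $K$ bounds it by $2\varepsilon+2M\,\mu_n(X\setminus K)$; the second term tends to $0$ as $n\to\infty$ by weak convergence of $\mu_n$ to $\mu$ since $f_N\in C_b(X)$; and the third term is at most $\sup_{x\in K}|f_N-f|\cdot\mu(K)<\varepsilon$. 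Letting $n\to\infty$ and using $\mu_n(X\setminus K)\to 0$ gives $\limsup_{n\to\infty}\big|\int_X\mu_n(dx)\,f_n(x)-\int_K\mu(dx)\,f(x)\big|\le 3\varepsilon$, and since $\varepsilon>0$ is arbitrary the assertion follows.

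The step I expect to be the main obstacle is the control of the tail $X\setminus K$: the uniform convergence $f_n\to f$ is assumed only on $K$, so it gives no information about the $f_n$ (or their limit) away from $K$, and the argument survives only because the continuity-set hypothesis forces $\mu_n(X\setminus K)\to 0$ through the portmanteau theorem while uniform boundedness keeps the integrand controlled there; these are the only places where those two hypotheses enter. It is also worth noting that the middle term is handled using continuity of $f_N$: in the intended applications the functions $f_n$ are continuous (for instance of the form $(m,\rho)\mapsto\nu_n(m,\rho)[h]$ with $h$ a bounded continuous local function), so interpreting "functions" as bounded continuous functions costs nothing.
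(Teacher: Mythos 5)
Your proof is correct, but it takes a genuinely different route from the paper's. The paper's proof also reduces to the integrals over $K$ via $\mu_n(K)\to\mu(K)=1$, but then invokes the fact that weak convergence of $\mu_n\to\mu$ passes to the conditional measures $\mu_n(\cdot\mid K)\to\mu(\cdot\mid K)$ when $K$ is a continuity set, and applies this with the single test function $f$. You instead run a standard three-$\varepsilon$ triangle-inequality argument through a fixed member $f_N$ of the approximating sequence, using ordinary weak convergence $\mu_n[f_N]\to\mu[f_N]$ for that one fixed $f_N\in C_b(X)$. Your version is somewhat more elementary: it sidesteps the (slightly less routine) portmanteau-type statement about conditioning on continuity sets and replaces it with a single application of weak convergence against a fixed continuous bounded test function. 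What the paper's route buys in exchange is conceptual economy -- one clean limit in place of an $N$-then-$n$ double limit -- but the two are close in length and difficulty.

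One thing worth noting, which you already flagged and which applies equally to the paper's own argument: the lemma as stated does not assume the $f_n$ (hence $f$) to be continuous, yet both proofs need some continuity to invoke weak convergence. The paper's proof applies the $C_b$-form of weak convergence to $f$ on $K$, which requires $f$ continuous on $K$; yours applies it to $f_N$ on $X$, which requires $f_N\in C_b(X)$. In the intended application $f_n(m)=\nu_n(m,1)[h]$ with $h$ continuous bounded, so both requirements are met, but the lemma's hypotheses should really say "continuous" explicitly. Your observation that this is the place where continuity enters, together with your identification of the tail control over $X\setminus K$ as the point where the continuity-set and uniform-boundedness hypotheses are each indispensable, is exactly right.
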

\noindent
The proof, see \cref{sec:Relative entropy and local observables}, is an application of conditioning to $K$ and applying various weak convergence properties. 
\\
\\
With reference to \cref{def:primary integral} and \cref{def:primary integral mixture form}, using the definition and methods of \cref{sec:Microcanonical probability measures}, we can write the finite volume Gibbs states in the following form
\begin{align*}
\mu_n^g = \int_{\mathbb{R}} \kappa_n^g (dm) \ \nu_n (m,1) ,
\end{align*} 
where $\kappa_n^g$ is a probability measures on $\mathbb{R}$ supported by $[-1,1]$ with action on $f \in C_b (\mathbb{R})$ given by
\begin{align} \label{def:mixture form rigorous}
\kappa_n^g [f] := \frac{1}{Q_n (g)} &\int_{-1}^1 dm \ n e^{n g (m)} Z_n (m n, n) f(m) \\ &+ n e^{n g (1)} Z_n ( n, n) f(1) + n e^{n g(-1)} Z_n (-n,n) f (-1) , \notag
\end{align}
where the partition function then takes the following form
\begin{align} \label{def:partition function rigorous}
Q_n (g) := \int_{-1}^1 dm \ n e^{n g (m)} Z_n (m n, n) + n e^{n g (1)} Z_n (n,n)  + n e^{n g(-1)} Z_n (-n,n)  .
\end{align}
In light of \cref{thm:general convergence}, we have two goals. The first goal is to show that the collection of mixture probability measures $\{ \kappa_n^g \}_{n \in \mathbb{N}}$ converges weakly to some limiting probability measure, and that there exists a compact continuity set of this limiting probability measures which contains the support of the limiting probability measure. The second goal is to show that for a fixed $f \in C_b (\mathbb{R}^I)$ the collection of functions $\{ \nu_n (m,1) [f] \}_{n \in \mathbb{N}}$ understood as a collection of functions on the variable $m \in [-1,1]$ is uniformly bounded, which is immediate by the boundedness of $f$, and uniformly convergent on the required compact continuity set. 
\\
\\
In the heuristic sketch in the introduction, we did not pay any particular attention to the modes of convergence of the limiting objects. For this particular model, we are able to locally uniform convergence by relating the rate of convergence of local functions to the rate and mode of convergence of the finite volume entropies. This connection is described in the following fundamental inequality. 
\begin{lemma} \label{thm:fundamental inequality} For any finite index set $I \subset [n]$ and any pairs of values $(m, \rho) \in \mathcal{A}$ and $(\beta, \mu) \in \mathcal{A}$, we have
\begin{align*}
&\sup_{f \in C_b (\mathbb{R}^I), \ || f ||_\infty \leq 1} \left| \nu_n (m, \rho) [f] - \eta (\beta, \mu) [f] \right| \\ &\leq \sqrt{\frac{|I| (n - 2)}{2(n - 2 - |I|)} \left( \beta m + \mu \rho + f(\beta, \mu) - \frac{n}{n - 2} s_n (m, \rho) \right)}.
\end{align*}
\end{lemma}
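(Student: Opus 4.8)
The plan is to reduce the statement to a relative-entropy estimate via Pinsker's inequality, and then to evaluate that relative entropy explicitly using the flag-coordinate description of the microcanonical measure together with the log-concavity of the partition functions. Since $f$ depends only on the coordinates in $I$ and $\|f\|_\infty\le 1$, the left-hand side is exactly the total-variation distance between the pushforwards $\pi_{I*}\nu_n(m,\rho)$ and $\pi_{I*}\eta(\beta,\mu)$, so Pinsker's inequality bounds it by $\sqrt{2\,D(\pi_{I*}\nu_n(m,\rho)\,\|\,\pi_{I*}\eta(\beta,\mu))}$, and the task is to control this relative entropy by the "entropy gap" $\beta m+\mu\rho+f(\beta,\mu)-\tfrac{n}{n-2}s_n(m,\rho)$ with the indicated dimensional prefactor. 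The structural fact that makes this work is that, for \emph{every} admissible $(\beta,\mu)$, the microcanonical measure $\nu_n(m,\rho)$ is the conditional law of the product measure $\eta(\beta,\mu)$ given the joint level set $\{M_n=mn,\ N_n=\rho n\}$ — the grand-canonical density is constant there — and this disintegration is rigorous in the flag coordinates of \cref{sec:Microcanonical probability measures}.

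Concretely, I would first show that for any $J\subset[n]$ with $|J|\le n-2$ the marginal $\pi_{J*}\nu_n(m,\rho)$ has the compactly supported Lebesgue density $\phi_J\mapsto Z_{n-|J|}(mn-M_J(\phi_J),\,\rho n-N_J(\phi_J))/Z_n(mn,\rho n)$ on the region where the residual constraints are admissible, while $\pi_{J*}\eta(\beta,\mu)$ has density $q(\beta,\mu)^{-|J|}e^{-\beta M_J-\mu N_J}$. Taking the logarithm of the ratio, integrating against $\nu_n(m,\rho)$, and using the exchangeability identities $\nu_n(m,\rho)[M_J]=|J|m$, $\nu_n(m,\rho)[N_J]=|J|\rho$ together with $\ln q(\beta,\mu)=f(\beta,\mu)$ and $\ln Z_n(mn,\rho n)=n s_n(m,\rho)$, one obtains the exact identity
\[
D\big(\pi_{J*}\nu_n(m,\rho)\,\|\,\pi_{J*}\eta(\beta,\mu)\big)=|J|\big(\beta m+\mu\rho+f(\beta,\mu)\big)-n s_n(m,\rho)+\nu_n(m,\rho)\big[\ln Z_{n-|J|}(\text{residuals})\big].
\]
The residual term is then controlled using the log-concavity of $(M,N)\mapsto Z_k(M,N)$ (the Lorentzian-polynomial input advertised in the introduction): a first-order bound at the mean point $((n-|J|)m,(n-|J|)\rho)$ kills the linear fluctuation and leaves $(n-|J|)s_{n-|J|}(m,\rho)$, and specializing $J=[n-2]$ collapses it to a genuinely two-spin quantity $\nu_n(m,\rho)[\ln Z_2(\phi_{n-1}+\phi_n,\,|\phi_{n-1}|+|\phi_n|)]$, which I would bound by a constant from the explicit form of $Z_2$ (with a short argument handling its logarithmic blow-up as the two spins become sign-aligned, i.e. as the residual point approaches $\partial\mathcal{A}$). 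This already produces
\[
D\big(\pi_{[n-2]*}\nu_n(m,\rho)\,\|\,\pi_{[n-2]*}\eta(\beta,\mu)\big)\le(n-2)\Big(\beta m+\mu\rho+f(\beta,\mu)-\tfrac{n}{n-2}s_n(m,\rho)\Big),
\]
which in particular shows the gap is nonnegative.

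Finally, to pass from the $[n-2]$-marginal to the $I$-marginal I would use that $\nu_n(m,\rho)$ is exchangeable and $\eta(\beta,\mu)$ is a product state: by the chain rule for relative entropy the successive conditional terms $D(\pi_{j*}\nu_n\mid\pi_{[j-1]*}\nu_n\,\|\,\pi_{j*}\eta)$ are nondecreasing in $j$ (by convexity of relative entropy under averaging of the conditioning), so $j\mapsto D(\pi_{[j]*}\nu_n\,\|\,\pi_{[j]*}\eta)$ is convex and vanishes at $j=0$; a short manipulation of this convexity, carried out with some extra care so that the two "frozen" coordinates are accounted for correctly, transfers the $[n-2]$-bound to the $I$-marginal with a prefactor of the stated form $\frac{|I|(n-2)}{2(n-2-|I|)}$, and combining with Pinsker's inequality yields the lemma.

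The step I expect to be the main obstacle is the middle one: making the marginal density formula and the estimate on the two-spin residual term fully rigorous and \emph{uniform over all of $\mathcal{A}$}. This needs the precise normalization coming out of the flag-coordinate construction of \cref{sec:Microcanonical probability measures}, integrable control of $\ln Z_2$ near $\partial\mathcal{A}$, and the log-concavity of $Z_k$; it is also where the exact dimensional constants $\tfrac{n-2}{2(n-2-|I|)}$ and $\tfrac{n}{n-2}$ are pinned down, so the bookkeeping there must be done carefully rather than with the crude chain-rule bound. By contrast, the Pinsker reduction and the exchangeability/convexity step are routine once the relative-entropy identity above is in hand.
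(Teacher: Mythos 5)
Your overall route — Pinsker's inequality on the $I$-marginal, an exact relative-entropy identity for the $(n-2)$-marginal, and an exchangeability-plus-supermodularity step to transfer the $(n-2)$-bound to $I$ — is essentially the paper's proof. Two substantive comments. First, the step you flag as the main obstacle, controlling $\nu_n(m,\rho)\bigl[\ln Z_2(\phi_{n-1}+\phi_n,\,|\phi_{n-1}|+|\phi_n|)\bigr]$, is in fact trivial and your worry about a logarithmic blow-up near $\partial\mathcal{A}$ is a phantom: from \cref{def:microcanonical partition function} one reads off $Z_2(M,N)=\tfrac{1}{2}\binom{2}{1}\cdot 1\cdot 1=1$ identically on the interior of $\mathcal{A}$, so the residual term is exactly zero and no log-concavity or first-order bound is needed. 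Equivalently, the paper integrates out two coordinates using the mollified representation \cref{def:fundamental approximation} and finds that the $(n-2)$-marginal of $\nu_n(m,\rho)$ has Lebesgue density equal to an indicator divided by $Z_n(mn,\rho n)$, with no nonconstant factor remaining; this is exactly what makes the relative entropy an exact, computable identity rather than a bound. Your proposed detour through log-concavity of $Z_k$ and a first-order Taylor bound would in any case only yield an inequality and is not what pins down the constants; you should drop it and compute $Z_2$ directly.

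Second, for the passage from the $[n-2]$-marginal to the $I$-marginal you propose convexity of $j\mapsto\mathcal{H}_{[j]}$, which is a legitimate and in fact slightly sharper variant of the paper's argument: the paper uses the supermodularity inequality of \cref{thm:relative entropy properties} on disjoint shifted copies of $I$ together with a choice of $k$ satisfying $(k-1)|I|\le n-2<k|I|$ to get the prefactor $\tfrac{|I|}{n-2-|I|}$, whereas convexity of $\mathcal{H}_{[j]}$ with $\mathcal{H}_{[0]}=0$ would give $\tfrac{|I|}{n-2}$ directly (these are equivalent consequences of the same supermodularity). Either form is fine and implies the stated bound, but you should be explicit that the convexity you invoke follows from that supermodularity inequality via exchangeability (take $I=[j-1]\cup\{j+1\}$ and $J=[j]$ to get $2\mathcal{H}_{[j]}\le\mathcal{H}_{[j+1]}+\mathcal{H}_{[j-1]}$), rather than assert it as a general fact about chain-rule terms, which is not true without the exchangeability and product-reference structure.
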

\noindent
The proof of this result is an application of Pinsker's inequality for relative entropy, followed by the subadditivity property of relative entropy coupled with the permutation invariance of the microcanonical probability measure. For this model, we can exactly compute the relative entropy of the $(n-2)$:th marginal of the microcanonical probability measure from which we obtain the entropy terms in the above inequality. For the full proof, see \cref{sec:Relative entropy and local observables}.
\\
\\
If we were only interested in showing that the microcanonical probability measure converges to the grand canonical probability measure, it can be accomplished by studying the pointwise convergence of the entropies. However, since we want to prove locally uniform convergence, we need some additional regularity. The additional regularity that we will prove is that the sequence of finite volume microcanonical entropies are pointwise uniformly bounded, and that the microcanonical partition functions are log-concave functions on $\mathcal{A}$. By a classical result in convex analysis, see \cite[section 10]{Rockafellar1997}, once the pointwise limit of the finite volume microcanonical entropies is deduced, the convergence is immediately elevated to locally uniform convergence.
\\
\\
In some models, the grand canonical entropy is more computationally tractable than the microcanonical entropy. This is the case here as well and we will prove a general result which utilizes the aforementioned regularity properties of the microcanonical partition functions coupled with some additional regularity properties of the grand canonical entropy to 
prove a result, which might also be of general interest in other models.
\begin{theorem} \label{thm:entropy convergence}
Let $\{ Z_n \}_{n \in \mathbb{N}}$ be a sequence of log-concave functions $Z_n : n \mathcal{C} \to (0, \infty)$, where $\mathcal{C} \subset \mathbb{R}^m$ is a non-empty open convex set and $n \mathcal{C} := \{n c : c \in \mathcal{C} \}$, such that 
\begin{align*}
\sup_{n \in \mathbb{N}} \left| \frac{1}{n} \ln Z_n (n x)\right| < \infty
\end{align*}
for any $x \in \mathcal{C}$, and there exists a non-empty open convex set $\mathcal{C}' \subset \mathbb{R}^m$ such that 
\begin{align*}
\int_{n \mathcal{C}} d X \ e^{- \left< t, X \right>} Z_n (X) < \infty
\end{align*}
for all $t \in \mathcal{C}'$ and all $n \in \mathbb{N}$, where $\left< \cdot, \cdot\right>$ is the Euclidean inner product.
\\
\\
If the function $f : \mathbb{R}^m \to \mathbb{R} \cup \{ \pm \infty \}$ given by the mapping
\begin{align*}
f(t) := \lim_{n \to \infty} \frac{1}{n} \ln \int_{n \mathcal{C}} dX \ e^{- \left< t, X\right>} Z_n (X)
\end{align*}
exists and is a proper convex lower semi-continuous function of Legendre type which satisfies $\nabla [- f] \mathcal{C}' = \mathcal{C}$ then it follows that
\begin{align*}
\lim_{n \to \infty} \sup_{x \in K} \left| \frac{1}{n} \ln Z_n (nx) - \inf_{t \in \mathbb{R}^m} \{ \left< t, x \right> + f(t) \}  \right| = 0 ,
\end{align*}
for any compact set $K \subset \mathcal{C}$.
\end{theorem}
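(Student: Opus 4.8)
I would prove this by reducing everything to a statement about subsequential limits of the finite-volume entropies, invoking Rockafellar's compactness for concave functions, and then pinning the limit down by a Laplace/Varadhan-type estimate combined with Legendre duality. Write $s_n(x):=\tfrac1n\ln Z_n(nx)$, a finite concave function on $\mathcal C$ by log-concavity and positivity of $Z_n$, and $f_n(t):=\tfrac1n\ln\int_{n\mathcal C}dX\,e^{-\langle t,X\rangle}Z_n(X)$, so that $f_n(t)\to f(t)$ and, after the substitution $X=nx$,
\[
f_n(t)=\tfrac{m\ln n}{n}+\tfrac1n\ln\int_{\mathcal C}dx\,e^{n(s_n(x)-\langle t,x\rangle)}.
\]
Set $s(x):=\inf_{t\in\mathbb R^m}\{\langle t,x\rangle+f(t)\}$. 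Since $f$ is of Legendre type with $\nabla[-f]\mathcal C'=\mathcal C$, for each $x\in\mathcal C$ the infimum defining $s(x)$ is attained at the unique point $t^\ast(x)\in\mathcal C'$ with $\nabla[-f](t^\ast(x))=x$; hence $s$ is finite on $\mathcal C$, and, being the concave conjugate of a Legendre-type function, $s$ is itself essentially strictly concave and essentially smooth with $\nabla s(x)=t^\ast(x)$ and $\operatorname{int}\operatorname{dom}s\supseteq\mathcal C$ (see \cite[\S26]{Rockafellar1997}). Because locally uniform convergence on the open set $\mathcal C$ is metrizable, it suffices to show that every subsequence of $\{s_n\}$ has a further subsequence converging locally uniformly to $s$.

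The easy half: from $\sup_n|s_n(x)|<\infty$ on $\mathcal C$ and concavity, the family $\{s_n\}$ is bounded on compact subsets of $\mathcal C$ (bounded above since a concave function is $\le$ its values at the vertices of an enclosing simplex, and bounded below by reflecting in the centres of small balls), hence locally equi-Lipschitz (\cite[\S10]{Rockafellar1997}). By Arzelà--Ascoli, a given subsequence has a further subsequence $\{s_{n_k}\}$ converging, uniformly on compacts of $\mathcal C$, to a finite concave function $\sigma$. Estimating $s_n$ from below on a small ball about $x\in\mathcal C$ via the equi-Lipschitz bound and inserting this into the integral identity for $f_n$ (a Laplace lower bound) yields $f(t)\ge\limsup_n s_n(x)-\langle t,x\rangle$ for all $t,x$; taking the infimum over $t$ gives $\limsup_n s_n(x)\le s(x)$, so $\sigma\le s$ on $\mathcal C$.

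The crux is $\sigma\ge s$. Suppose not: $\sigma(x^\ast)<s(x^\ast)$ for some $x^\ast\in\mathcal C$; put $t^\ast:=t^\ast(x^\ast)\in\mathcal C'$ and $g:=s-\langle t^\ast,\cdot\rangle$. By Legendre duality $g$ attains a unique strict maximum at $x^\ast$ with value $f(t^\ast)$ and has compact superlevel sets inside $\operatorname{int}\operatorname{dom}s\supseteq\mathcal C$; combining this with $\sigma\le s$ and continuity of $\sigma$ near $x^\ast$ one gets $\sup_{x\in\mathcal C}\{\sigma(x)-\langle t^\ast,x\rangle\}<f(t^\ast)$. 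On the other hand I claim the Laplace-type upper bound
\[
\limsup_{k\to\infty}\tfrac1{n_k}\ln\int_{\mathcal C}dx\,e^{n_k(s_{n_k}(x)-\langle t^\ast,x\rangle)}\le\sup_{x\in\mathcal C}\{\sigma(x)-\langle t^\ast,x\rangle\};
\]
since the left side equals $\lim_k f_{n_k}(t^\ast)=f(t^\ast)$, this contradicts the previous inequality and forces $\sigma=s$. To prove the claim, fix $\varepsilon>0$ and decompose $\mathcal C$ into: a compact core $K_0\subset\mathcal C$, where uniform convergence $s_{n_k}\to\sigma$ bounds the exponent eventually by $\sup_{\mathcal C}(\sigma-\langle t^\ast,\cdot\rangle)+\varepsilon$; a bounded collar $\{x\in\mathcal C:\operatorname{dist}(x,\partial\mathcal C)<\varepsilon_0,\ \|x\|\le R\}$, where for $x$ in the collar concavity gives $s_{n_k}(x)\le s_{n_k}(y)+\langle p_k(y),x-y\rangle$ for $y\in K_0$ within $\varepsilon_0$ of $x$ and a supergradient $p_k(y)$, and the uniform Lipschitz bound $|p_k(y)|\le L_0$ with $s_{n_k}(y)\to\sigma(y)$ uniformly bounds the exponent by $\sup_{\mathcal C}(\sigma-\langle t^\ast,\cdot\rangle)+\varepsilon+(L_0+\|t^\ast\|)\varepsilon_0$; and the far region $\{\|x\|>R\}$, where, choosing finitely many unit vectors $u$ with $t^\ast+\delta u\in\mathcal C'$, writing $\langle t^\ast,x\rangle\ge\langle t^\ast+\delta u(x),x\rangle+\tfrac{\delta}{2}\|x\|$ for a suitable $u(x)$ and using $\int_{n\mathcal C}e^{-\langle t^\ast+\delta u,X\rangle}Z_n(X)\,dX\le e^{n(f(t^\ast+\delta u)+1)}$ for large $n$ bounds this contribution by $e^{n_k(\max_u f(t^\ast+\delta u)+1-\tfrac{\delta}{2}R)}$, exponentially dominated by $e^{n_k\sup_{\mathcal C}(\sigma-\langle t^\ast,\cdot\rangle)}$ once $R$ is large. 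Adding the three pieces, dividing by $n_k$, and letting $k\to\infty$, then $\varepsilon_0\to0$, then $\varepsilon\to0$ proves the claim.

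Consequently every subsequence of $\{s_n\}$ has a further subsequence converging locally uniformly to $s$, so $s_n\to s$ locally uniformly on $\mathcal C$: $\sup_{x\in K}|\tfrac1n\ln Z_n(nx)-s(x)|\to0$ for every compact $K\subset\mathcal C$, and $s(x)=\inf_{t\in\mathbb R^m}\{\langle t,x\rangle+f(t)\}$ by definition. The main obstacle is precisely the Laplace-type upper bound of the previous paragraph: obtaining the exponential estimates uniformly up to $\partial\mathcal C$ (controlled by concavity of $s_{n_k}$ and supporting hyperplanes anchored in the interior) and in the tails (controlled by tilting into $\mathcal C'$ and the bounded exponential growth of the tilted partition functions). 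These are exactly the two ingredients supplied by the hypotheses — log-concavity of $Z_n$ and finiteness of the tilted integrals on the open set $\mathcal C'$ with limit $f$, a form of exponential tightness — so no further input is needed.
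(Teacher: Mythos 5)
Your overall strategy -- extract a locally uniformly convergent subsequence $s_{n_k}\to\sigma$ via concavity plus pointwise boundedness, then pin $\sigma$ down to the concave conjugate $s(x)=\inf_t\{\langle t,x\rangle+f(t)\}$ by comparing asymptotics of the tilted integrals -- is the same skeleton as the paper's proof, and your lower bound argument ($\sigma\le s$ via a small-ball Laplace lower bound) is sound. Where you diverge is in how you identify $\sigma$ on the upper side: the paper introduces the tilted probability measures $P_n$, applies the G\"artner--Ellis theorem (using the essential smoothness and steepness of $f$) to get a large deviations principle with rate function $\Lambda^*$, and then matches $-\Lambda^*(\overline B(y,\delta))$ against the subsequential Laplace computation $\sup_{\overline B(y,\delta)}\{s'(x)-\langle t_0,x\rangle\}-f(t_0)$ on \emph{compact balls strictly inside} $\mathcal C$, sending $\delta\to0^+$. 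This cleanly confines the analysis to compacta on which uniform convergence is available; the troublesome boundary and tail contributions are absorbed into the LD machinery, where they are controlled precisely by the steepness hypothesis. You instead attempt a direct Laplace upper bound on the whole integral, which forces you to control the contribution near $\partial\mathcal C$ by hand, and that is where the argument has a genuine gap.

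Concretely, the collar step does not close. You bound the exponent on $\{x\in\mathcal C:\operatorname{dist}(x,\partial\mathcal C)<\varepsilon_0,\ \|x\|\le R\}$ by anchoring a supergradient at a point $y\in K_0$ with $\|x-y\|\le\varepsilon_0$, giving the error term $(L_0+\|t^*\|)\varepsilon_0$ with $L_0$ the Lipschitz bound on $K_0$, and you then send $\varepsilon_0\to0$. But the requirement that every collar point $x$ have some $y\in K_0$ within $\varepsilon_0$ forces $K_0$ to contain points within distance $2\varepsilon_0$ of $\partial\mathcal C$, and the Lipschitz constant of a concave function near the boundary of its domain scales like the reciprocal of the distance to that boundary (it is controlled by the oscillation divided by that distance). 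So $L_0\gtrsim c/\varepsilon_0$, and $(L_0+\|t^*\|)\varepsilon_0$ does \emph{not} vanish as $\varepsilon_0\to0$; the collar contribution is only bounded by $\sup_{\mathcal C}(\sigma-\langle t^*,\cdot\rangle)+O(1)$, which is not enough to contradict the limit $f(t^*)$. The far-region estimate (tilting by $\delta u$ into $\mathcal C'$) is fine, and the contradiction setup is correct, but without a working collar bound the upper half of the argument does not go through. Your closing remark that ``no further input is needed'' beyond log-concavity and finiteness of the tilted integrals is exactly the optimism the collar step fails to justify: the steepness part of the Legendre-type hypothesis is precisely what is needed to control this region, and a direct Laplace argument must invoke it explicitly (e.g.\ via exponential tightness of the $P_n$, or a barrier estimate exploiting that $-\nabla f$ blows up at $\partial\mathcal C'$) rather than via a naive Lipschitz pull-back. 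The paper's route through G\"artner--Ellis packages exactly this technical point.
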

\noindent
The proof of this result, see \cref{sec:Large deviations and weak convergence}, requires definitions and notions from large deviations theory. We have dedicated an entire section, see \cref{sec:Large deviations and weak convergence}, to the relevant definitions, and results which can be deduced after establishing a large deviations principle. The proof itself uses a relative compactness argument concerning locally uniformly convergent subsequences, and a characterization of the limits of said subsequences using a large deviations principle. 
\\
\\
To apply this method to this model, we being by providing the sufficient regularity of the finite volume microcanonical entropies. 
\begin{lemma} \label{thm:microcanonical entropy regularity} The collection of finite volume microcanonical entropies $\{ s_n \}_{n \in \mathbb{N}}$ is pointwise uniformly bounded and concave on $\mathcal{A}$.
\end{lemma}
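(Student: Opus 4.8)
The plan is to reduce both assertions to the explicit combinatorial form of the microcanonical partition function furnished by the rigorous construction in \cref{sec:Microcanonical probability measures}. Splitting $\mathbb{R}^n$ into orthants by the sign pattern of $\phi$ and using that, on the orthant where exactly $k$ coordinates are positive, the constraints $\sum_i \phi_i = nm$ and $\sum_i |\phi_i| = n\rho$ confine the positive coordinates to a scaled $(k-1)$-simplex of size $a:=\tfrac{n(\rho+m)}{2}$ and the negative coordinates to a scaled $(n-k-1)$-simplex of size $b:=\tfrac{n(\rho-m)}{2}$, one obtains, for $(m,\rho)\in\mathcal{A}$ (so that $a,b>0$),
\begin{align*}
Z_n(nm,n\rho) \;=\; \sum_{k=1}^{n-1}\binom{n}{k}\,\frac{a^{k-1}}{(k-1)!}\,\frac{b^{n-k-1}}{(n-k-1)!} \;=:\; F_n(a,b),
\end{align*}
which is, up to normalization, precisely the convex combination of products of uniform simplex measures underlying \cref{def:microcanonical probability measure rigorous}. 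Since $(m,\rho)\mapsto (a,b)$ is a linear isomorphism of the open convex cone $\mathcal{A}$ onto $(0,\infty)^2$ and $s_n(m,\rho)=\tfrac1n\ln Z_n(nm,n\rho)$, concavity of $s_n$ on $\mathcal{A}$ is equivalent to log-concavity of the polynomial $F_n$ on $(0,\infty)^2$, and pointwise uniform boundedness of $\{s_n\}$ is equivalent to boundedness in $n$ of $\tfrac1n\ln F_n$ at each fixed point.

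For the concavity I would follow the route announced in the introduction and use Lorentzian polynomials. The polynomial $F_n$ is homogeneous of degree $d:=n-2$ with strictly positive coefficients; writing $F_n(a,b)=\sum_{j=0}^{d}a_j\binom{d}{j}a^{j}b^{d-j}$ one computes $a_j=\tfrac{n(n-1)}{(j+1)!\,(n-1-j)!}$, and hence
\begin{align*}
\frac{a_{j-1}a_{j+1}}{a_j^{2}} \;=\; \frac{(j+1)(n-1-j)}{(j+2)(n-j)} \;<\; 1
\end{align*}
for every admissible $j$. Thus the coefficient sequence has no internal zeros and is ultra-log-concave, so $F_n$ is Lorentzian by the bivariate classification, and consequently $\log F_n$ is concave on $(0,\infty)^2$; see \cite{Braenden2020}. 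This yields concavity of $s_n$ on $\mathcal{A}$ for every $n$.

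For the pointwise uniform bound the upper estimate is elementary: bounding each term by $\binom{n}{k}\le 2^n$ and $\tfrac{x^{j}}{j!}\le e^{x}$ for $x\ge 0$ gives $Z_n(nm,n\rho)\le n\,2^{n}e^{a}e^{b}=n\,2^{n}e^{n\rho}$, whence $s_n(m,\rho)\le \tfrac{\ln n}{n}+\ln 2+\rho$, which is bounded in $n$. For the lower estimate I would retain a single term with $k=\lfloor\theta n\rfloor$ for a fixed $\theta\in(0,1)$; Stirling's formula shows that $\tfrac1n\ln\!\big[\binom{n}{k}\tfrac{a^{k-1}}{(k-1)!}\tfrac{b^{n-k-1}}{(n-k-1)!}\big]$ converges to a finite limit as $n\to\infty$, hence is bounded below in $n$, so $\inf_{n\ge 2}s_n(m,\rho)>-\infty$ (the degenerate index $n=1$, where $Z_1$ vanishes on $\mathcal{A}$, being immaterial). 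Combining the two bounds gives pointwise uniform boundedness.

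I expect the main obstacle to lie not in the convex-analytic step but upstream, in making the orthant decomposition and the simplex-volume identity fully rigorous — the content of \cref{sec:Microcanonical probability measures} via flag coordinates — and, once that is in hand, in fixing the correct normalization $\binom{d}{j}$ so that the ultra-log-concavity criterion applies cleanly; the displayed coefficient inequality is then the crux of the concavity claim. With this lemma established, $\{Z_n(n\,\cdot)\}$ satisfies exactly the log-concavity and pointwise-boundedness hypotheses required to invoke \cref{thm:entropy convergence}.
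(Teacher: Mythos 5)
Your proof is correct and the core of it — the log-concavity of $Z_n$ — is established exactly as in the paper: pass to the coordinates $(a,b)=\bigl(\tfrac{N+M}{2},\tfrac{N-M}{2}\bigr)$, reindex $F_n$ as a homogeneous polynomial of degree $n-2$ with coefficients $a_j\binom{n-2}{j}$, and verify ultra-log-concavity to conclude $F_n$ is Lorentzian; your explicit ratio $\tfrac{a_{j-1}a_{j+1}}{a_j^2}=\tfrac{(j+1)(n-1-j)}{(j+2)(n-j)}<1$ is in fact a cleaner statement of the inequality the paper invokes.

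The only genuine divergence is in the boundedness argument, and it is worth flagging because it is arguably an improvement. The paper gets the upper bound $\tfrac{n}{n-2}s_n(m,\rho)\le \beta m+\mu\rho+f(\beta,\mu)$ from the nonnegativity of the relative entropy $\mathcal{H}_{n-2}(\nu_n(m,\rho)\,\|\,\eta_n(\beta,\mu))$, i.e.\ it borrows the exact marginal computation later used for \cref{thm:fundamental inequality}. You instead bound $\binom{n}{k}\le 2^n$ and $x^j/j!\le e^x$ term by term to get $s_n\le\tfrac{\ln n}{n}+\ln 2+\rho$; this is more elementary and keeps the lemma self-contained, at the cost of a cruder (but entirely sufficient) constant. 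For the lower bound the paper retains only the single extreme term $k=n-1$, which gives the explicit limit $\ln\tfrac{\rho+m}{2}+1$ without any case analysis; your plan of keeping $k=\lfloor\theta n\rfloor$ and applying Stirling also works and is more symmetric in $m\mapsto -m$, but you should commit to a specific $\theta$ (or note that $k=n-1$ already suffices) and state the Stirling bound explicitly so that the finite-$n$ cases are covered, not just the $n\to\infty$ asymptotic. Both routes deliver the pointwise uniform bound; the paper's is shorter, yours avoids a forward reference to the relative-entropy machinery.
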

\noindent
The proof, see \cref{sec:Infinite volume entropies and states}, of log-concavity proceeds by identifying the microcanonical partition functions $Z_n$ as a composition of a bivariate Lorentzian polynomial of degree $n-2$ and a linear map. To prove the uniform pointwise boundedness, we use the positivity of the relative entropy between the $(n-2)$:th marginal of the microcanonical probability measure and the grand-canonical probability measure.
\\
\\
In light of \cref{thm:entropy convergence}, it remains to consider the mapping $f : \mathbb{R}^2 \to \mathbb{R}$ given by
\begin{align*}
f(\beta, \mu) := \lim_{n \to \infty} \frac{1}{n} \ln \int_{ \mathcal{A}} d M d N \ e^{- \beta M - \mu N} Z_n (M, N) ,
\end{align*}
where, in accordance with \cref{def:microcanonical partition function}, we have
\begin{align*}
Z_n (M,N) = \frac{1}{2} \sum_{k=1}^{n - 1} {n \choose k} \frac{\left( \frac{N + M}{2}\right)^{k - 1}}{(k - 1)!} \frac{\left( \frac{N - M}{2}\right)^{n - k - 1}}{(n - k - 1)!}
\end{align*}
for $(M,N) \in \mathcal{A}$.
\\
\\
It is immediate that if $(\beta, \mu) \not \in \mathcal{A}$, then $f(\beta,\mu) = \infty$. As for $(\beta, \mu) \in \mathcal{A}$, we can directly compute that
\begin{align*}
\int_{\mathcal{A}} d M d N \ e^{- \beta M - \mu N} Z_n (M, N) &=  \int_0^\infty d X \int_0^\infty d Y \ e^{- (\mu + \beta) X - (\mu - \beta) Y} \sum_{k=1}^{n-1} {n \choose k} \frac{X^{k - 1}}{(k-1)!} \frac{Y^{n - k - 1}}{(n - k - 1)!} \\
&=   \sum_{k=1}^{n-1} {n \choose k} \left( \frac{1}{\mu + \beta} \right)^k \left( \frac{1}{\mu - \beta} \right)^{n-k} \\
&=   \left( \frac{1}{\mu + \beta} + \frac{1}{\mu - \beta} \right)^n -  \left( \frac{1}{\mu + \beta} \right)^n -  \left( \frac{1}{\mu - \beta} \right)^n   .
\end{align*}
Computing the limit, it follows that
\begin{align*}
f(\beta, \mu) = \ln \left( \frac{1}{\mu + \beta} + \frac{1}{\mu - \beta} \right) = \ln q (\beta, \mu) .
\end{align*}
In summary, we have
\begin{align*}
f(\beta,\mu) = \begin{cases} \ln \left( \frac{1}{\mu + \beta} + \frac{1}{\mu - \beta} \right), &\ (\beta, \mu) \in \mathcal{A} \\ \infty, &\ (\beta, \mu) \not \in \mathcal{A} \end{cases} .
\end{align*}
We have included this calculation here to emphasize the fact that this calculation is relatively straightforward.
\\
\\
We present the relevant regularity conditions of the map $f : \mathbb{R}^2 \to \mathbb{R}$ in the following result.
\begin{lemma} \label{thm:free energy properties} The mapping $f : \mathbb{R}^2 \to \mathbb{R}$ is a proper convex lower semi-continuous function of Legendre type.
\\
\\
In addition, it follows that $(- \nabla [f]) \mathcal{A} = \mathcal{A}$, and 
\begin{align*}
\inf_{(\beta, \mu) \in \mathbb{R}^2} \{ \beta m + \mu \rho + f (\beta, \mu) \} &= \beta (m, \rho) m + \mu (m, \rho) \rho + f (\beta (m, \rho), \mu (m, \rho)) \\
&= 1 + \ln \left( \left( \sqrt{\frac{\rho + m}{2}} + \sqrt{\frac{\rho - m}{2}}\right)^2 \right) ,
\end{align*}
where $(\beta, \mu) := (-\nabla[f])^{-1} : \mathcal{A} \to \mathcal{A}$ is given by
\begin{align*}
\beta (m, \rho) := - \frac{\rho}{m} \frac{1}{\sqrt{\rho^2 - m^2}} + \frac{1}{m}, \ \mu (m, \rho) := \frac{1}{\sqrt{\rho^2 - m^2}} .
\end{align*}
\end{lemma}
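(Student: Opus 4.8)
\noindent
The plan is to establish every clause by direct computation from the closed form $f(\beta,\mu)=\ln\!\left(\frac{1}{\mu+\beta}+\frac{1}{\mu-\beta}\right)$ on $\mathcal{A}=\{(\beta,\mu):\mu>0,\ |\beta|<\mu\}$, with $f\equiv+\infty$ off $\mathcal{A}$, and then to package the output using the theory of Legendre-type functions in \cite[Section 26]{Rockafellar1997}. Properness is immediate since $f$ is finite on the nonempty set $\mathcal{A}$ and never $-\infty$. For convexity I would pass to the linear change of variables $u=\mu+\beta$, $v=\mu-\beta$, under which $\mathcal{A}$ is the open positive quadrant and $f$ becomes $h(u,v)=\ln(u+v)-\ln u-\ln v$; the Hessian of $h$ has positive diagonal entries $u^{-2}-(u+v)^{-2}$ and $v^{-2}-(u+v)^{-2}$, off-diagonal entry $-(u+v)^{-2}$, and determinant $2\,[uv(u+v)^2]^{-1}>0$, so $h$, hence $f$, is smooth and strictly convex on $\mathcal{A}$. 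Lower semicontinuity follows because $f$ is continuous on the open set $\mathcal{A}$, equals $+\infty$ elsewhere, and $f(\beta_k,\mu_k)\to+\infty$ along any sequence in $\mathcal{A}$ converging to a finite boundary point (where one of $\mu$, $\mu-\beta$, $\mu+\beta$ tends to $0$), so $\operatorname{epi}f$ is closed.

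To obtain the Legendre-type property, differentiability and strict convexity on $C:=\mathcal{A}=\operatorname{int}(\operatorname{dom}f)$ are already in hand, so it remains to check that $|\nabla f|$ diverges along every sequence in $\mathcal{A}$ tending to $\partial\mathcal{A}$; this I would read off from the simplified partials $-\partial_\beta f=\frac{1}{\mu+\beta}-\frac{1}{\mu-\beta}$ and $-\partial_\mu f=\frac{(\mu+\beta)^2+(\mu-\beta)^2}{(\mu+\beta)(\mu-\beta)\big((\mu+\beta)+(\mu-\beta)\big)}$, at least one of which blows up at each finite boundary point. Writing $m(\beta,\mu):=-\partial_\beta f$, $\rho(\beta,\mu):=-\partial_\mu f$ and setting $a=(\mu+\beta)^{-1}$, $b=(\mu-\beta)^{-1}$, the equations $-\nabla f(\beta,\mu)=(m,\rho)$ read $m=a-b$ and $\rho=(a^2+b^2)/(a+b)$; with $s:=a+b$ these collapse to the quadratic $s^2-2\rho s+m^2=0$, whose unique root compatible with $a,b>0$ (equivalently $s>|m|$) is $s=\rho+\sqrt{\rho^2-m^2}$. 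Back-substituting for $a,b$ and then for $\beta=\tfrac12(a^{-1}-b^{-1})$, $\mu=\tfrac12(a^{-1}+b^{-1})$ yields precisely $\mu(m,\rho)=1/\sqrt{\rho^2-m^2}$ and $\beta(m,\rho)=-m/(s\sqrt{\rho^2-m^2})=\tfrac1m-\tfrac{\rho}{m\sqrt{\rho^2-m^2}}$. A direct check that on $\mathcal{A}$ one has $\rho(\beta,\mu)>0$ and $|m(\beta,\mu)|<\rho(\beta,\mu)$ (the latter amounting to $|u^2-v^2|<u^2+v^2$), together with the explicit inverse just exhibited, gives the bijection $(-\nabla[f])\mathcal{A}=\mathcal{A}$; alternatively this follows from \cite[Theorem 26.5]{Rockafellar1997} once $\operatorname{int}(\operatorname{dom}f^{\ast})$ is identified with $\mathcal{A}$.

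For the infimum, fix $(m,\rho)\in\mathcal{A}$. The function $(\beta,\mu)\mapsto\beta m+\mu\rho+f(\beta,\mu)$ is strictly convex on $\mathcal{A}$ and tends to $+\infty$ at $\partial\mathcal{A}$: at finite boundary points because $f$ does, and as $\mu\to\infty$ because $\beta m+\mu\rho+f(\beta,\mu)\ge\mu(\rho-|m|)-\ln(2\mu)\to+\infty$, using $\beta m\ge-\mu|m|$, $\rho-|m|>0$, and $q(\beta,\mu)>\tfrac1{2\mu}$. Hence it attains its infimum over $\mathbb{R}^2$ at the unique interior stationary point, which by the previous step is exactly $(\beta(m,\rho),\mu(m,\rho))$. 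Substituting the formulas above, $f(\beta(m,\rho),\mu(m,\rho))=\ln(a+b)=\ln s=\ln\!\big(\rho+\sqrt{\rho^2-m^2}\big)$, while $\beta(m,\rho)m+\mu(m,\rho)\rho$ reduces to $1$ upon using $\rho s-m^2=\sqrt{\rho^2-m^2}\,s$; thus the infimum equals $1+\ln\!\big(\rho+\sqrt{\rho^2-m^2}\big)$, and the elementary identity $\rho+\sqrt{\rho^2-m^2}=\big(\sqrt{(\rho+m)/2}+\sqrt{(\rho-m)/2}\big)^2$ rewrites it in the stated form.

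I expect the bookkeeping, rather than any single idea, to be the bulk of the work. The two genuinely delicate points are: selecting the correct root of the quadratic and verifying that the resulting $(\beta,\mu)$ actually lies in $\mathcal{A}$, so that $-\nabla[f]$ maps \emph{onto} $\mathcal{A}$ and not merely into it; and justifying that the \emph{global} infimum over $\mathbb{R}^2$ — where $f$ is itself unbounded below — is attained at the interior critical point, which is exactly the role played by the Legendre-type and coercivity structure rather than a bare first-order condition. This lemma, combined with \cref{thm:microcanonical entropy regularity}, is precisely the input needed to invoke \cref{thm:entropy convergence} for the microcanonical entropies $s_n$.
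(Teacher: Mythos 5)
Your proposal is correct and proceeds along essentially the same route as the paper: both proofs hinge on the linear change of variables $(u,v)=(\mu+\beta,\mu-\beta)$ to turn $f$ into $\ln(u+v)-\ln u-\ln v$ on the positive quadrant, and both then compute $-\nabla f$ in these coordinates, invert it, and substitute into $\beta m+\mu\rho+f$. A few places where you diverge in detail, all legitimate: for strict convexity you compute the Hessian of $h(u,v)$ directly (positive diagonal entries and determinant $2/[uv(u+v)^2]$), whereas the paper observes that $f$ is the logarithm of a bilateral Laplace transform $\ln\int e^{-\beta\phi-\mu|\phi|}\,d\phi$ and reads strict convexity off that representation; for steepness you note that at least one of the simplified partials $\tfrac{1}{\mu+\beta}-\tfrac{1}{\mu-\beta}$, $\tfrac{(\mu+\beta)^2+(\mu-\beta)^2}{(\mu+\beta)(\mu-\beta)\cdot 2\mu}$ diverges at each boundary point, whereas the paper bounds $\|\nabla f\|$ below by $q(\beta,\mu)$ times a constant via norm equivalence on $\mathbb{R}^2$. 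For the inversion you introduce the extra substitution $a=1/u$, $b=1/v$, $s=a+b$ and solve the quadratic $s^2-2\rho s+m^2=0$, selecting the root $s=\rho+\sqrt{\rho^2-m^2}$; the paper instead solves the $2\times2$ system for $(u,v)$ directly, getting $u=1/(\sqrt{a}(\sqrt{a}+\sqrt{b}))$ etc.\ (same content, different packaging). Finally, you make explicit the coercivity of $\beta m+\mu\rho+f$ on $\mathcal{A}$ to justify that the global infimum over $\mathbb{R}^2$ is attained at the interior stationary point; the paper leaves this implicit in the Legendre-type machinery (once $(-\nabla f)$ is a bijection $\mathcal{A}\to\mathcal{A}$, standard conjugate duality gives the formula). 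Your extra care there is a genuine, if small, improvement in transparency. No gaps; the two subtle points you flag at the end are indeed the ones to watch, and you handle both correctly.
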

\noindent
For the proof, which is completely computational, see \cref{sec:Infinite volume entropies and states}.
\\
\\
Combining together the regularity of the finite volume entropies from \cref{thm:microcanonical entropy regularity}, and the computations and verifications concerning the function $f$ given in \cref{thm:free energy properties}, we have the following result.
\begin{lemma} \label{thm:limiting microcanonical entropy} It follows that
\begin{align*}
\lim_{n \to \infty} \sup_{(m, \rho) \in K \subset \mathcal{A}} | s_n (m, \rho) - s (m, \rho)| = 0,
\end{align*}
for any compact set $K \subset \mathcal{A}$, where 
\begin{align*}
s (m, \rho) := 1 + \ln \left( \left( \sqrt{\frac{\rho + m}{2}} + \sqrt{\frac{\rho - m}{2}}\right)^2 \right)  ,
\end{align*}
for any $(m, \rho) \in \mathcal{A}$.
\end{lemma}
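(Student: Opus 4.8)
The plan is to obtain this lemma as an immediate consequence of the general convergence result \cref{thm:entropy convergence}, applied to the sequence of microcanonical partition functions $\{Z_n\}_{n \in \mathbb{N}}$ with the choices $\mathcal{C} = \mathcal{C}' = \mathcal{A}$. The first point to record is that $\mathcal{A} = \{(m,\rho) : \rho > 0, \ |m| < \rho\}$ is an open convex cone, so $n\mathcal{A} = \mathcal{A}$ for every $n \in \mathbb{N}$; this identification is what makes the abstract hypotheses line up with the concrete objects, since $Z_n(M,N)$ is positive and well defined precisely on $\mathcal{A}$, and $s_n(m,\rho) = \tfrac{1}{n}\ln Z_n(n(m,\rho))$ is exactly the rescaled log-partition function appearing in the theorem.

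Next I would check the three structural hypotheses of \cref{thm:entropy convergence}. Log-concavity of $Z_n$ on $\mathcal{A}$ and the pointwise uniform bound $\sup_{n \in \mathbb{N}}\big|\tfrac{1}{n}\ln Z_n(nx)\big| < \infty$ for each $x \in \mathcal{A}$ are exactly the content of \cref{thm:microcanonical entropy regularity}. The integrability requirement $\int_{\mathcal{A}} dM\,dN\ e^{-\beta M - \mu N} Z_n(M,N) < \infty$ for $(\beta,\mu) \in \mathcal{A}$ follows from the explicit evaluation already carried out above, which yields the finite value $q(\beta,\mu)^n - (\mu+\beta)^{-n} - (\mu-\beta)^{-n}$; the same computation identifies the limiting function
\begin{align*}
f(\beta,\mu) = \lim_{n\to\infty} \frac{1}{n}\ln \int_{\mathcal{A}} dM\,dN\ e^{-\beta M - \mu N} Z_n(M,N)
\end{align*}
with $\ln q(\beta,\mu)$ on $\mathcal{A}$ and $+\infty$ off $\mathcal{A}$. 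The remaining hypotheses — that this $f$ is a proper convex lower semi-continuous function of Legendre type and that $(-\nabla[f])\mathcal{A} = \mathcal{A}$ — are precisely what \cref{thm:free energy properties} provides.

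With all hypotheses verified, \cref{thm:entropy convergence} gives
\begin{align*}
\lim_{n\to\infty}\ \sup_{(m,\rho)\in K}\ \Big| s_n(m,\rho) - \inf_{(\beta,\mu)\in\mathbb{R}^2}\{\beta m + \mu\rho + f(\beta,\mu)\} \Big| = 0
\end{align*}
for every compact $K \subset \mathcal{A}$. It remains only to evaluate the Legendre-type infimum, and this is done explicitly in \cref{thm:free energy properties}, which gives $\inf_{(\beta,\mu)\in\mathbb{R}^2}\{\beta m + \mu\rho + f(\beta,\mu)\} = 1 + \ln\!\left(\left(\sqrt{\tfrac{\rho + m}{2}} + \sqrt{\tfrac{\rho - m}{2}}\right)^2\right) = s(m,\rho)$. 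Since $f \equiv +\infty$ outside $\mathcal{A}$, the infimum over all of $\mathbb{R}^2$ coincides with the infimum over $\mathcal{A}$, so there is no mismatch between the minimization appearing in \cref{thm:entropy convergence} and the one in \cref{thm:free energy properties}. This finishes the proof.

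As for difficulty: essentially all of the genuine work has been front-loaded — into \cref{thm:microcanonical entropy regularity} (the Lorentzian-polynomial proof of log-concavity together with the relative-entropy proof of uniform boundedness), \cref{thm:free energy properties} (the convex-analytic verification that $f$ is of Legendre type with $(-\nabla[f])\mathcal{A} = \mathcal{A}$), and \cref{thm:entropy convergence} (the large-deviations argument). Consequently the proof of the present lemma is pure assembly, and the only step requiring real attention is the bookkeeping that matches the abstract framework to this setting: recognizing that $\mathcal{A}$ is a cone so that domains and rescalings are compatible, and confirming that the function $f$ produced by the explicit Laplace computation is literally the one whose Legendre-type properties were established in \cref{thm:free energy properties}.
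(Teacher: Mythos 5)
Your proof is correct and takes essentially the same route as the paper, which presents \cref{thm:limiting microcanonical entropy} as a direct assembly of \cref{thm:microcanonical entropy regularity}, \cref{thm:free energy properties}, and \cref{thm:entropy convergence} without a separate proof environment for the lemma itself. Your explicit observation that $\mathcal{A}$ is a cone so that $n\mathcal{A} = \mathcal{A}$, which is what makes the rescaling in the hypotheses of \cref{thm:entropy convergence} compatible with the microcanonical partition functions, is a point the paper leaves implicit but is exactly the right bookkeeping to surface.
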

\noindent
Combining together \cref{thm:fundamental inequality}, \cref{thm:free energy properties}, and \cref{thm:limiting microcanonical entropy}, we have the following result concerning the mode of convergence of local observables of the microcanonical probability measures.
\begin{corollary} \label{thm:locally uniform convergence expectations}For any finite index set $I \subset [n]$, it follows that
\begin{align*}
\lim_{n \to \infty} \sup_{(m, \rho) \in K \subset \mathcal{A}}\sup_{f \in C_b (\mathbb{R}^I), \ || f ||_\infty \leq 1} \left| \nu_n (m, \rho) [f] - \eta (\beta (m, \rho), \mu (m, \rho)) [f] \right| = 0 .
\end{align*}
\end{corollary}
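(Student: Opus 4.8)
\section*{Proof plan for \cref{thm:locally uniform convergence expectations}}

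The plan is to insert the optimal dual parameters into the fundamental inequality \cref{thm:fundamental inequality} and then reduce everything to the locally uniform convergence of the microcanonical entropies. Fix a finite index set $I$ and a compact set $K \subset \mathcal{A}$, and restrict to $n > |I| + 2$ so that the prefactor in \cref{thm:fundamental inequality} is well defined and positive. For each $(m,\rho) \in K$ I would apply \cref{thm:fundamental inequality} with the specific choice $(\beta,\mu) = (\beta(m,\rho),\mu(m,\rho))$, where $(\beta(\cdot,\cdot),\mu(\cdot,\cdot)) = (-\nabla[f])^{-1} : \mathcal{A} \to \mathcal{A}$ is the bijection furnished by \cref{thm:free energy properties}; since this map takes values in $\mathcal{A}$, the hypotheses of \cref{thm:fundamental inequality} are satisfied by this pair. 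By the Legendre-duality identity in \cref{thm:free energy properties}, this choice realizes the infimum, so $\beta(m,\rho)\, m + \mu(m,\rho)\, \rho + f(\beta(m,\rho),\mu(m,\rho)) = s(m,\rho)$ with $s$ exactly as in \cref{thm:limiting microcanonical entropy}. Hence the quantity under the square root in \cref{thm:fundamental inequality} becomes
\[
\frac{|I|(n-2)}{2(n-2-|I|)}\left( s(m,\rho) - \frac{n}{n-2}\, s_n(m,\rho)\right),
\]
and I would rewrite the inner factor as $s(m,\rho) - \frac{n}{n-2} s_n(m,\rho) = \bigl(s(m,\rho) - s_n(m,\rho)\bigr) - \frac{2}{n-2}\, s_n(m,\rho)$.

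Next I would bound each piece uniformly over $K$. The prefactor $\frac{|I|(n-2)}{2(n-2-|I|)}$ converges to $|I|/2$ and is in particular bounded in $n$. For the first piece, \cref{thm:limiting microcanonical entropy} gives $\sup_{(m,\rho)\in K}|s_n(m,\rho) - s(m,\rho)| \to 0$. For the second piece, the same locally uniform convergence together with the continuity — hence boundedness on the compact $K$ — of the limit $s$ shows that $\sup_{(m,\rho)\in K}|s_n(m,\rho)|$ is bounded uniformly in $n$ (for the finitely many remaining small $n$ one simply uses that each $s_n$ is continuous on $\mathcal{A}$, where $Z_n$ is strictly positive), so $\frac{2}{n-2}\sup_{(m,\rho)\in K}|s_n(m,\rho)| \to 0$. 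Therefore the expression under the square root tends to $0$ uniformly over $(m,\rho)\in K$; it is automatically nonnegative, being a positive multiple of the relative entropy that underlies \cref{thm:fundamental inequality}. Taking square roots, and observing that the right-hand side of \cref{thm:fundamental inequality} with this choice of $(\beta,\mu)$ no longer depends on $f$, I would conclude
\[
\sup_{(m,\rho)\in K}\ \sup_{\substack{f\in C_b(\mathbb{R}^I)\\ \|f\|_\infty\le 1}}\bigl|\nu_n(m,\rho)[f] - \eta(\beta(m,\rho),\mu(m,\rho))[f]\bigr| \longrightarrow 0
\]
as $n\to\infty$, which is the claim.

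There is no genuine obstacle here: the corollary is essentially a bookkeeping assembly of \cref{thm:fundamental inequality}, \cref{thm:free energy properties}, and \cref{thm:limiting microcanonical entropy}. The only two points that warrant a moment of care are verifying that the dual parameters $(\beta(m,\rho),\mu(m,\rho))$ remain inside $\mathcal{A}$ so that \cref{thm:fundamental inequality} is applicable with them — which is exactly the statement $(-\nabla[f])\mathcal{A} = \mathcal{A}$ in \cref{thm:free energy properties} — and upgrading the \emph{pointwise} uniform boundedness of $\{s_n\}$ from \cref{thm:microcanonical entropy regularity} to a bound uniform over the compact set $K$, which, as indicated above, is immediate once \cref{thm:limiting microcanonical entropy} is available.
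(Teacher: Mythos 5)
Your proof is correct and is precisely the combination the paper invokes (it states the corollary follows from \cref{thm:fundamental inequality}, \cref{thm:free energy properties}, and \cref{thm:limiting microcanonical entropy} without spelling out the details). You choose the optimal dual pair $(\beta(m,\rho),\mu(m,\rho))$ so that the Legendre duality of \cref{thm:free energy properties} collapses $\beta m + \mu\rho + f(\beta,\mu)$ to $s(m,\rho)$, then split off the $\frac{2}{n-2}s_n$ correction and invoke the locally uniform convergence of \cref{thm:limiting microcanonical entropy}; the upgrade from pointwise to compact-uniform boundedness of $\{s_n\}$ that you flag is handled correctly.
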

\noindent
Having established the compact-open convergence of the microcanonical probability measures, we move on to the weak convergence of the mixture probability measures.
\subsection{Limiting entropy and convergence of mixture probability measures}
\noindent
By the heuristics given, it is evident that the mixture probability measures $\{ \kappa_n (g) \}_{n \in \mathbb{N}}$ should converge, at an exponential rate, to the global maximizing points of some tilting function. This idea can be realized by proving that the mixture probability measures satisfy a large deviations principle. Since the full models have a general interaction function $g$, we will first prove a large deviations principle for linear $g$, and then use tilting to obtain the full large deviations principle. The following result considers the large deviations principle for a linear $g$.
\begin{lemma} \label{thm:half-constrained ensemble free energy} Let $\beta \in \mathbb{R}$, $g^\beta (m) := - \beta m$, $Q_n (\beta) := Q_n (g^\beta)$, and $\kappa_n^\beta := \kappa_n^{g^\beta}$.
\\
\\
Then, it follows that
\begin{align*}
\lim_{n \to \infty} \frac{1}{n} \ln Q_n (\beta) = \sup_{m \in [-1,1]} \{ s (m,1) - \beta m \} .
\end{align*}
Moreover, $\{ \kappa_n^\beta \}_{n=1}^\infty$ satisfies a large deviations principle with rate function $I^\beta : \mathbb{R} \to [0, \infty]$ given by
\begin{align*}
[-1,1] \ni m \mapsto I^\beta (m) := \sup_{m \in [-1,1]} \{ s (m,1) - \beta m\} - (s(m,1) - \beta m),
\end{align*}
and $I^\beta (m) = \infty$ for $m \not \in [-1,1]$.
\end{lemma}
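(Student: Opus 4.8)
The plan is to establish the free-energy limit first and then bootstrap it to the large deviations principle. Throughout set $\psi_n^\beta(m) := s_n(m,1) - \beta m$ for $m \in (-1,1)$, $\psi^\beta(m) := s(m,1) - \beta m$ for $m \in [-1,1]$, and $\bar\psi^\beta := \sup_{m \in [-1,1]}\psi^\beta(m)$. By \cref{def:partition function rigorous},
\[
Q_n(\beta) = n\int_{-1}^1 e^{n\psi_n^\beta(m)}\,dm + n e^{-n\beta}Z_n(n,n) + n e^{n\beta}Z_n(-n,n),
\]
a sum of three positive terms, so $\tfrac1n\ln Q_n(\beta)$ converges to the maximum of the limits of $\tfrac1n\ln$ of the three terms, once these are shown to exist. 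For the two boundary terms I would use the explicit value of $Z_n(\pm n, n)$ furnished by \cref{sec:Microcanonical probability measures} together with Stirling's formula to obtain $\tfrac1n\ln Z_n(\pm n, n)\to 1 = s(\pm1,1)$, so that these terms have rate $\psi^\beta(\pm 1)$. For the bulk term I would run the Laplace method using \cref{thm:limiting microcanonical entropy}: since $s_n \to s$ uniformly on every compact subinterval of $(-1,1)$ and $s(\cdot,1)$ extends continuously to $[-1,1]$, restricting the integral to a small interval around a point $m_0 \in (-1,1)$ yields $\liminf_n \tfrac1n\ln\bigl(n\int_{-1}^1 e^{n\psi_n^\beta}\bigr) \ge \psi^\beta(m_0)$, and taking the supremum over $m_0$ together with continuity gives the lower bound $\bar\psi^\beta$.

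The one genuinely delicate point is the matching upper bound for the bulk integral near the endpoints, where \cref{thm:limiting microcanonical entropy} supplies only locally uniform, not uniform, control of $s_n$ on $(-1,1)$. My plan is to exploit concavity of $\psi_n^\beta$ (\cref{thm:microcanonical entropy regularity}) together with the fact that $\partial_m s(m,1)\to \mp\infty$ as $m\to\pm1$, which forces $\psi^\beta$ to be strictly monotone in a one-sided neighbourhood of each endpoint (and, as a byproduct, shows $\bar\psi^\beta$ is attained at the unique interior point where $\partial_m s(m,1)=\beta$, so $\psi^\beta(\pm1)<\bar\psi^\beta$). Concretely, fix $\delta>0$ small enough that $\psi^\beta$ is strictly decreasing on $[1-2\delta,1)$; for $m\in[1-\delta,1)$, concavity bounds $\psi_n^\beta(m)$ by the affine extension of the chord of $\psi_n^\beta$ through $1-2\delta$ and $1-\delta$, whose slope tends to $\tfrac1\delta\bigl(\psi^\beta(1-\delta)-\psi^\beta(1-2\delta)\bigr)<0$, so for $n$ large $\psi_n^\beta(m)\le \psi_n^\beta(1-\delta)\to\psi^\beta(1-\delta)\le\bar\psi^\beta$; the neighbourhood of $-1$ is symmetric. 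Splitting $[-1,1]$ into $[-1,-1+\delta]$, $[-1+\delta,1-\delta]$, $[1-\delta,1]$ and using uniform convergence on the middle piece then gives $\limsup_n\tfrac1n\ln\bigl(n\int_{-1}^1 e^{n\psi_n^\beta}\bigr)\le\bar\psi^\beta$, whence $\lim_n\tfrac1n\ln Q_n(\beta)=\bar\psi^\beta=\sup_{m\in[-1,1]}\{s(m,1)-\beta m\}$.

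For the large deviations principle, since $\kappa_n^\beta$ is supported on the compact set $[-1,1]$ the family is automatically exponentially tight, so it suffices to prove the local lower and upper bounds and invoke the criterion in \cref{sec:Large deviations and weak convergence}. By \cref{def:mixture form rigorous} and the free-energy limit, $\kappa_n^\beta(A) = e^{-n(\bar\psi^\beta+o(1))}\bigl(n\int_{A\cap(-1,1)}e^{n\psi_n^\beta}\,dm + ne^{-n\beta}Z_n(n,n)\mathbf{1}_{\{1\in A\}} + ne^{n\beta}Z_n(-n,n)\mathbf{1}_{\{-1\in A\}}\bigr)$. For an interior point $m_0$ and a small open interval $G\ni m_0$, the Laplace lower bound gives $\liminf_n\tfrac1n\ln\kappa_n^\beta(G)\ge\psi^\beta(m_0)-\bar\psi^\beta=-I^\beta(m_0)$, and keeping only the relevant atom handles $m_0=\pm1$; for the upper bound, $\kappa_n^\beta(G)\le e^{-n(\bar\psi^\beta+o(1))}\cdot 2\delta n\, e^{n\sup_{\overline G}\psi_n^\beta}$ for small intervals around an interior point, where $\sup_{\overline G}\psi_n^\beta\to\sup_{\overline G}\psi^\beta\to\psi^\beta(m_0)$ as $\delta \to 0$, while at $m_0 = \pm1$ the near-boundary bound of the previous paragraph controls the integral part and the atom contributes rate $\psi^\beta(\pm1)$. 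Continuity of $s(\cdot,1)$ on $[-1,1]$ makes $I^\beta$ lower semicontinuous with compact sublevel sets, so it is a good rate function. As an alternative for this last step, once the free-energy limit is known for every real parameter one can apply the Gärtner--Ellis theorem, since $\tfrac1n\ln\kappa_n^\beta[e^{ntm}]=\tfrac1n\ln Q_n(\beta-t)-\tfrac1n\ln Q_n(\beta)\to\bar\psi^{\beta-t}-\bar\psi^\beta$ is finite and, by strict concavity of $s(\cdot,1)$, differentiable in $t$, hence essentially smooth, with Legendre transform $I^\beta$. In summary, the main obstacle is the endpoint analysis — computing $\lim_n\tfrac1n\ln Z_n(\pm n,n)$ and, above all, upgrading the locally uniform convergence of $s_n$ to sufficient uniform control near $m=\pm1$ — while the interior estimates are routine Laplace-method arguments.
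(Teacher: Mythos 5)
Your proof is correct, and it takes a genuinely different route from the paper's for the free-energy limit. Where the paper re-invokes the abstract machinery of \cref{thm:entropy convergence} — establishing log-concavity of $\rho \mapsto \int_{-\rho}^\rho dm\, n e^{-\beta m n} Z_n(mn,\rho n)$ via the Pr\'ekopa--Leindler inequality, verifying pointwise uniform boundedness through an explicit beta-function identity, computing the $\mu$-Laplace transform exactly, and then extracting $\sup_{|m|<\rho}\{s(m,\rho)-\beta m\}$ by Legendre duality — you instead reuse \cref{thm:limiting microcanonical entropy} directly and run a bare-hands Laplace method in the $m$-variable. The price you pay is the endpoint control, which you supply by combining the concavity of $\psi_n^\beta(\cdot)=s_n(\cdot,1)-\beta(\cdot)$ from \cref{thm:microcanonical entropy regularity} with the divergence of $\partial_m s(m,1)$ as $m\to\pm 1$; this chord-extension argument is sound, and it also neatly shows the boundary atoms are strictly subdominant (a fact the paper leaves implicit). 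Your treatment is shorter and more concrete, but it is logically downstream of \cref{thm:limiting microcanonical entropy}; the paper's route is essentially a second, parallel application of \cref{thm:entropy convergence} (now with the auxiliary variable $\rho$ rather than $(m,\rho)$), which is self-contained and yields the $\rho$-dependence for free. For the large deviations principle the two proofs coincide in your G\"artner--Ellis variant: both identify $\Lambda(t)=\bar\psi^{\beta-t}-\bar\psi^\beta$ as a convex conjugate of the (extended) $-\psi^\beta$ and use strict concavity of $s(\cdot,1)$ plus involutivity of the conjugate to read off $\Lambda^*=I^\beta$; your alternative of direct local bounds together with exponential tightness from compact support is a valid but different finish.
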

\noindent
The proof, see \cref{sec:Infinite volume entropies and states}, follows the same strategy as the microcanonical entropy. Here, the log-concavity is proved by an application of the Prekopa-Leindler theorem, and pointwise uniform boundedness is a direct calculation. 
\\
\\
Since the previous result yields a large deviations principle for the mixture probability measures $\{ \kappa_n^0 \}_{n \in \mathbb{N}}$, corresponding to the choice of $g$ being identically $0$, as direct corollary of tilting, see \cite{Hollander2008}, we have the following large deviations principle for the full mixture probability measures.
\begin{corollary} \label{thm:full free energy} For any $g \in C_b ([-1,1])$, it follows that
\begin{align*}
\lim_{n \to \infty} \frac{1}{n} \ln Q_n (g) = \sup_{m \in [-1,1]} \{ g(m) + s (m,1) \} .
\end{align*}
Moreover, $\{ \kappa_n^g \}_{n=1}^\infty$ satisfies a large deviations principle with rate function $I^g : \mathbb{R} \to [0, \infty]$ given by
\begin{align*}
[-1,1] \ni m \mapsto I^g (m) := \sup_{m \in [-1,1]} \{ g(m) + s (m,1)\} - (g(m) + s (m,1)) ,
\end{align*}
and $I^g (m) = \infty$ for $m \not \in [-1,1]$.
\end{corollary}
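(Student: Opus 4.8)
The plan is to realize $\kappa_n^g$ as an exponential tilt of $\kappa_n^0$ and then invoke the standard transfer of a large deviations principle under tilting. Comparing \cref{def:mixture form rigorous} and \cref{def:partition function rigorous} for a general $g$ with the same formulas for the zero interaction, one reads off the exact identity
\begin{align*}
\kappa_n^g[f] = \frac{\kappa_n^0[e^{n g} f]}{\kappa_n^0[e^{n g}]}, \qquad \frac{Q_n(g)}{Q_n(0)} = \kappa_n^0[e^{n g}],
\end{align*}
valid for every $f \in C_b([-1,1])$, where $e^{ng}$ denotes the function $m \mapsto e^{n g(m)}$ on $[-1,1]$. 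The two boundary terms $n e^{ng(\pm 1)} Z_n(\pm n, n)$ are handled correctly by this identity, since in passing from $\kappa_n^0$ to $\kappa_n^g$ they get multiplied by exactly the same factors $e^{ng(\pm 1)}$ in numerator and denominator.

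First I would record that, by \cref{thm:half-constrained ensemble free energy} applied with $\beta = 0$, the sequence $\{\kappa_n^0\}_{n\in\mathbb{N}}$ satisfies a large deviations principle with rate function $I^0(m) = \sup_{m'\in[-1,1]}\{s(m',1)\} - s(m,1)$ on $[-1,1]$ and $I^0 \equiv \infty$ off $[-1,1]$. Since the closed form for $s(\cdot,1)$ in \cref{thm:limiting microcanonical entropy} extends continuously up to $|m|=1$, the function $I^0$ is continuous on the compact interval $[-1,1]$ and all its level sets are contained in $[-1,1]$, so $I^0$ is a good rate function. As $g \in C_b([-1,1])$ is a bounded continuous function on this compact set, Varadhan's lemma gives
\begin{align*}
\lim_{n\to\infty} \frac{1}{n}\ln \kappa_n^0[e^{ng}] = \sup_{m\in[-1,1]}\{ g(m) - I^0(m)\},
\end{align*}
and the tilting theorem (see \cite{Hollander2008}) yields that $\{\kappa_n^g\}_{n\in\mathbb{N}}$ satisfies a large deviations principle with good rate function
\begin{align*}
I^g(m) = \big(I^0(m) - g(m)\big) - \inf_{m'\in[-1,1]}\big(I^0(m') - g(m')\big), \qquad m \in [-1,1],
\end{align*}
and $I^g \equiv \infty$ elsewhere.

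It then remains to match these expressions with the claimed ones, which is a one-line substitution. Writing $\tfrac{1}{n}\ln Q_n(g) = \tfrac{1}{n}\ln Q_n(0) + \tfrac{1}{n}\ln \kappa_n^0[e^{ng}]$ and using $\lim_n \tfrac1n \ln Q_n(0) = \sup_{m}\{s(m,1)\}$ from \cref{thm:half-constrained ensemble free energy} together with the Varadhan limit above, the constant $\sup_m\{s(m,1)\}$ cancels and one gets $\lim_n \tfrac1n\ln Q_n(g) = \sup_{m\in[-1,1]}\{g(m)+s(m,1)\}$. Substituting the explicit form of $I^0$ into $I^g$ likewise cancels the constant $\sup_{m'}\{s(m',1)\}$ and produces $I^g(m) = \sup_{m\in[-1,1]}\{g(m)+s(m,1)\} - (g(m)+s(m,1))$, exactly as stated. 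There is essentially no genuine obstacle here; the only points needing a modicum of care are the bookkeeping that confirms the boundary terms are compatible with the tilting identity, and the observation that $I^0$ is a \emph{good} rate function, so that Varadhan's lemma and the tilting transfer apply with no integrability side conditions — both following immediately from compactness of $[-1,1]$ and continuity of $s(\cdot,1)$ up to the endpoints.
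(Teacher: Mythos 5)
Your proposal is correct and follows exactly the route the paper intends: it identifies $\kappa_n^g$ as the exponential tilt of $\kappa_n^0$ by $e^{ng}$ (verifying the boundary terms go along for the ride), takes $\beta=0$ in \cref{thm:half-constrained ensemble free energy} for the base LDP, and then applies Varadhan's lemma together with the standard tilted LDP from \cite{Hollander2008}. Your added remarks on goodness of $I^0$ (automatic here since the measures live on the compact set $[-1,1]$ and the paper's rate functions are good by definition) and on the algebraic cancellation of $\sup_m s(m,1)$ are sound and are exactly the minor bookkeeping the paper's one-line ``direct corollary of tilting'' suppresses.
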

\noindent
Whenever a sequence of probability measures satisfies a large deviations principle with some rate function, it is accompanied by a measure concentration result to the kernel of the rate function, see \cref{sec:Large deviations and weak convergence}. In this vein, consider the function ${\psi^g} : [-1,1] \to \mathbb{R}$ given by
\begin{align*}
{\psi^g}(m) := g(m) + s (m,1) .
\end{align*} 
It is clear that if $I^g (m^*) = 0$, then the point $m^*$ corresponds to a global maximum point of ${\psi^g}$ by definition, and vice versa. Denote the set of global maximizing points of ${\psi^g}$ by $M^* (\psi^g)$, and, by the previous observation, we have $\left(I^g\right)^{-1} \{ 0 \} = M^* (\psi^g)$.
\\
\\
We may now begin the classification of the infinite volume Gibbs states. As a first partial result, by combining together \cref{thm:general convergence}, \cref{thm:locally uniform convergence expectations}, and \cref{thm:limiting points large deviations}, we have the following result.
\begin{lemma} \label{thm:partial limiting point}Let $g \in C_b ([-1,1])$, and suppose that $M^* (\psi^g) \subset (-1,1)$. Then, it follows that
\begin{align*}
\mathcal{G}^g_\infty \subset \left\{ \int_{-1}^1 \kappa (dm) \ \eta (\beta (m,1), \mu (\beta, 1)) :  \kappa \in \mathcal{M}_1 ([-1,1]), \ \operatorname{supp} (\kappa) \subset M^* (\psi^g) \right\} .
\end{align*}
\end{lemma}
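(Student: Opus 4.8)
The plan is to verify the hypotheses of \cref{thm:general convergence} with $X = \mathcal{P}(\mathbb{R}^\mathbb{N})$-valued integrands or, more precisely, to apply it on the level of expectations of a fixed local observable. Fix $f \in C_b(\mathbb{R}^I)$ with $\|f\|_\infty \leq 1$; it suffices to identify $\lim_{k \to \infty} \mu_{n_k}^g[f]$ along an arbitrary convergent subsequence and show it has the claimed form. Write $\mu_n^g[f] = \int_{\mathbb{R}} \kappa_n^g(dm)\, \nu_n(m,1)[f]$ using \cref{def:mixture form rigorous}. The three ingredients are: (i) by \cref{thm:full free energy}, $\{\kappa_n^g\}_{n}$ satisfies a large deviations principle with rate function $I^g$ whose zero set is $M^*(\psi^g)$, hence by \cref{thm:limiting points large deviations} any weak limit point $\kappa$ of $\{\kappa_n^g\}$ is a probability measure on $[-1,1]$ with $\operatorname{supp}(\kappa) \subset M^*(\psi^g)$; (ii) by hypothesis $M^*(\psi^g) \subset (-1,1)$, so we may choose a compact interval $K = [-1+\varepsilon, 1-\varepsilon] \subset (-1,1)$ containing $M^*(\psi^g)$ in its interior, and $K$ is a continuity set of $\kappa$ (its boundary $\{\pm(1-\varepsilon)\}$ is disjoint from $\operatorname{supp}(\kappa)$, so has $\kappa$-measure zero) with $\operatorname{supp}(\kappa) \subset K$; (iii) the functions $m \mapsto f_n(m) := \nu_n(m,1)[f]$ are uniformly bounded by $1$ and, by \cref{thm:locally uniform convergence expectations}, converge uniformly on $K$ (which, pushed through the bijection of \cref{thm:free energy properties}, is a compact subset of $\mathcal{A}$) to the limit function $f_\infty(m) := \eta(\beta(m,1),\mu(m,1))[f]$, which is continuous on $K$.

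With these in hand, \cref{thm:general convergence} applied along the subsequence $\{n_k\}$ (passing to a further subsequence so that $\kappa_{n_k}^g \to \kappa$ weakly, which is possible since $[-1,1]$ is compact so the $\kappa_n^g$ are tight) yields
\begin{align*}
\lim_{k \to \infty} \mu_{n_k}^g[f] = \lim_{k \to \infty} \int_{\mathbb{R}} \kappa_{n_k}^g(dm)\, f_{n_k}(m) = \int_K \kappa(dm)\, f_\infty(m) = \int_{-1}^1 \kappa(dm)\, \eta(\beta(m,1),\mu(m,1))[f].
\end{align*}
Since $\mu \mapsto \mu[f]$ separates points on $\mathcal{P}(\mathbb{R}^\mathbb{N})$ as $f$ ranges over $\bigcup_I C_b(\mathbb{R}^I)$, any weak limit point $\mu_\infty \in \mathcal{G}^g_\infty$ of $\{\mu_n^g\}$ is, simultaneously for all such $f$, of the form $\int_{-1}^1 \kappa(dm)\, \eta(\beta(m,1),\mu(m,1))[f]$ for the $\kappa$ obtained above — one should check that a single $\kappa$ works for all $f$ by a diagonal argument over a countable $\|\cdot\|_\infty$-dense family of local observables, then extend by density. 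This places $\mu_\infty$ in the asserted set.

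The main obstacle is the bookkeeping around subsequences and the interchange of quantifiers: \cref{thm:general convergence} is stated for a \emph{convergent} sequence of measures and a \emph{convergent} sequence of functions, whereas a priori $\{\kappa_n^g\}$ need only be relatively compact. One must therefore extract a subsequence making $\kappa_{n_k}^g$ converge, verify that the \emph{same} subsequence makes $\mu_{n_k}^g$ converge weakly (which is automatic if one starts from a weakly convergent subsequence of $\{\mu_n^g\}$ and refines it), and — the genuinely delicate point — produce one limiting mixing measure $\kappa$ that represents $\mu_\infty$ against \emph{every} local observable rather than an $f$-dependent $\kappa$. The diagonalization over a countable dense set of $f$'s, combined with the uniform (in $f$ with $\|f\|_\infty \le 1$) convergence in \cref{thm:locally uniform convergence expectations}, resolves this. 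A minor additional check is that $m \mapsto \eta(\beta(m,1),\mu(m,1))[f]$ is continuous and bounded on $K$, which follows from the explicit formulas for $\beta(m,1),\mu(m,1)$ in \cref{thm:free energy properties} together with dominated convergence in \cref{def:grand canonical probability measure}; everything else is a direct invocation of the results already established.
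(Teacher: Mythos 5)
Your proposal is correct and takes the same route as the paper: extract a weakly convergent subsequence of the mixture measures $\{\kappa_n^g\}$ nested inside a subsequence along which $\{\mu_n^g\}$ converges, invoke \cref{thm:full free energy} and \cref{thm:limiting points large deviations} to place $\operatorname{supp}(\kappa)$ inside $M^*(\psi^g)$, choose a compact continuity set $K \subset (-1,1)$ whose boundary avoids $\operatorname{supp}(\kappa)$, and apply \cref{thm:general convergence} together with \cref{thm:locally uniform convergence expectations}. All the hypotheses you check are the right ones, and the checks are sound.

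Two small comments. First, the diagonalization you flag as "the genuinely delicate point" is not actually needed. Once you fix one subsubsequence along which $\kappa_{n_{k_j}}^g \to \kappa$ weakly, the mixing measure $\kappa$ is determined before any observable is chosen; applying \cref{thm:general convergence} for each $f$ along that same fixed subsubsequence gives $\mu_\infty[f] = \lim_j \mu_{n_{k_j}}^g[f] = \int_K \kappa(dm)\,\eta(\beta(m,1),\mu(m,1))[f]$, with the single $\kappa$ automatically representing $\mu_\infty$ against every local observable. No countable dense family or diagonal argument is required, and the paper's proof implicitly uses this cleaner quantifier order. Second, in item (iii) the phrase about pushing $K$ "through the bijection of \cref{thm:free energy properties}" is a misdescription: \cref{thm:locally uniform convergence expectations} is stated directly over compact sets of $(m,\rho) \in \mathcal{A}$, and $K \times \{1\}$ is already a compact subset of $\mathcal{A}$ as soon as $K \subset (-1,1)$; the bijection plays no role in establishing compactness.
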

\noindent
The proof, see \cref{sec:Infinite volume entropies and states}, is a direct combination of the given results.
\\
\\
As a corollary, if we can deduce that there is exactly one global maximizing point of ${\psi^g}$ contained in the interval $(-1,1)$, then there is a unique infinite volume Gibbs state. This follows since the Dirac measure on a single point is the only probability measure supported on a single point. 
\begin{theorem} \label{thm:single point}
Let $g \in C^1 ([-1,1])$, and suppose that ${\psi^g}$ has a unique global maximizing point $m^* \in (-1,1)$. 
\\
\\
Then, it follows that
\begin{align*}
\lim_{n \to \infty} \mu_n^g = \eta (\beta(m^*,1), \mu (m^*,1)) .
\end{align*}
\end{theorem}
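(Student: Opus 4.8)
The plan is to read off \Cref{thm:single point} as the single‑atom specialization of the chain \Cref{thm:general convergence}–\Cref{thm:locally uniform convergence expectations}–\Cref{thm:full free energy}, but to carry the argument out directly on expectations of local functions so that one obtains genuine convergence of the sequence $\{\mu_n^g\}$ rather than merely uniqueness of its limit points (the latter being what \Cref{thm:partial limiting point} already delivers, since $M^*(\psi^g)=\{m^*\}$ forces the mixing measure to be $\delta_{m^*}$).

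First, since $g\in C^1([-1,1])\subset C_b([-1,1])$, \Cref{thm:full free energy} applies: $\{\kappa_n^g\}_{n\in\mathbb{N}}$ satisfies a large deviations principle with rate function $I^g$ whose zero set is $M^*(\psi^g)=\{m^*\}$. I would first upgrade the large deviations upper bound to weak convergence of the mixing measures. For every closed $F\subset[-1,1]$ with $m^*\notin F$ one has $\limsup_{n\to\infty}\tfrac1n\ln\kappa_n^g(F)\le -\inf_F I^g<0$, hence $\kappa_n^g(F)\to 0$; since each $\kappa_n^g$ is a probability measure supported on the compact set $[-1,1]$, this forces $\kappa_n^g\Rightarrow\delta_{m^*}$ (this is the standard measure‑concentration consequence of a large deviations principle with a unique zero, from \Cref{sec:Large deviations and weak convergence}).

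Next, fix a local function $f\in C_b(\mathbb{R}^I)$; by linearity assume $\|f\|_\infty\le 1$. Since $m^*\in(-1,1)$ is interior, pick $\varepsilon>0$ with $K:=[m^*-\varepsilon,m^*+\varepsilon]\subset(-1,1)$. Then $K$ is a compact continuity set of $\delta_{m^*}$ with $\operatorname{supp}(\delta_{m^*})=\{m^*\}\subset K$, and $K\times\{1\}$ is a compact subset of $\mathcal{A}$ because it avoids $m=\pm1$. Put $f_n(m):=\nu_n(m,1)[f]$ and $h(m):=\eta(\beta(m,1),\mu(m,1))[f]$. The $f_n$ are uniformly bounded by $1$ on $[-1,1]$, and by \Cref{thm:locally uniform convergence expectations} applied along the slice $\rho=1$ over the compact set $K\times\{1\}$ they converge to $h$ uniformly on $K$. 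Applying \Cref{thm:general convergence} with the Polish space $[-1,1]$, the measures $\kappa_n^g$, the limit $\delta_{m^*}$, the compact set $K$, and the functions $f_n\to h$, and using the mixture representation $\mu_n^g=\int_{\mathbb{R}}\kappa_n^g(dm)\,\nu_n(m,1)$, gives
\begin{align*}
\lim_{n\to\infty}\mu_n^g[f]=\lim_{n\to\infty}\int_{\mathbb{R}}\kappa_n^g(dm)\,f_n(m)=\int_K\delta_{m^*}(dm)\,h(m)=\eta(\beta(m^*,1),\mu(m^*,1))[f].
\end{align*}
As $f$ was an arbitrary local function, this is exactly the asserted weak convergence $\lim_{n\to\infty}\mu_n^g=\eta(\beta(m^*,1),\mu(m^*,1))$.

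The difficulty here is structural rather than computational: one must ensure the hypotheses of \Cref{thm:general convergence} are genuinely available at the chosen $K$. The decisive point is that $m^*\in(-1,1)$ is interior, which lets us choose $K$ bounded away from the boundary values $m=\pm1$, precisely where $\mu(m,1)=(1-m^2)^{-1/2}$ blows up and where \Cref{thm:locally uniform convergence expectations} — whose uniformity holds only over compact subsets of the open set $\mathcal{A}$ — would fail. Were the unique maximizer located at $\pm1$ the argument would break, which is exactly why the hypothesis $m^*\in(-1,1)$ is imposed. The only other point requiring a little care is the passage from the large deviations upper bound to $\kappa_n^g\Rightarrow\delta_{m^*}$, which is routine once $I^g$ is known to have a unique zero and the state space $[-1,1]$ is compact.
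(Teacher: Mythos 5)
Your proof is correct and follows essentially the same route the paper intends: you invoke \cref{thm:full free energy} to get the large deviations principle for $\kappa_n^g$, observe that the unique zero $m^*$ of $I^g$ forces $\kappa_n^g\Rightarrow\delta_{m^*}$ (this is precisely the unnamed corollary following \cref{thm:limiting points large deviations}), choose a compact continuity set $K\subset(-1,1)$ around $m^*$, and then combine \cref{thm:locally uniform convergence expectations} with \cref{thm:general convergence} to pass to the limit — which is exactly the chain the paper indicates via \cref{thm:partial limiting point}. Your version is, if anything, slightly more explicit than the paper's one-line remark that "the Dirac measure on a single point is the only probability measure supported on a single point," since you verify the actual weak convergence of $\kappa_n^g$ rather than only the uniqueness of limit points, but the underlying argument is the same.
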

\noindent
There are two prototypical functions $g$ that fall into this category. One we have already seen which is $g^\beta(m) = - \beta m$ for $\beta \in \mathbb{R}$. Since $s$ is strictly concave it is easy to check that there is a unique global maximizing point of $\psi (g^{\beta})$. The other example is related to the Curie-Weiss Hamiltonian with an external field.
\begin{example} \label{ex:CW non-vanishing}Consider $g (m) := \frac{\beta J}{2} m^2 + \beta h m$, where $J > 0$, $\beta > 0$, and $h \not = 0$. Let us first remark that ${\psi^g}$ must attain its maximum on $[-1,1]$. Suppose first that $h > 0$. For any point point $m^* < 0$ of ${\psi^g}$, this point cannot be a global maximum point since ${\psi^g}(- m^*) > {\psi^g}(m^*)$. It follows that if there exists a global maximizing point, then it must be of the same sign as $h$. Let us continue now with the case where $h > 0$, and note that the other case is analogous. By direct computation, we have
\begin{align*}
\partial [\psi^g] (m) = 0 \iff \beta J m + \beta h - \frac{m}{\sqrt{1 - m^2}} \frac{1}{1 + \sqrt{1 - m^2}} = 0 .
\end{align*}
One can further compute that
\begin{align*}
\partial^3 [\psi^g](m) = \frac{2 m^5 + 4 m^3 - 9 m \left( \sqrt{1 - m^2}  + 1 \right)}{\left( \sqrt{1 - m^2} + 1 \right)^3 (1 - m^2)^{\frac{5}{3}}} ,
\end{align*}
and
\begin{align*}
\frac{2 m^4 + 4 m^2}{9 \left( \sqrt{1 - m^2} + 1 \right)} < \frac{2 + 4 }{9} = \frac{6}{9} < 1 \implies 2 m^5 + 4 m^3 - 9 m \left( \sqrt{1 - m^2}  + 1 \right) < 0
\end{align*}
for $0 < m \leq 1$. It follows that $\partial [\psi^g](m) < 0$ on $(0,1]$ and $\partial [\psi^g]$ is thus strictly concave. In addition, we have $\partial [\psi^g](0) = \beta h > 0$, and $\lim_{m \to 1^-} \partial [\psi^g](m) = -\infty$. Using these properties, it follows that there must exist a unique point $m^* \in (0,1)$ such that $\partial [\psi^g](m^*) = 0$. In addition, by strict concavity of $\partial [\psi^g]$, it follows that ${\psi^g}$ is monotonically increasing on $(0, m^*)$ and monotonically decreasing  on $(m^*,1)$ which implies that this $m^*$ is the unique global maximum point and it is contained on $(0,1)$. A similar argument shows that if $h < 0$, then there is a unique global maximum point contained in $(-1,0)$. 
\end{example}
\noindent
For the second type of interaction, we consider even functions $g \in C_b ([-1,1])$ such that $g$ has precisely two global maximizing points $m^+ \in (0,1)$, and $m^- = - m^+ \in (-1,0)$. For such even functions, by spin-flip symmetry, or by changing variables $m \mapsto - m$, it follows that
\begin{align*}
\frac{\kappa_n^g (B(m^+, \delta))}{\kappa_n^g (B(m^+, \delta)) + \kappa_n^g (B(m^-, \delta))} = \frac{1}{2} = \frac{\kappa_n^g (B(m^-, \delta))}{\kappa_n^g (B(m^+, \delta)) + \kappa_n^g (B(m^-, \delta))},
\end{align*}
for small enough $\delta > 0$. In particular, by \cref{thm:weight split}, it follows that
\begin{align*}
\lim_{n \to \infty} \kappa_n^g = \frac{1}{2} \delta_{m^+} + \frac{1}{2} \delta_{m^-} 
\end{align*}
weakly. By combining this simple result with \cref{thm:general convergence}, and \cref{thm:locally uniform convergence expectations}, we have the following result.
\begin{theorem} \label{thm:double point}
Let $g \in C_b ([-1,1])$ be an even function such that $M^* (\psi^g) = \{ m^+, m^-\}$, where $m^+ > 0$, and $m^- = - m^+$.
\\
\\
Then, it follows that
\begin{align*}
\lim_{n \to \infty} \mu_n^g = \frac{1}{2} \eta (\beta(m^+,1), \mu (m^+,1)) + \frac{1}{2} \eta (\beta(m^-,1), \mu (m^-,1)) .
\end{align*} 
\end{theorem}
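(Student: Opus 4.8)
The plan is to combine the large deviations principle for the mixture measures $\{\kappa_n^g\}_{n \in \mathbb{N}}$ from \cref{thm:full free energy} with the spin-flip symmetry of the model to identify the weak limit of $\{\kappa_n^g\}$, and then feed this into \cref{thm:general convergence} through the locally uniform convergence supplied by \cref{thm:locally uniform convergence expectations}, following the same scheme as the proof of \cref{thm:single point}.

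First I would record that $\psi^g$ is even: $g$ is even by hypothesis and the limiting entropy $m \mapsto s(m,1) = 1 + \ln\left(\left(\sqrt{\tfrac{1+m}{2}} + \sqrt{\tfrac{1-m}{2}}\right)^2\right)$ is manifestly invariant under $m \mapsto -m$. Hence $M^*(\psi^g)$ is symmetric about the origin, consistent with the hypothesis $M^*(\psi^g) = \{m^+, m^-\}$ with $m^- = -m^+$, and both maximizers lie in $(-1,1)$, so \cref{thm:partial limiting point} is applicable. Next I would establish that $\kappa_n^g$ is invariant under the reflection $m \mapsto -m$: in the explicit formula for $Z_n(M,N)$ the substitution $M \mapsto -M$ merely reindexes the sum by $k \mapsto n-k$, so $Z_n(-mn,n) = Z_n(mn,n)$ for all $m$; at the boundary the terms $n e^{ng(1)} Z_n(n,n)$ and $n e^{ng(-1)} Z_n(-n,n)$ are exchanged while $g(1) = g(-1)$; and $g$ is even on $(-1,1)$. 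Therefore replacing $f(m)$ by $f(-m)$ leaves \cref{def:mixture form rigorous} unchanged, and in particular $\kappa_n^g(B(m^+,\delta)) = \kappa_n^g(B(m^-,\delta))$ whenever $\delta > 0$ is small enough that the two balls are disjoint subsets of $(-1,1)$.

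Then I would run the large deviations argument. By \cref{thm:full free energy}, $\{\kappa_n^g\}$ satisfies a large deviations principle whose rate function $I^g$ vanishes exactly on $M^*(\psi^g) = \{m^+, m^-\}$; the attendant concentration statement (\cref{thm:limiting points large deviations}, in quantitative form \cref{thm:weight split}) gives $\kappa_n^g(B(m^+,\delta)) + \kappa_n^g(B(m^-,\delta)) \to 1$ and forces every weak limit point of $\{\kappa_n^g\}$ to have the form $\lambda \delta_{m^+} + (1-\lambda)\delta_{m^-}$ for some $\lambda \in [0,1]$. The reflection symmetry just established forces $\lambda = 1-\lambda$, hence $\lambda = \tfrac12$ for every subsequential limit, and since the limit is then unique the full sequence converges: $\kappa_n^g \to \tfrac12 \delta_{m^+} + \tfrac12 \delta_{m^-}$ weakly. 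To finish, fix a local $f \in C_b(\mathbb{R}^I)$ with $\|f\|_\infty \le 1$ and a compact $K \subset (-1,1)$ containing $\{m^+, m^-\}$ in its interior; then $K$ is a continuity set of full mass for $\tfrac12\delta_{m^+}+\tfrac12\delta_{m^-}$, and by \cref{thm:locally uniform convergence expectations} the functions $m \mapsto \nu_n(m,1)[f]$ are uniformly bounded and converge uniformly on $K$ to the continuous function $m \mapsto \eta(\beta(m,1),\mu(m,1))[f]$. Applying \cref{thm:general convergence} to $\mu_n^g = \int_{\mathbb{R}} \kappa_n^g(dm)\,\nu_n(m,1)$ then yields $\mu_n^g[f] \to \tfrac12\eta(\beta(m^+,1),\mu(m^+,1))[f] + \tfrac12\eta(\beta(m^-,1),\mu(m^-,1))[f]$, which is the asserted weak convergence.

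The only genuine obstacle is the step that pins the weights to $\tfrac12$: one must check that the reflection symmetry of $\kappa_n^g$ really survives the boundary contributions in \cref{def:mixture form rigorous} and that the large deviations concentration is strong enough that no mass leaks away from $\{m^+, m^-\}$. This bookkeeping is precisely what \cref{thm:weight split} encapsulates, so modulo that lemma the remainder of the argument is routine.
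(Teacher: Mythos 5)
Your proposal is correct and essentially reproduces the paper's argument: reflection symmetry of $\kappa_n^g$ combined with the large-deviations concentration encapsulated in \cref{thm:weight split} yields $\kappa_n^g \to \tfrac12\delta_{m^+} + \tfrac12\delta_{m^-}$ weakly, and then \cref{thm:general convergence} together with \cref{thm:locally uniform convergence expectations} finishes the proof, exactly as the paper does. The only thing you add beyond the paper's terse exposition is the explicit check that $Z_n(-mn,n) = Z_n(mn,n)$ via the reindexing $k \mapsto n-k$ and the matching of the boundary terms in \cref{def:mixture form rigorous}, which the paper summarizes as ``spin-flip symmetry''.
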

\noindent
The prototypical example here is the Curie-Weiss Hamiltonian without an external field.
\begin{example} \label{ex:CW vanishing}Consider $g(m) := \frac{\beta J}{2} m^2$ where $\beta > 0$, and $J > 0$. We have
\begin{align*}
\partial [\psi^g] (m) &= m \left( \beta J  -  \frac{1}{\sqrt{1 - m^2}} \frac{1}{1 + \sqrt{1 - m^2}} \right), \\  \partial^2 [\psi^g] (m) &= \beta J  -  \frac{1}{\sqrt{1 - m^2}} \frac{1}{1 + \sqrt{1 - m^2}} - \frac{m^2 (2 \sqrt{1 - m^2} + 1)}{(\sqrt{1 - m^2} + 1)^2 (1 - m^2)^{\frac{3}{2}}}. 
\end{align*}
From the form of the first derivative, we see that ${\psi^g}$ cannot obtain a maximum at either end of the interval $[-1,1]$ and must thus be attained at a critical point in the open interval $(-1,1)$. There are now two options for the critical point, the first is that $m = 0$, from which we have
\begin{align*}
\partial [\psi^g](0) = 0, \ \partial^2 [\psi^g](0) = \beta J - \frac{1}{2} . 
\end{align*}
Due to the sign of the second derivative, this fails to be even a local maximum when $\beta J > \frac{1}{2}$, and whatever other critical point must be the global maximizing point if we are in this parameter range. The other case is that
\begin{align*}
\beta J = \frac{1}{\sqrt{1 - {m^\pm}^2}} \frac{1}{1 + \sqrt{1 - {m^\pm}^2}} \iff \sqrt{1 - {m^\pm}^2} = \frac{1}{2} \left( \sqrt{\frac{4}{\beta J} + 1} - 1 \right)  ,
\end{align*}
when $\beta J > \frac{1}{2}$. For other values of $\beta J$, there is no solution to this equation and we must conclude that the other critical point corresponds to the global maximizing point. We can conclude that when $\beta J \in \mathbb{R}$, then $m^* = 0$ is always a critical point, but it cannot be even a local maximizing point when $\beta J > \frac{1}{2}$, hence in this regime we must conclude that the pair of solutions $m^\pm$ given above are the only viable critical points, but since they are the only critical points, and the function must attain its maximum at a critical point, we may conclude that $m^\pm$ also correspond to global maximizing points of the function. When $\beta J < \frac{1}{2}$, the $m^* = 0$ critical point is the only critical point, and we can again conclude that this must then be the global maximizing point. If $\beta J = \frac{1}{2}$, we can check that both $m^\pm = 0$, and thus we again have a single critical point which must be a global maximizing point.
\end{example}
\noindent
The interactions described here are ones which can be dealt with without any further study of the structure of the function ${\psi^g}$. When there are no symmetries or unique global maximum points, one has to resort to other methods to resolve the limits. We will now present such methods for dealing with sufficiently smooth interaction functions that have multiple global maximizing points.
\subsection{Exact integral representations of the weights and full classification of the infinite volume Gibbs state}
\noindent
We will need a preliminary result concerning the microcanonical partition function in order to have better control of the mixture probability measures. We have the following generating function based representation of the microcanonical partition function.
\begin{lemma} \label{thm:microcanonical partition function integral} Let $(m, \rho) \in \mathcal{A}$.
\\
\\
Then, it follows that
\begin{align*}
Z_n (m n, \rho n) = \frac{2^{2n - 1} n^{n - 2} n!}{(2n)! \sqrt{\rho^2 - m^2} n^2} {2n \choose 2}  \frac{e^{-(n-1)}}{\pi^2} \int_0^\pi d \theta_1 \int_0^\pi d \theta_2 \ \cos \theta_1 \cos \theta_2 e^{(n-1) s (m, \rho, \theta_1, \theta_2)} , 
\end{align*}
where $s :\mathcal{A} \times [0, 2 \pi) \times [0, 2 \pi)$ is given by
\begin{align*}
s (m, \rho, \theta_1, \theta_2) := 1 + \ln \left(  \left( \sqrt{\frac{\rho + m}{2}} \cos \theta_1 + \sqrt{\frac{\rho - m}{2}} \cos \theta_2 \right)^2 \right) .
\end{align*}
\end{lemma}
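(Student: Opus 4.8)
The plan is to start from the closed form of $Z_n(M,N)$ displayed just before \cref{thm:microcanonical partition function integral}, namely
\begin{align*}
Z_n(mn,\rho n) = \frac{1}{2} \sum_{k=1}^{n-1} {n \choose k} \frac{\left( \frac{N+M}{2}\right)^{k-1}}{(k-1)!} \frac{\left( \frac{N-M}{2}\right)^{n-k-1}}{(n-k-1)!}
\end{align*}
with $M = mn$, $N = \rho n$, and to recognize the sum over $k$ as essentially a coefficient/value of a modified Bessel function $I_1$. Writing $X := \tfrac{N+M}{2} = \tfrac{(\rho+m)n}{2}$ and $Y := \tfrac{N-M}{2} = \tfrac{(\rho-m)n}{2}$, one rewrites ${n \choose k}\tfrac{X^{k-1}}{(k-1)!}\tfrac{Y^{n-k-1}}{(n-k-1)!} = \tfrac{n!}{k!(n-k)!}\cdot\tfrac{k}{X}\cdot\tfrac{n-k}{Y}\cdot\tfrac{X^k Y^{n-k}}{n!}$ and, after massaging binomial/factorial prefactors, identifies $\sum_k$ with a value of $\sum_{k}\tfrac{(\sqrt{XY})^{\,\text{stuff}}}{k!\,(n-k)!}$, i.e. a finite-sum analogue of the series $I_1(2\sqrt{XY}) = \sum_{j\ge 0}\tfrac{(XY)^{j+1/2}}{j!(j+1)!}\cdot(XY)^{-1/2}$. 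The key point is that the generating-function identity turns the finite sum into (a piece of) $I_\nu$ at argument proportional to $n\sqrt{\rho^2-m^2}$, since $4XY = (\rho^2-m^2)n^2$.

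Next I would insert a classical integral representation of the relevant Bessel function of the first kind — the one of the form $I_\nu(z) \propto \int_0^\pi e^{z\cos\theta}(\cdots)\,d\theta$, or more precisely the double-angle version obtained by representing each of the two "one-dimensional" Laplace-type factors $\tfrac{1}{\mu\pm\beta}$-style integrals by a trigonometric integral. Concretely, the appearance of $\cos\theta_1\cos\theta_2$ and of $\left(\sqrt{\tfrac{\rho+m}{2}}\cos\theta_1 + \sqrt{\tfrac{\rho-m}{2}}\cos\theta_2\right)^2$ inside the exponent strongly suggests one writes $X^k/(k-1)!$ and $Y^{n-k-1}/(n-k-1)!$ each via a contour/trigonometric integral (a Hankel-type or Schläfli-type representation of $1/\Gamma$), performs the now-geometric sum over $k$ of $\binom{n}{k}(a\cos\theta_1)^k(b\cos\theta_2)^{n-k}$ to get $(a\cos\theta_1 + b\cos\theta_2)^n$ with $a = \sqrt{\tfrac{\rho+m}{2}}$, $b=\sqrt{\tfrac{\rho-m}{2}}$, and then applies the large-$n$ bookkeeping $(a\cos\theta_1+b\cos\theta_2)^n = e^{(n-1)\ln((\cdots)^2)/ \cdot}$ absorbed into $e^{(n-1)s(m,\rho,\theta_1,\theta_2)}$ after pulling out the "$1$" in $s$ and the $e^{-(n-1)}$ prefactor. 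The two omitted terms $(\tfrac{1}{\mu+\beta})^n$ and $(\tfrac{1}{\mu-\beta})^n$ in the analogous integral computation above correspond exactly to the $k=0$ and $k=n$ terms one must exclude, which is why the sum runs $k=1,\dots,n-1$; this needs to be tracked carefully so that the $\cos\theta_1\cos\theta_2$ weight (rather than a constant weight) emerges.

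Then it is a matter of collecting constants: the factors $\tfrac{2^{2n-1}n^{n-2}n!}{(2n)!\sqrt{\rho^2-m^2}\,n^2}\binom{2n}{2}$ should assemble from (i) the $2^{\,\pm n}$ coming from $X+Y$ versus $XY$ normalizations, (ii) the $\sqrt{\rho^2-m^2} = \sqrt{4XY}/n$ coming from the $z^{-\nu}$ normalization of the Bessel representation, (iii) factorials $n!$, $(2n)!$ from the duplication-type identity relating $\Gamma$-factors at $k$ and $n-k$ to a single index running to $2n$, and (iv) the $\pi^{-2}$ from the two $\int_0^\pi$ normalizations of the trigonometric representations. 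The main obstacle I expect is precisely this constant-chasing together with getting the exact Bessel representation with the correct index and the correct $\cos\theta_1\cos\theta_2$ weight — choosing the right integral representation (Schläfli vs. the $\int_0^\pi e^{z\cos\theta}\sin^{2\nu}\theta\,d\theta$ form vs. a product of two such) so that the $\theta_1,\theta_2$ integrand comes out as stated rather than as some equivalent-but-uglier expression. Once the representation is pinned down, no asymptotics are needed here: this lemma is an exact identity, and the verification reduces to an (admittedly delicate) algebraic check, e.g. by comparing both sides as analytic functions of $(m,\rho)$ on $\mathcal{A}$ or by expanding the trigonometric integrals back into their defining series and matching term by term with the finite sum over $k$.
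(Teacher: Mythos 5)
Your plan is pointing in the right direction (recognize a Bessel-type convolution structure, insert a $\int_0^\pi e^{z\cos\theta}$-type integral representation, chase the constants), but the central mechanism you describe does not match what actually makes the identity come out, and as stated it would not produce the right integrand.

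The paper does \emph{not} treat the finite sum as ``a piece of $I_\nu$ at argument proportional to $n\sqrt{\rho^2-m^2}$.'' Instead it introduces an auxiliary generating variable $z$ and defines $G(z) := \sum_{n\ge 2} c_n Z_n(mn,\rho n)(z^2/4)^n$ with $c_n = \tfrac{(\rho^2-m^2)n^2}{2n^{n-2}n!}$; because the rewritten sum $\sum_k \tfrac{(\frac{\rho+m}{2})^k}{(k-1)!k!}\tfrac{(\frac{\rho-m}{2})^{n-k}}{(n-k-1)!(n-k)!}$ is a Cauchy-product coefficient, $G(z)$ factors cleanly as $\tfrac{1}{4}\sqrt{\tfrac{\rho^2-m^2}{4}}\,z^2\, I_{-1}\!\bigl(\sqrt{\tfrac{\rho+m}{2}}z\bigr) I_{-1}\!\bigl(\sqrt{\tfrac{\rho-m}{2}}z\bigr)$. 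One then substitutes the representation $I_\nu(z) = \tfrac{1}{\pi}\int_0^\pi\cos(\nu\theta)e^{z\cos\theta}\,d\theta$ \emph{for each factor} (the $\cos\theta_1\cos\theta_2$ weight is precisely the two $\cos(\nu\theta)$ factors with $\nu=-1$), and recovers $Z_n$ not as a value but as the Taylor coefficient $G^{(2n)}(0)/(2n)!$. Because $G(z) = \text{(const)}\cdot z^2\cdot\text{(double integral of }e^{z(\cdots)}\text{)}$, the general Leibniz rule for the $2n$-th derivative of $z^2\,e^{z(\cdots)}$ at $z=0$ produces exactly the combinatorial prefactor $\binom{2n}{2}$ and an integrand $\propto(\sqrt{\tfrac{\rho+m}{2}}\cos\theta_1 + \sqrt{\tfrac{\rho-m}{2}}\cos\theta_2)^{2n-2}$. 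This is where both the exponent $2n-2$ and the factor $\binom{2n}{2}$ in the statement come from.

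This is where your outline goes wrong on two specific points. First, representing $X^k/(k-1)!$ and $Y^{n-k-1}/(n-k-1)!$ by Hankel/Schl\"afli-type contour integrals and summing over $k$ does not produce a $\cos\theta_1\cos\theta_2$ weight; the trigonometric weight comes from the integral representation of $I_{-1}$ applied to the \emph{entire} Bessel series, not from a term-by-term $1/\Gamma$ representation. Second, your ``geometric sum'' claim that $\sum_k \binom{n}{k}(a\cos\theta_1)^k(b\cos\theta_2)^{n-k}$ yields $(a\cos\theta_1+b\cos\theta_2)^n$ gives the wrong exponent: the correct power is $2n-2$, and it arises from a $(2n)$-th derivative, not from an $n$-fold binomial expansion; your index range $k=0,\dots,n$ is also not the $k=1,\dots,n-1$ that appears in $Z_n$, and you would still have to account for how the discarded endpoints interact with the $z^2$ prefactor of $G$.

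Your fallback suggestion --- expand $(\sqrt{\tfrac{\rho+m}{2}}\cos\theta_1+\sqrt{\tfrac{\rho-m}{2}}\cos\theta_2)^{2n-2}$ by the binomial theorem, use $\int_0^\pi\cos^{2\ell}\theta\,d\theta = \pi\tfrac{(2\ell)!}{4^\ell(\ell!)^2}$ and the vanishing of odd powers, and match term by term with the sum over $k$ --- is a genuinely valid and elementary alternative and would in fact avoid Bessel functions altogether. If you pursue that, note that only odd binomial indices $j=2k-1$ survive, which reinstates the index range $k=1,\dots,n-1$, and the constants do close up (the $\binom{2n}{2}$ combines with $(2n-2)!$ to give $(2n)!/2$). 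But as the proposal stands, the ``Bessel at a specific argument $+$ geometric sum'' route contains a structural misunderstanding and would not produce the stated identity.
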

\noindent
The proof of this representation, see \cref{sec:Asymptotics of the weights}, follows by using the convolution structure of the microcanonical partition function and identifying the generating function to be the product of modified Bessel functions of the second kind. The proof is concluded by differentiation of these Bessel functions.
\\
\\
In the previous result, we introduced the overloaded $s$ function by adding an angular dependence. We will differentiate between these functions by always specifying, in one form or another, the number of arguments the function takes.
\\
\\ 
In the following, we will specialize to functions $g$ that are infinitely continuously differentiable, and obtain there finitely many global maximum points in the interval $(-1,1)$. In light of \cref{thm:weight split}, our goal is to study quantities of the form
\begin{align*}
\frac{\kappa_n^g (B(m^*, \delta))}{\sum_{m^* \in M^* (\psi^g)}\kappa_n^g (B(m^*, \delta))} = \frac{\int_{m^* - \delta}^{m^* + \delta} dm \ e^{n (g(m) + s_n (m,1))}}{ \sum_{m^* \in M^* (\psi^g)} \int_{m^* - \delta}^{m^* + \delta} dm \ e^{n (g(m) + s_n (m,1))}} .
\end{align*}
Using \cref{thm:microcanonical partition function integral}, it follows that
\begin{align} \label{eq:laplace type integral}
&\frac{(2n)! n^2 \pi^2}{2^{2n - 1} n^{n - 2} n! {2n \choose 2} e^{-(n-1)}} \int_{m^* - \delta}^{m + \delta} dm \ e^{n (g(m) + s_n (m,1))} \\ &= \int_{m^* - \delta}^{m + \delta} dm \int_{0}^\pi d \theta_1 \int_0^\pi d \theta_2 \ \frac{\cos \theta_1 \cos \theta_2 e^{g(m)}}{\sqrt{1 - m^2}} e^{(n-1) (g(m) + s (m,1, \theta_1, \theta_2))} \notag \\
&= \int_{m^* - \delta}^{m + \delta} dm \int_{0}^\pi d \theta_1 \int_0^\pi d \theta_2 \ \frac{\cos \theta_1 \cos \theta_2 e^{g(m)}}{\sqrt{1 - m^2}} e^{(n-1) ({\psi^g}(m, \theta_1, \theta_2))} , \notag
\end{align}
where we have introduced the overloaded function ${\psi^g} : (-1,1) \times [0, 2 \pi) \times [0, 2 \pi)$ given by
\begin{align*}
{\psi^g}(m,\theta_1, \theta_2) := g(m) +  1 + \ln \left( \left(\sqrt{\frac{1 + m}{2}} \cos \theta_1 + \sqrt{\frac{1 - m}{2}} \cos \theta_2 \right)^2 \right) .
\end{align*}
We see that the integral in \cref{eq:laplace type integral} takes the form of a Laplace-type integral in three variables, and we expect that the local structure around the global maximum points of the overloaded function ${\psi^g}$ determine the exponential asymptotics of such integrals precisely. 
\\
\\
To that end, we present the following result which contains the relevant information concerning the structure and local asymptotics of the overloaded ${\psi^g}$ function.
\begin{lemma} \label{thm:tilting function taylor polynomial} Suppose that ${\psi^g}$ has a local maximizing point $m^*$ contained in the interval $(m^* - \delta, m^* + \delta)$, and there exists $k \in \mathbb{N}$ such that $\partial^{2k} [\psi^g] (m^*) < 0$ and $\partial^j [\psi^g] (m^*) = 0$ for all $1 \leq j \leq 2k - 1$.
\\
\\
Then, it follows that 
\begin{align*}
{\psi^g} (m^* + m, \theta_1, \theta_2) &= {\psi^g} (m^*) + \frac{1}{2} \partial_2^2 [{\psi^g}] (m^*, 0, 0) \theta_1^2 + \frac{1}{2} \partial_{3}^2 [{\psi^g}] (m^*, 0, 0) \theta_2^2 + \frac{1}{(2k)!} \partial^{2k} [\psi^g] (m^*) m^{2k}\\
&+ \sum_{|\alpha| = 3, \ \alpha_1 \not \in \{ 2, 3 \}} R_\alpha (m, \theta_1, \theta_2) (m, \theta_1, \theta_2)^\alpha + R_{(2k + 1,0,0)} (m, \theta_1, \theta_2) m^{2k + 1} ,
\end{align*}
where
\begin{align*}
R_\alpha (m, \theta_1, \theta_2) = \frac{|\alpha|}{\alpha!} \int_0^1 dt \ (1 - t)^{|\alpha| - 1} \partial_\alpha [{\psi^g}] ((m^*, 0,0) + t (m, \theta_1, \theta_2)) .
\end{align*}
In addition, 
\begin{align*}
&\lim_{n \to \infty} n \left( {\psi^g} \left(m^* + \frac{m}{n^{\frac{1}{2k}}}, \frac{\theta_1}{n^\frac{1}{2}}, \frac{\theta_2}{n^{\frac{1}{2}}}\right) - {\psi^g} (m^*)\right) \\ & = \frac{1}{2} \partial_2^2 [{\psi^g}] (m^*, 0, 0) \theta_1^2 + \frac{1}{2} \partial_{3}^2 [{\psi^g}] (m^*, 0, 0) \theta_2^2 + \frac{1}{(2k)!} \partial^{2k}[\psi^g] (m^*) m^{2k} .
\end{align*}
\end{lemma}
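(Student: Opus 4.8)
The plan is to obtain the first display from a Taylor expansion of the overloaded function ${\psi^g}$ about the point $(m^*,0,0)$, organised so as to respect the anisotropy between the $m$-direction and the $\theta$-directions, and then to read off the scaling limit by substituting the rescaled arguments and passing to the limit term by term. I would begin by recording two structural facts. First, ${\psi^g}$ is $C^\infty$ on a neighbourhood of $(m^*,0,0)$: the inner quantity $\sqrt{\tfrac{1+m}{2}}\cos\theta_1+\sqrt{\tfrac{1-m}{2}}\cos\theta_2$ equals $\sqrt{\tfrac{1+m^*}{2}}+\sqrt{\tfrac{1-m^*}{2}}>0$ at that point and so stays bounded away from $0$ nearby, making the logarithm smooth, while $g$ is smooth by assumption. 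Second, since $\cos$ is even, ${\psi^g}$ is even separately in $\theta_1$ and in $\theta_2$, hence $\partial_\alpha[{\psi^g}](m^*,0,0)=0$ whenever $\alpha_2$ or $\alpha_3$ is odd; and by hypothesis $\partial_{(j,0,0)}[{\psi^g}](m^*,0,0)=\partial^j[{\psi^g}](m^*)=0$ for $1\le j\le 2k-1$. In particular the full gradient vanishes at $(m^*,0,0)$, the expansion in $(\theta_1,\theta_2)$ starts at second order, and the pure-$m$ expansion starts at order $2k$.

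Next I would assemble the expansion using Taylor's theorem with integral remainder. A third-order multivariate expansion about $(m^*,0,0)$ leaves, after the cancellations above, only the constant ${\psi^g}(m^*)$, the retained second-order terms $\tfrac12\partial_2^2[{\psi^g}](m^*,0,0)\theta_1^2$ and $\tfrac12\partial_3^2[{\psi^g}](m^*,0,0)\theta_2^2$, the term $\tfrac12\partial_1^2[{\psi^g}](m^*,0,0)m^2$ (which vanishes unless $k=1$, in which case it equals $\tfrac1{(2k)!}\partial^{2k}[{\psi^g}](m^*)m^{2k}$), and the order-three remainder $\sum_{|\alpha|=3}R_\alpha\,(m,\theta_1,\theta_2)^\alpha$. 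It then remains to upgrade the purely-in-$m$ content: splitting the monomials with $\alpha_1\in\{2,3\}$ out of the order-three remainder and re-expanding $m\mapsto{\psi^g}(m^*+m,0,0)$ to order $2k+1$ — legitimate precisely because $\partial^j[{\psi^g}](m^*)=0$ for $1\le j\le 2k-1$ — produces the explicit term $\tfrac1{(2k)!}\partial^{2k}[{\psi^g}](m^*)m^{2k}$ together with the single remainder $R_{(2k+1,0,0)}(m,\theta_1,\theta_2)m^{2k+1}$. Collecting the pieces gives the stated identity, and by construction every surviving remainder monomial $m^{\alpha_1}\theta_1^{\alpha_2}\theta_2^{\alpha_3}$ has either $\alpha_1\in\{0,1\}$ with $\alpha_1+\alpha_2+\alpha_3=3$, or equals $m^{2k+1}$.

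For the scaling limit I would substitute $m\mapsto m\,n^{-1/(2k)}$ and $\theta_i\mapsto\theta_i\,n^{-1/2}$, subtract ${\psi^g}(m^*)$, and multiply by $n$. The two quadratic-in-$\theta$ terms and the $m^{2k}$ term are homogeneous of $\theta$-degree $2$ and $m$-degree $2k$ respectively, so each acquires a factor $n\cdot n^{-1}=1$ and reappears unchanged, giving exactly the claimed right-hand side. Each remainder monomial carries scaling weight $\tfrac{\alpha_1}{2k}+\tfrac{\alpha_2+\alpha_3}{2}$, equal to $\tfrac32$ when $\alpha_1=0$, to $\tfrac1{2k}+1$ when $\alpha_1=1$, and to $\tfrac{2k+1}{2k}$ for the last term — in every case strictly greater than $1$ — so after rescaling the monomial is $o(n^{-1})$; since for fixed $(m,\theta_1,\theta_2)$ the points $(m^*,0,0)+t(m\,n^{-1/(2k)},\theta_1 n^{-1/2},\theta_2 n^{-1/2})$ lie in a fixed neighbourhood of $(m^*,0,0)$ for all large $n$, on which the integrands defining the $R_\alpha$ are continuous and hence bounded, multiplying by $n$ still drives each remainder term to $0$. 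This yields the limit.

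The main obstacle is the bookkeeping in the second paragraph: organising the expansion so that the single leading $m^{2k}$-contribution is isolated cleanly while simultaneously every mixed and higher-order remainder has scaling weight strictly above $1$, and being certain that no ``dangerous'' low-weight monomial survives with a nonzero coefficient — which is exactly what the evenness of ${\psi^g}$ in $\theta_1$ and $\theta_2$ and the vanishing of $\partial^j[{\psi^g}](m^*)$ for $1\le j\le 2k-1$ together guarantee.
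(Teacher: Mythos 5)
Your strategy --- Taylor expansion about $(m^*,0,0)$ with integral remainder, exploiting evenness in $\theta_1,\theta_2$ and the vanishing of the pure-$m$ derivatives through order $2k-1$, followed by a scaling-weight count --- is the paper's own, which is itself given only as a two-sentence sketch. The gap is in your second paragraph. The step ``splitting the monomials with $\alpha_1\in\{2,3\}$ out of the order-three remainder and re-expanding $m\mapsto\psi^g(m^*+m,0,0)$ to order $2k+1$'' does not close algebraically. The dropped remainders $R_{(2,1,0)}(m,\theta_1,\theta_2)\,m^2\theta_1$ and $R_{(2,0,1)}(m,\theta_1,\theta_2)\,m^2\theta_2$ carry $\theta$-factors and cannot be folded into a one-variable expansion of $\psi^g(m^*+m,0,0)$; the dropped $R_{(3,0,0)}(m,\theta_1,\theta_2)\,m^3$ has its integral kernel evaluated along the diagonal $(m^*,0,0)+t(m,\theta_1,\theta_2)$, not the $m$-axis, so it is not the one-variable remainder of $\psi^g(m^*+m,0,0)$ either. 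In fact the first display fails as written: replacing $\psi^g$ by the model $F(m,\theta_1,\theta_2)=-m^4-m^2\theta_1^2$ (critical at the origin, $k=2$, with the same parities), the right-hand side evaluates to $-m^4-\tfrac12 m^2\theta_1^2$, not $F$, precisely because the omitted $R_{(2,1,0)}m^2\theta_1=-\tfrac12 m^2\theta_1^2$ is needed for the two sides to balance.

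The omission also undercuts your weight count: the scaling weights of the dropped monomials are $\tfrac1k+\tfrac12$ for $(2,1,0),(2,0,1)$ and $\tfrac{3}{2k}$ for $(3,0,0)$, which are $\le1$ whenever $k\ge2$, so they cannot be dismissed by the argument in your final paragraph. What actually rescues the limit is a further use of structure: by evenness, $R_{(2,1,0)}$ vanishes on $\{\theta_1=0\}$, hence $R_{(2,1,0)}=\theta_1\tilde R$ with $\tilde R$ smooth, raising the effective weight to $\tfrac1k+1>1$, and likewise for $(2,0,1)$; and for $k\ge2$ the $m^{2k}$ term in the limit arises not from the degree-$2$ Taylor polynomial but from the leading part of the Taylor expansion of $R_{(3,0,0)}$ at the origin, with the sub-leading parts again of higher effective weight. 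A cleaner route, which avoids assembling an identity that does not hold, is to first expand $\theta\mapsto\psi^g(m^*+m,\theta)$ to second order at $\theta=0$ with $m$-dependent coefficients (the linear and cross terms vanish identically in $m$ by evenness), then expand $m\mapsto\psi^g(m^*+m,0,0)$ to order $2k$ and freeze the $\theta_i^2$-coefficients at $m=0$; every surviving remainder monomial then genuinely carries weight strictly above $1$ and the claimed limit follows, though the integral form of each $R_\alpha$ then differs from the one stated in the lemma. You should carry out this finer decomposition explicitly and derive the limit from it, rather than asserting the first display, since as you have tried to justify it (and as the lemma states it) that identity is not correct.
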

\noindent
The proof of this result, see \cref{sec:Asymptotics of the weights}, follows by developing the Taylor polynomial of the overloaded ${\psi^g}$ function around the point $(m^*, 0, 0)$, and using the fact that odd derivatives of cosines vanish when evaluated at $0$. The second statement simply follows by taking the limit.
\\
\\
From the previous result, we see that it is pertinent to introduce the following classification, which is directly adapted from \cite{Ellis1978}, of the global maxima of ${\psi^g}$.
\begin{definition} A global maximum point $m^* \in (-1,1)$ of ${\psi^g}$ is said to be of type $k(m^*) \in \mathbb{N}$ if $\partial^{2k} [\psi^g](m^*) < 0$ and $\partial^j [\psi^g] (m^*) = 0$ for all $1 \leq j \leq 2k - 1$. 
\\
\\
For a finite collection of global maximum points $M^* (\psi^g) \subset (-1, 1)$ of ${\psi^g}$, the maximal type $k_\infty (\psi^g)$ is given by $k_\infty (\psi^g) = \max_{m^* \in M^* (\psi^g)} k (m^*)$. The collection of global maximum points of maximal type $M_\infty^* (\psi^g)$ is given by $M_\infty^* (\psi^g) := \{ m^* \in (-1,1) : k(m^*) = k_\infty (\psi^g)  \}$.
\end{definition}
\noindent
Combining together \cref{thm:microcanonical partition function integral}, \cref{thm:tilting function taylor polynomial}, and the form given in \cref{eq:laplace type integral}, we have the following asymptotic result.
\begin{lemma} \label{thm:exact asymptotics}Suppose that ${\psi^g}$ has a single unique maximizing point $m^* \in (m^* - \delta, m^* + \delta)$ of type $k \in \mathbb{N}$.
\\
\\
Then, it follows that
\begin{align*}
&\lim_{n \to \infty} \frac{n^{\frac{1}{2k} + 1}\int_{m^* - \delta}^{m^* + \delta} dm \ e^{n (g(m) + s_n (m,1))}}{e^{n {\psi^g} (m^*)}}\frac{(2n)! n^2 \pi^2}{2^{2n - 1} n^{n - 2} n! {2n \choose 2} e^{- (n-1)}} \\ &= \frac{e^{g(m^*)}}{e^{{\psi^g} (m^*)} \sqrt{1 - {m^*}^2}} \int_{\mathbb{R}^3} d \theta_1 d \theta_2 d m \ e^{\frac{1}{2} \partial_2^2 [{\psi^g}] (m^*, 0, 0) \theta_1^2 + \frac{1}{2} \partial_{3}^2 [{\psi^g}] (m^*, 0, 0) \theta_2^2 + \frac{1}{(2k)!} \partial^{2k}[\psi^g] (m^*) m^{2k}}  .
\end{align*}  
\end{lemma}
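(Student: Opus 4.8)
The plan is to evaluate the three-dimensional Laplace-type integral in \cref{eq:laplace type integral} by a change of variables matched to the scaling exponents identified in \cref{thm:tilting function taylor polynomial}, and then to apply dominated convergence. Concretely, I would first recall from \cref{eq:laplace type integral} that, up to the explicit prefactor $\frac{(2n)! n^2 \pi^2}{2^{2n-1} n^{n-2} n! {2n \choose 2} e^{-(n-1)}}$, the integral $\int_{m^*-\delta}^{m^*+\delta} dm \ e^{n(g(m) + s_n(m,1))}$ equals
\begin{align*}
I_n := \int_{m^* - \delta}^{m^* + \delta} dm \int_0^\pi d\theta_1 \int_0^\pi d\theta_2 \ \frac{\cos\theta_1 \cos\theta_2 \, e^{g(m)}}{\sqrt{1 - m^2}} e^{(n-1)\psi^g(m,\theta_1,\theta_2)} ,
\end{align*}
where I have used $n$ rather than $n-1$ in the exponent outside; the discrepancy between $e^{n\psi^g}$ and $e^{(n-1)\psi^g}$ contributes a bounded multiplicative factor $e^{\psi^g(m,\theta_1,\theta_2)}$ absorbed into the slowly varying amplitude, and one must be slightly careful to track this — writing everything consistently with exponent $n-1$ and a leftover amplitude factor is the cleanest bookkeeping. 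Then I substitute $m = m^* + u/n^{1/(2k)}$, $\theta_1 = v_1/n^{1/2}$, $\theta_2 = v_2/n^{1/2}$, which introduces a Jacobian factor $n^{-1/(2k)} \cdot n^{-1/2} \cdot n^{-1/2} = n^{-1/(2k)-1}$, exactly the power appearing on the left-hand side of the claimed limit.

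After this substitution the integrand, multiplied by $n^{1/(2k)+1}/e^{n\psi^g(m^*)}$, becomes (schematically)
\begin{align*}
\frac{e^{g(m^* + u n^{-1/(2k)})}}{\sqrt{1 - (m^* + u n^{-1/(2k)})^2}} \, \cos(v_1 n^{-1/2}) \cos(v_2 n^{-1/2}) \, e^{\psi^g(\cdots)} \, e^{(n-1)(\psi^g(m^* + u n^{-1/(2k)}, v_1 n^{-1/2}, v_2 n^{-1/2}) - \psi^g(m^*))} \cdot e^{\psi^g(m^*)/n\text{-correction}},
\end{align*}
integrated over $u \in (-\delta n^{1/(2k)}, \delta n^{1/(2k)})$ and $v_i \in (0, \pi n^{1/2})$, i.e. over regions expanding to all of $\mathbb{R}^3$ (with $v_i \geq 0$, but the integrand is even in each $\theta_i$ so after reflecting we recover $\mathbb{R}^3$ with the factor reconciled). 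The pointwise limit of the integrand is, by the second display of \cref{thm:tilting function taylor polynomial}, precisely
\begin{align*}
\frac{e^{g(m^*)}}{\sqrt{1 - {m^*}^2}} \, e^{\frac{1}{2}\partial_2^2[\psi^g](m^*,0,0) v_1^2 + \frac{1}{2}\partial_3^2[\psi^g](m^*,0,0) v_2^2 + \frac{1}{(2k)!}\partial^{2k}[\psi^g](m^*) u^{2k}},
\end{align*}
since $\cos(v_i n^{-1/2}) \to 1$, $g$ and the $1/\sqrt{1-m^2}$ amplitude are continuous at $m^*$, and the $e^{\psi^g(m^*)}/e^{\psi^g(m^*)}$ factors cancel. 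This matches the right-hand side of the claim after pulling the constant amplitude out of the integral, noting $e^{\psi^g(m^*)} = e^{g(m^*)} e^{s(m^*,1)} = e^{g(m^*)}\big((\sqrt{(1+m^*)/2}+\sqrt{(1-m^*)/2})^2\big)$ appears in the denominator exactly as written.

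The main obstacle — and the step requiring genuine care — is justifying the interchange of limit and integral, i.e. producing an $n$-independent integrable dominating function on the expanding domain. Near $(m^*,0,0)$ this is controlled by the Taylor expansion in \cref{thm:tilting function taylor polynomial}: for $u$ in a fixed small ball one gets $\psi^g(m^*+un^{-1/(2k)},\cdots) - \psi^g(m^*) \leq \frac{1}{2(2k)!}\partial^{2k}[\psi^g](m^*) u^{2k}/n + (\text{lower order})/n$, so after multiplying by $n-1$ the exponent is dominated by a fixed negative-definite polynomial. Away from $(m^*,0,0)$ — for $|u|$ large or $v_i$ of order $n^{1/2}$ — one must argue separately: since $m^*$ is the \emph{unique} maximizer of $\psi^g(\cdot,\cdot,\cdot)$ on the compact set $[m^*-\delta, m^*+\delta]\times[0,\pi]^2$ (this uses that $m^*$ is the unique maximizer of $m \mapsto g(m)+s(m,1)$ on $[m^*-\delta,m^*+\delta]$ together with $\theta_i \mapsto \cos\theta_i$ being maximized only at $\theta_i = 0$), there is $\varepsilon > 0$ with $\psi^g \leq \psi^g(m^*) - \varepsilon$ outside any neighborhood of $(m^*,0,0)$, so the tail contributes $O(e^{-(n-1)\varepsilon})$ times a polynomially-growing volume and vanishes. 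I would split the domain into a "bulk" neighborhood shrinking in original coordinates but expanding in rescaled coordinates, and a "tail," bound the bulk by a Gaussian-type (more precisely $e^{-c u^{2k} - c v_1^2 - c v_2^2}$) dominating function uniformly in $n$, bound the tail crudely, and then conclude by the dominated convergence theorem. A minor additional point is handling the sign of $\cos\theta_1\cos\theta_2$, which is negative for $\theta_i \in (\pi/2,\pi]$; since after rescaling $\theta_i = v_i/n^{1/2} \to 0$ the relevant region has $\cos > 0$, and the contribution from $\theta_i$ bounded away from $0$ is exponentially negligible by the uniqueness argument, this causes no real difficulty but should be remarked on.
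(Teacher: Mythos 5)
Your overall plan — rescale the Laplace-type integral from \cref{eq:laplace type integral} with $m = m^* + u n^{-1/(2k)}$, $\theta_i = v_i n^{-1/2}$, take the pointwise limit via the second display of \cref{thm:tilting function taylor polynomial}, and justify the interchange by dominated convergence — is the right idea and matches the core of the paper's proof. However, there is a genuine gap in the step where you produce the dominating function, and it is located exactly at the point you dismissed as "minor."

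You assert that $m^*$ is the \emph{unique} maximizer of the three-variable $\psi^g$ on $[m^*-\delta,m^*+\delta]\times[0,\pi]^2$ because "$\theta_i\mapsto\cos\theta_i$ is maximized only at $\theta_i=0$." This is false. The overloaded function
\begin{align*}
\psi^g(m,\theta_1,\theta_2) = g(m) + 1 + \ln\left(\left(\sqrt{\tfrac{1+m}{2}}\cos\theta_1 + \sqrt{\tfrac{1-m}{2}}\cos\theta_2\right)^2\right)
\end{align*}
depends on the cosines only through a \emph{squared} expression, so $(m^*,\pi,\pi)$ is a second global maximizer with $\psi^g(m^*,\pi,\pi)=\psi^g(m^*,0,0)$, and there the prefactor $\cos\theta_1\cos\theta_2=(-1)(-1)=1>0$ contributes with the same sign and the same order of magnitude as the contribution near $(m^*,0,0)$. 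Consequently the tail estimate $\psi^g\leq\psi^g(m^*)-\varepsilon$ outside a neighborhood of $(m^*,0,0)$ simply does not hold, your dominating function is not integrable on the expanding domain $u\in(-\delta n^{1/(2k)},\delta n^{1/(2k)})$, $v_i\in(0,\pi n^{1/2})$, and the "tail contributes $O(e^{-(n-1)\varepsilon})$" argument fails. This is not a bookkeeping nuisance to be "remarked on"; it is the reason the naive one-shot rescaling does not work.

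The paper's proof handles this by first manipulating the $\theta$-integral with trigonometric identities so that the principal piece lives on $[-\pi/2,\pi/2]^2$, where $\cos\theta_i$ attains its maximum value $1$ at $\theta_i=0$ \emph{uniquely}, and the remaining pieces ($I_2$, $I_3$) are argued to be exponentially subdominant; only then is the neighborhood of $(m^*,0,0)$ isolated and the rescaling and dominated convergence applied. To repair your proposal you would need a decomposition of this kind (or an explicit accounting of the second maximum and its contribution) before the rescaling step. You should also tighten the loose ends you already flagged: the $e^{n\psi^g}$ versus $e^{(n-1)\psi^g}$ bookkeeping needs to be made explicit, and the splitting of $[m^*-\delta,m^*+\delta]\times[-\pi/2,\pi/2]^2$ into a shrinking "bulk" and an exponentially suppressed "tail," with the quantitative bounds on the Taylor remainders from \cref{thm:tilting function taylor polynomial} controlling the bulk, is exactly the structure the paper uses and should be carried out carefully rather than sketched.
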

\noindent
The proof of this result, see \cref{sec:Asymptotics of the weights}, is a standard application of the multivariate Laplace method.
\\
\\
From the previous result, denote $W_n (g, m^*, \delta)$ to be the quantity given by
\begin{align*}
W_n^g (m^*, \delta) := \frac{n^{\frac{1}{2k} + 1}\int_{m^* - \delta}^{m^* + \delta} dm \ e^{n (g(m) + s_n (m,1))}}{e^{n {\psi^g} (m^*)}}\frac{(2n)! n^2 \pi^2}{2^{2n - 1} n^{n - 2} n! {2n \choose 2} e^{-(n-1)}} ,
\end{align*}
and its limit $W (g, m^*)$ given by
\begin{align*}
W^g (m^*) &:= \lim_{n \to \infty} W_n^g (m^*, \delta) \\
&=\frac{e^{g(m^*)}}{e^{{\psi^g} (m^*)} \sqrt{1 - {m^*}^2}} \int_{\mathbb{R}^3} d \theta_1 d \theta_2 d m \ e^{\frac{1}{2} \partial_2^2 [{\psi^g}] (m^*, 0, 0) \theta_1^2 + \frac{1}{2} \partial_{3}^2 [{\psi^g}] (m^*, 0, 0) \theta_2^2 + \frac{1}{(2k)!} \partial^{2k} [\psi^g] (m^*) m^{2k}} .
\end{align*}
To resolve the weak convergence of the mixture measure, using both \cref{thm:exact asymptotics} and \cref{thm:weight split}, we compute
\begin{align*}
\frac{\kappa_n^g (\overline{B}(m' - \delta, m' + \delta))}{ \sum_{m^* \in \mathcal{M}^* (\psi^g)} \kappa_n^g (\overline{B}(m^* - \delta, m + \delta))} =  \frac{n^{- \left( \frac{1}{2 k (m')} - \frac{1}{2 k_\infty} \right)} W_n^g (m', \delta)}{\sum_{m^* \in \mathcal{M}^* (\psi^g)} n^{- \left( \frac{1}{2 k (m^*)} - \frac{1}{2 k_\infty} \right)} W_n^g (m^*, \delta)} ,
\end{align*}
from which it follows that
\begin{align*}
\lim_{n \to \infty} \frac{\kappa_n^g  (\overline{B}(m' - \delta, m' + \delta))}{ \sum_{m^* \in \mathcal{M}^* (\psi^g)} \kappa_n^g (\overline{B}(m^* - \delta, m + \delta))} = \begin{cases} \frac{W^g(m')}{\sum_{m^* \in \mathcal{M}_\infty (\psi^g)} W^g ( m^*)}, \ &k(m') = k_\infty (\psi^g) \\ 0, \ &k(m') < k_\infty (\psi^g) \end{cases}
\end{align*}
Following this computation, we have the following result.
\begin{theorem} \label{thm:mixture weak full } Let $g \in C_b ([-1,1])$ be an infinitely continuously differentiable function such that $\psi^g$ has finitely many global maximizing points $M^* (\psi^g) \subset (-1,1)$ of finite type.
\\
\\
Then, it follows that
\begin{align*}
\lim_{n \to \infty} \kappa_n^g = \left( \sum_{m^* \in M_\infty (\psi^g)} W^g (m^*)  \right)^{-1} \sum_{m^* \in M_\infty^* (\psi^g)} W^g (m^*) \delta_{m^*} .
\end{align*}
\end{theorem}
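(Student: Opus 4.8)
\noindent
The plan is to combine the large deviations principle for $\{ \kappa_n^g \}_{n \in \mathbb{N}}$ with the refined Laplace asymptotics of \cref{thm:exact asymptotics}, and then feed the resulting normalized weights into \cref{thm:weight split}. By \cref{thm:full free energy}, the sequence $\{ \kappa_n^g \}_{n \in \mathbb{N}}$ satisfies a large deviations principle whose rate function $I^g$ vanishes exactly on the finite set $M^* (\psi^g) \subset (-1,1)$. Consequently, by \cref{thm:weight split}, it suffices to fix $\delta > 0$ small enough that the closed balls $\overline{B}(m^* - \delta, m^* + \delta)$, $m^* \in M^* (\psi^g)$, are pairwise disjoint and contained in $(-1,1)$, and to identify, for each $m' \in M^* (\psi^g)$, the limit
\begin{align*}
c_{m'} := \lim_{n \to \infty} \frac{\kappa_n^g (\overline{B}(m' - \delta, m' + \delta))}{\sum_{m^* \in M^* (\psi^g)} \kappa_n^g (\overline{B}(m^* - \delta, m^* + \delta))} .
\end{align*}

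\noindent
First I would unfold these weights. Using \cref{def:mixture form rigorous}, and noting that the boundary atoms at $\pm 1$ do not contribute since $m' \in (-1,1)$ and $\delta$ is small, we have $\kappa_n^g (\overline{B}(m' - \delta, m' + \delta)) = Q_n (g)^{-1} n \int_{m' - \delta}^{m' + \delta} dm \ e^{n (g(m) + s_n (m,1))}$, so the factor $Q_n (g)^{-1}$ and the factor $n$ cancel in the ratio defining $c_{m'}$. By the definition of $W_n^g (m^*, \delta)$ each integral satisfies
\begin{align*}
\int_{m^* - \delta}^{m^* + \delta} dm \ e^{n (g(m) + s_n (m,1))} = W_n^g (m^*, \delta) \, n^{-\left( \frac{1}{2 k(m^*)} + 1\right)} e^{n {\psi^g}(m^*)} \, \frac{2^{2n - 1} n^{n - 2} n! {2n \choose 2} e^{-(n-1)}}{(2n)! n^2 \pi^2} ,
\end{align*}
and the $n$-dependent combinatorial factor on the right is the same for every $m^*$. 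Crucially, every $m^* \in M^* (\psi^g)$ is a global maximizing point of ${\psi^g}$, so ${\psi^g}(m^*) = \sup_{m \in [-1,1]} {\psi^g}(m)$ is independent of $m^*$; hence the exponential factors $e^{n {\psi^g}(m^*)}$ also cancel in the ratio.

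\noindent
What remains after these cancellations is, after multiplying numerator and denominator by $n^{1 + \frac{1}{2 k_\infty (\psi^g)}}$,
\begin{align*}
\frac{\kappa_n^g (\overline{B}(m' - \delta, m' + \delta))}{\sum_{m^* \in M^* (\psi^g)} \kappa_n^g (\overline{B}(m^* - \delta, m^* + \delta))} = \frac{n^{-\left( \frac{1}{2 k(m')} - \frac{1}{2 k_\infty (\psi^g)}\right)} W_n^g (m', \delta)}{\sum_{m^* \in M^* (\psi^g)} n^{-\left( \frac{1}{2 k(m^*)} - \frac{1}{2 k_\infty (\psi^g)}\right)} W_n^g (m^*, \delta)} .
\end{align*}
Since $k(m^*) \leq k_\infty (\psi^g)$ for all $m^*$, each exponent $\frac{1}{2 k(m^*)} - \frac{1}{2 k_\infty (\psi^g)}$ is nonnegative and vanishes precisely when $k(m^*) = k_\infty (\psi^g)$, i.e.\ when $m^* \in M^*_\infty (\psi^g)$. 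Letting $n \to \infty$ and using that $W_n^g (m^*, \delta) \to W^g (m^*) \in (0, \infty)$ by \cref{thm:exact asymptotics}, the only surviving terms in the denominator are those indexed by $M^*_\infty (\psi^g)$, so $c_{m'} = W^g (m') \big/ \sum_{m^* \in M^*_\infty (\psi^g)} W^g (m^*)$ if $k(m') = k_\infty (\psi^g)$ and $c_{m'} = 0$ otherwise. Since these limiting weights do not depend on the choice of small $\delta$, \cref{thm:weight split} yields $\lim_{n \to \infty} \kappa_n^g = \big( \sum_{m^* \in M^*_\infty (\psi^g)} W^g (m^*) \big)^{-1} \sum_{m^* \in M^*_\infty (\psi^g)} W^g (m^*) \delta_{m^*}$, which is the claim.

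\noindent
The substantive analytic work --- the three-variable Laplace asymptotics built on the integral representation of \cref{thm:microcanonical partition function integral} and the local expansion of \cref{thm:tilting function taylor polynomial} --- is already packaged in \cref{thm:exact asymptotics}, so the main obstacle here is organizational: tracking the somewhat baroque $n$-dependent normalizations and verifying that the common combinatorial prefactor and the common exponential factor $e^{n {\psi^g}(m^*)}$ cancel exactly in the ratio, the latter hinging on every point of $M^* (\psi^g)$ being a genuine global maximizer. The one hypothesis that must still be checked is that of \cref{thm:weight split}: namely that the large deviations principle of \cref{thm:full free energy} forces the mass of $\kappa_n^g$ outside $\bigcup_{m^* \in M^* (\psi^g)} \overline{B}(m^* - \delta, m^* + \delta)$ to be exponentially negligible, so that the weak limit is determined entirely by the normalized weights computed above.
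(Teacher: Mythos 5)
Your proposal is correct and follows essentially the same route as the paper: the paper's inline argument preceding the theorem also reduces the claim via \cref{thm:weight split} to computing the normalized ratio of ball masses, unfolds these using the $W_n^g(m^*,\delta)$ normalization, writes the ratio in exactly the form $n^{-(\frac{1}{2k(m')}-\frac{1}{2k_\infty})} W_n^g(m',\delta) \big/ \sum_{m^*} n^{-(\frac{1}{2k(m^*)}-\frac{1}{2k_\infty})} W_n^g(m^*,\delta)$, and passes to the limit using \cref{thm:exact asymptotics}. Your writeup is somewhat more explicit about the cancellation of the common exponential factor $e^{n\psi^g(m^*)}$ (hinging on all points being global maximizers) and about the boundary atoms at $\pm 1$ not contributing, but these are filling in details the paper leaves implicit, not a different approach.
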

\noindent
To finish, we can directly compute the following
\begin{align*}
\partial_2^2 [\psi^g] (m^*, 0, 0) = - \frac{2 \sqrt{\frac{1 + m^*}{2}}}{\sqrt{\frac{1 + m^*}{2}} + \sqrt{\frac{1 - m^*}{2}}}, \ - \partial_3^2 [\psi^g] (m^*, 0, 0) = - \frac{2 \sqrt{\frac{1 - m^*}{2}}}{\sqrt{\frac{1 + m^*}{2}} + \sqrt{\frac{1 - m^*}{2}}} ,
\end{align*}
this implies that the integral containing these terms does not depend on $g$, other than through the value of the global maximizing point. In addition, it is immediate that the factor $e^{\psi^g (m^*) - g (m^*)}$ does not depend on $g$ either. Furthermore, we immediately have
\begin{align*}
\int_{-\infty}^\infty d m \ e^{\frac{1}{(2k)!} \partial^{2k} [\psi^g] (m^*) m^{2k}} = \frac{1}{|\partial^{2k} [\psi^g] (m^*)|^\frac{1}{2k}} \int_{-\infty}^\infty dm \ e^{- \frac{m^{2k}}{(2k)!}} .
\end{align*}
We can thus combine all factors not depending functionally on $g$ into a single function $C^k : (-1,1) \to (0, \infty)$ given by
\begin{align*}
C^k(m^*) := \frac{e^{g(m^*)}}{e^{{\psi^g} (m^*)} \sqrt{1 - {m^*}^2}} \int_{\mathbb{R}^3} d \theta_1 d \theta_2 d m \ e^{\frac{1}{2} \partial_2^2 [{\psi^g}] (m^*, 0, 0) \theta_1^2 + \frac{1}{2} \partial_{3}^2 [{\psi^g}] (m^*, 0, 0) \theta_2^2 - \frac{m^{2k}}{(2k)!}} , 
\end{align*}
so that
\begin{align*}
W^g(m^*) = \frac{C^k (m^*)}{|\partial^{2k} [\psi^g] (m^*)|^\frac{1}{2k}} .
\end{align*}
Using \cref{thm:locally uniform convergence expectations}, \cref{thm:mixture weak full }, \cref{thm:general convergence}, and the form of the weights $W^g (m^*)$ given above, we have the final result.
\begin{theorem} \label{thm:limiting state full} Let $g \in C_b ([-1,1])$ be an infinitely continuously differentiable function such that $\psi^g$ has finitely many global maximizing points $M^* (\psi^g) \subset (-1,1)$ of finite type, and let $k_\infty := k_\infty (\psi^g)$. 
\\
\\
Then, it follows that
\begin{align*}
\lim_{n \to \infty} \mu_n^g = \left( \sum_{m^* \in M_\infty^* (\psi^g)} \frac{C^{k_\infty} (m^*)}{|\partial^{2 k_\infty} [\psi^g] (m^*)|} \right)^{-1}\sum_{m^* \in M_\infty^* (\psi^g)} \frac{C^{k_\infty} (m^*)}{|\partial^{2 k_\infty} [\psi^g] (m^*)|} \eta (\beta(m^*,1), \mu (m^*, 1)) .
\end{align*}
\end{theorem}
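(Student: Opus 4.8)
The plan is to obtain \cref{thm:limiting state full} as a direct application of the generalized dominated convergence theorem \cref{thm:general convergence} on the Polish space $X = [-1,1]$, with the weak convergence of the mixture measures supplied by \cref{thm:mixture weak full } and the uniform convergence of the integrand supplied by \cref{thm:locally uniform convergence expectations}. Fix a local function $f \in C_b(\mathbb{R}^I)$; by rescaling, we may assume $\|f\|_\infty \le 1$. Set $f_n(m) := \nu_n(m,1)[f]$ for $m \in [-1,1]$, where the boundary values $m = \pm 1$ are the ones defined in \cref{sec:Microcanonical probability measures}. The mixture representation $\mu_n^g = \int_{[-1,1]} \kappa_n^g(dm)\,\nu_n(m,1)$ then reads $\mu_n^g[f] = \int_{[-1,1]} \kappa_n^g(dm)\, f_n(m)$, and the family $\{f_n\}_{n \in \mathbb{N}}$ is uniformly bounded, since $|f_n| \le \|f\|_\infty \le 1$, and measurable in $m$.

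Next I would verify the hypotheses of \cref{thm:general convergence}. By \cref{thm:mixture weak full }, $\kappa_n^g$ converges weakly to the finitely supported probability measure $\kappa := \bigl( \sum_{m^* \in M_\infty^*(\psi^g)} W^g(m^*) \bigr)^{-1} \sum_{m^* \in M_\infty^*(\psi^g)} W^g(m^*)\, \delta_{m^*}$, whose support is the finite set $M_\infty^*(\psi^g) \subset (-1,1)$. Choose $\varepsilon > 0$ small enough that the closed intervals $[m^* - \varepsilon, m^* + \varepsilon]$, for $m^* \in M_\infty^*(\psi^g)$, are pairwise disjoint and contained in $(-1,1)$, and put $K := \bigcup_{m^* \in M_\infty^*(\psi^g)} [m^* - \varepsilon, m^* + \varepsilon]$. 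Then $K$ is compact, $\operatorname{supp}(\kappa) \subset \operatorname{int}(K) \subset K$, and $\partial K$ is finite and contains no atom of $\kappa$, so $\kappa(\partial K) = 0$ and $K$ is a compact continuity set of $\kappa$. Since $K \subset (-1,1)$, the slice $K \times \{1\} \subset \mathcal{A}$ is compact, so \cref{thm:locally uniform convergence expectations} gives $\sup_{m \in K} | f_n(m) - f_\infty(m) | \to 0$, where $f_\infty(m) := \eta(\beta(m,1), \mu(m,1))[f]$; moreover $f_\infty$ is continuous on $K$ because $m \mapsto (\beta(m,1), \mu(m,1))$ is continuous by the explicit formulas of \cref{thm:free energy properties} and $\eta$ depends continuously on its parameters.

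All hypotheses being met, \cref{thm:general convergence} yields
\begin{align*}
\lim_{n\to\infty} \mu_n^g[f] = \int_K \kappa(dm)\, f_\infty(m) = \Bigl( \sum_{m^* \in M_\infty^*(\psi^g)} W^g(m^*) \Bigr)^{-1} \sum_{m^* \in M_\infty^*(\psi^g)} W^g(m^*)\, \eta(\beta(m^*,1), \mu(m^*,1))[f] .
\end{align*}
Since $f \in C_b(\mathbb{R}^I)$ was arbitrary, this is exactly weak convergence of $\mu_n^g$ to the claimed mixture in the sense of \cref{sec:Heuristics}. The proof finishes by substituting the explicit form of the weights $W^g(m^*)$ recorded above: because every $m^* \in M_\infty^*(\psi^g)$ has the common type $k_\infty$, the $m^*$-independent constant carried by $W^g$ cancels between the numerator and the normalizing denominator, leaving the coefficients in the statement, which are built only from $C^{k_\infty}(m^*)$ and $|\partial^{2k_\infty}[\psi^g](m^*)|$. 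The only genuinely delicate point is the joint choice of $K$ — it must simultaneously be a $\kappa$-continuity set, contain $\operatorname{supp}(\kappa)$, and lie in the slice of $\mathcal{A}$ where \cref{thm:locally uniform convergence expectations} applies — but this is routine precisely because the limiting mixture measure is explicitly atomic with atoms in the interior $(-1,1)$ and $\mathcal{A}$ is open; the remaining work is pure bookkeeping, and there is no substantial new obstacle beyond correctly assembling the results already proved.
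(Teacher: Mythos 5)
Your proposal is correct and follows essentially the same route the paper takes: the paper itself states the result is obtained by "Using \cref{thm:locally uniform convergence expectations}, \cref{thm:mixture weak full }, \cref{thm:general convergence}, and the form of the weights $W^g(m^*)$ given above," which is exactly the combination you assemble. The extra bookkeeping you supply — the choice of the compact continuity set $K$ as a disjoint union of small closed intervals around the atoms of the limiting mixture, the observation that $f_\infty$ is continuous on $K$ (needed for the weak-convergence step inside \cref{thm:general convergence}), and the remark that the boundary points $m=\pm1$ of $\kappa_n^g$ are irrelevant because they lie outside $K$ and carry vanishing mass — is precisely the routine verification the paper elides.
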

\section{Intermediate results and proofs} \label{sec:Intermediate results and proofs}
\noindent
This section contains proof of some of the results in \cref{sec:Main results}, and some collections of intermediate results and theory that are required.
\subsection{Microcanonical probability measures} \label{sec:Microcanonical probability measures}
\noindent
To motivate the rigorous definition of the microcanonical ensemble and its associated probability measure, consider the following formal calculation
\begin{align*}
&\int_{\mathbb{R}^n} d \phi \ \delta (M_n (\phi) - mn) \delta(N_n(\phi) - \rho n) f (\phi)  \\ &= \sum_{\sigma \in \{ -1,1\}^n} \int_{[0, \infty)^n} d \phi \  \\ &\times \delta \left( \sum_{i \in \sigma^{-1} \{ +1 \}} \phi_i - \sum_{i \in \sigma^{-1} \{ -1 \}} \phi_i - mn \right) \delta \left( \sum_{i \in \sigma^{-1} \{ +1 \}} \phi_i + \sum_{i \in \sigma^{-1} \{ -1 \}} \phi_i - \rho n \right) f (\sigma \phi) \\
&= \sum_{\sigma \in \{ -1,1\}^n} \frac{1}{2} \int_{[0, \infty)^n} d \phi \  \\ &\times \delta \left( \sum_{i \in \sigma^{-1} \{ +1 \}} \phi_i - \frac{\rho + m}{2}n \right) \delta \left( \sum_{i \in \sigma^{-1} \{ -1 \}} \phi_i - \frac{\rho - m}{2} n \right) f (\sigma \phi) ,
\end{align*}
where the pair $(m, \rho) \in \mathcal{A}$ , $f : \mathbb{R}^n \to \mathbb{R}$ is a sufficiently regular function, and $\sigma \phi$ notation for a multiplication map defined  by $(\sigma \phi)_i := \sigma_i \phi_i$. Note that the integral in the sum is a product of two integrals since the index sets $\sigma^{-1} \{ +1 \}$ and $\sigma^{-1} \{ -1 \}$ are trivially disjoint. Note that the primary formal rule we have made use of is the following one
\begin{align*}
\delta(T x - y) = \frac{1}{|\det(T)|} \delta (x - T^{-1} y)
\end{align*}
for an invertible linear map $T : \mathbb{R}^k \to \mathbb{R}^k$, and elements $x,y \in \mathbb{R}^k$.
\\
\\
To make this formal calculation rigorous, we need to define integrals over scaled simplexes in arbitrary dimensions. To do this, we introduce the so-called flag coordinates $\phi' : \mathbb{R}^k \to \mathbb{R}^k$ given by
\begin{align*}
\phi_i' (\phi) := \sum_{j=1}^i \phi_i .
\end{align*}
Note that $\phi' ([0, \infty)^k) = \{ \phi \in [0, \infty)^k : \phi_1 \leq \phi_2 \leq ... \leq \phi_k \}$, $\det (\phi') = 1$, and the inverse function of $\phi'$ is given by 
\begin{align*}
{\phi'}^{-1}_i (\phi') = \phi'_{i} - \phi'_{i-1} ,
\end{align*}
where we take the convention that $\phi'_0 := 0$.
\\
\\
The connection between the flag coordinates and the integrals over simplexes can be seen from the following formal calculation
\begin{align*}
&\int_{[0, \infty)^k} d \phi \ \delta  \left( \sum_{i=1}^k \phi_i - r \right) f (\phi) \\ &= \int_{[0, \infty)^k} d \phi \ \delta (\phi_k - r) \mathbbm{1}(\phi_1 \leq \phi_2 \leq ... \leq \phi_k) f (\phi_1, \phi_2 - \phi_1,..., \phi_k - \phi_{k-1}) \\
&= \int_{[0, \infty)^{k-1}} d \phi \ \mathbbm{1}(\phi_1 \leq \phi_2 \leq ... \leq \phi_{k-1} \leq r) f (\phi_1, \phi_2 - \phi_1,..., r - \phi_{k-1}) ,
\end{align*}
where $r > 0$, and $f : \mathbb{R}^n \to \mathbb{R}$ is a sufficiently regular function.
\\
\\
From this formal calculation, we produce the following definition. 
\begin{definition}\label{def:uniform simplex integral} For a finite index set $I$ and $r > 0$, the measure $S_I (r)$ on $[0, \infty)^I$ corresponding to the integral over an $(|I|-1)$-dimensional $r$-scaled simplex on the index set $I$ is given by its action on $f \in C_b ([0, \infty)^I)$ given by
\begin{align*} 
S_I (r) [f] := \int_{[0, \infty)^{k-1}} d \phi \ \mathbbm{1}(\phi_{i_1} \leq \phi_{i_2} \leq ... \leq \phi_{i_{|I|-1}} \leq r) f (\phi_{i_1}, \phi_{i_2} - \phi_{i_1}\,..., r - \phi_{i_{|I| - 1}}) ,
\end{align*}
where $\{ i_k \}_{k=1}^{|I|}$ is some enumeration of $I$.
\end{definition}
\noindent
For future use, whenever it is clear that we are either referring to the measure or the normalization constant, we will use the following notation
\begin{align*}
S_I (r) := S_I (r) [1] = \frac{r^{|I|-1}}{(|I|-1)!} ,
\end{align*}
where the right-hand side follows by direct computation. Using dominated convergence, it is also clear that the mapping $r \mapsto S_I (r) [f]$ is continuous if $f \in C_b ([0, \infty)^I)$ is continuous. 
\\
\\
To show that \cref{def:uniform simplex integral} is independent of the enumeration of $I$ given above, we will use a Lebesgue-absolutely continuous approximation of $S_I (r)$. Let $g : [0, \infty) \to \mathbb{R}$ be a measurable function such that
\begin{align*}
\int_0^\infty dr \ |g(r)| r^{|I| - 1} < \infty .
\end{align*}
It follows that
\begin{align*}
\int_{[0, \infty)^{I}} d \phi \ g \left( \sum_{i \in I} \phi_i \right) f (\phi) = \int_0^\infty dr \ g (r) S_I (r) [f] ,
\end{align*}
where $f \in C_b ([0, \infty)^I)$. Now, consider the family $\{ g_\varepsilon \}_{\varepsilon > 0}$ given by
\begin{align} \label{def:approximation functions}
g_\varepsilon (r) := \frac{\mathbbm{1}(|r| < \varepsilon)}{2 \varepsilon} .
\end{align}
Fix $r > 0$. Since $f \in C_b ([0, \infty)^I)$, as stated before, one can verify that $S_I (\cdot) [f] \in C ([0, \infty))$. It follows that
\begin{align*}
S_I (r) [f] = \lim_{\varepsilon \to 0^+} \int_0^\infty dr' \ g_\varepsilon (r' - r) S_I(r') [f] = \lim_{\varepsilon \to 0^+} \int_{[0, \infty)^{I}} d \phi \ g_\varepsilon \left( \sum_{i \in I} \phi_i  - r \right) f (\phi) .
\end{align*}
We see that the left-hand side of the above equality will inherit properties from the right-hand side limiting term. In particular, the measure given by its action on $f \in C_b ([0, \infty)^I)$ given by
\begin{align*}
f \mapsto \int_{[0, \infty)^{I}} d \phi \ g_\varepsilon \left( \sum_{i \in I} \phi_i \right) f (\phi)
\end{align*}
is independent of any enumeration of $I$, and it is label permutation invariant. It follows that the measure $S_I (r)$ is independent of the given enumeration in the definition, and it is label permutation invariant. 
\\
\\
We can now define the microcanonical probability measure using \cref{def:uniform simplex integral}.  
\begin{definition} The measure $Z_n (M, N)$ is given by its action on $f \in C_b (\mathbb{R}^n)$ given by
\begin{align*}
Z_n (M, N) [f] := \begin{cases} \frac{1}{2} \sum_{\sigma \in \{ -1,1\}^n}  \left( S_{\sigma^{-1} \{ +1 \}} \left( \frac{N+M}{2} \right) \otimes S_{\sigma^{-1} \{ -1 \}} \left( \frac{N-M}{2} \right) \right)  [f \circ \sigma],& \ (M,N) \in \mathcal{A}, \\
S_n (N) [f],& \ (M,N) \in \partial \mathcal{A} \setminus \{ 0 \} ,
\end{cases} 
\end{align*}
where  $\otimes (\cdot)$ is the tensor product of two measures, $f \circ \sigma$ is the composition of the multiplication map $\sigma$ with $f$, and we take the necessary convention that 
\begin{align*}
\left( S_{\sigma^{-1} \{ +1 \}} \left( \frac{N+M}{2} \right) \otimes S_{\sigma^{-1} \{ -1 \}} \left( \frac{N-M}{2} \right) \right)  [f \circ \sigma] = 0
\end{align*}
if $\sigma  = \{ 1,1,...,1\}$ or $\sigma = \{ -1,-1,...,-1\}$ whenever $(M,N) \in \mathcal{A}$. 
\end{definition}
\noindent
This last convention implies that we do not include the "first" and "last" in the sum, but we have left them in to save space on notation.
\\
\\
To conclude this section, we will, finally, give the definition of the microcanonical probability measure. 
\begin{definition} \label{def:microcanonical probability measure rigorous}For $(m, \rho) \in \overline{\mathcal{A}} \setminus \{ 0 \}$, the probability measure $\nu_n (m,\rho)$ on $\mathbb{R}^n$ corresponding to the microcanonial probability measure is defined by its action on $f \in C_b (\mathbb{R}^n)$ given by
\begin{align}
\nu_n (m,\rho) [f] := \frac{Z_n (mn,\rho n) [f]}{Z_n (mn, \rho n)} ,
\end{align}
and the microcanonical partition function, acting as the normalization constant $Z_n (mn, \rho n)$ is given by
\begin{align} \label{def:microcanonical partition function}
 Z_n (mn, \rho n) := Z_n (mn, \rho n) [1] = \begin{cases} \frac{1}{2} \sum_{k=1}^{n - 1} {n \choose k} \frac{\left( \frac{\rho n + m n}{2} \right)^{k - 1}}{(k - 1)!} \frac{\left( \frac{\rho n - m n}{2} \right)^{n - k - 1}}{(n - k - 1)!},& \ (m, \rho) \in \mathcal{A}, \\
\frac{(\rho n)^{n - 1}}{(n-1)!}, & \ (m, \rho) \in \partial \mathcal{A} \setminus \{ 0 \} , \end{cases} 
\end{align}
which can be verified by direct computation.
\end{definition}
\noindent
To make the microcanonical probability measure computationally tractable, we will utilize a similar Lebesgue-absolutely continuous approximation as for the integrals over the simplex. However, as opposed to the approximation for the integrals over the simplexes, one must be more careful here. Using the family of functions $\{ g_\varepsilon \}_{\varepsilon > 0}$ from \cref{def:approximation functions}, observe that
\begin{align*}
&\int_{[0, \infty)^{\sigma^{-1} \{ + 1 \}} \times [0, \infty)^{\sigma^{-1} \{ - 1 \}}} d \phi \ \\ &\times g_\varepsilon \left( \sum_{i \in \sigma^{-1} \{ +1 \}} \phi_i - \frac{\rho n + mn}{2} \right) g_\varepsilon \left( \sum_{i \in \sigma^{-1} \{ -1 \}} \phi_i - \frac{\rho n - mn}{2}  \right) f (\sigma \phi) \\
&=  \int_{[0, \infty)^{\sigma^{-1} \{ + 1 \}} \times (-\infty, 0]^{\sigma^{-1} \{ - 1 \}}} d \phi \ \\ &\times g_\varepsilon \left( \sum_{i =1 }^n \frac{|\phi_i| + \phi_i}{2} - \frac{\rho n + mn}{2} \right) g_\varepsilon \left( \sum_{i =1 }^n \frac{|\phi_i| - \phi_i}{2} - \frac{\rho n - mn}{2}  \right) f (\phi) ,
\end{align*}
where $(m, \rho) \in \mathcal{A}$, and $\sigma$ does not consist of all $1$'s or all $-1$'s. Now, if we consider instead the right-hand side first, then it makes sense even when $\sigma$ consists of all $1$'s or $-1$'s. In that instance, the argument of one of the $g_\varepsilon$ will not integrate over any $\phi$-variables, and for small enough $\varepsilon > 0$ the indicator function vanishes. Summing over the $\sigma$, in this case, it then follows that 
\begin{align} \label{def:fundamental approximation}
&Z_n (m n,  \rho n) [f] \\ &= \lim_{\varepsilon \to 0^+} \frac{1}{2} \int_{\mathbb{R}^n} d \phi \  g_\varepsilon \left( \sum_{i = 1}^n \frac{|\phi_i| + \phi_i}{2} - \frac{\rho n + mn}{2} \right) g_\varepsilon \left( \sum_{i =1}^n \frac{|\phi_i| - \phi_i}{2} - \frac{\rho n - mn}{2}  \right) f (\phi) . \notag
\end{align}
Returning now to the microcanonical probability measure, we see that its inherits the various properties of the measure with action on $f \in C_b (\mathbb{R}^n)$ given by
\begin{align*}
f \mapsto \int_{\mathbb{R}^n} d \phi \ g_\varepsilon \left( \sum_{i =1 }^n \frac{|\phi_i| + \phi_i}{2} - \frac{\rho n + mn}{2} \right) g_\varepsilon \left( \sum_{i =1 }^n \frac{|\phi_i| - \phi_i}{2} - \frac{\rho n - mn}{2}  \right) f (\phi) .
\end{align*}
In particular, it is label permutation invariant. Furthermore, this approximation will be used for some calculations related to the microcanonical probability measure.
\subsection{Relative entropy and local observables} \label{sec:Relative entropy and local observables}
We begin with the proof of the type of generalized dominated convergence theorem.
\begin{proof}[Proof of \cref{thm:general convergence}]
The condition that $K$ is a continuity set of $\mu$ implies that
\begin{align*}
\lim_{n \to \infty} \mu_n (K) = \mu (K),
\end{align*}
and the condition that $\operatorname{supp} (\mu) \subset K$ implies that $\mu (K) = 1$.
\\
\\
Next, we have the following two simple inequalities
\begin{align*}
\left| \int_X \mu_n (dx) \ f_n (x) - \int_K \mu_n (dx) \ f_n (x) \right| \leq \mu_n(X \setminus K) \sup_{n \in \mathbb{N}} \sup_{x \in X} |f_n(x)| ,
\end{align*}
and
\begin{align*}
\left| \int_K \mu_n (dx) \ f_n (x) - \int_K \mu_n (dx) \ f (x) \right| \leq \mu_n (K) \sup_{x \in K} |f_n (x) - f(x)| .
\end{align*}
Since $K$ is a continuity set of $\mu$, using the continuity set definition of weak convergence, it follows that $\mu_n$ conditioned to $K$ converges weakly to $\mu$ conditioned to $K$. Transitioning to the continuous bounded form of weak convergence, it follows that
\begin{align*}
\lim_{n \to \infty} \frac{1}{\mu_n (K)} \int_K \mu_n (dx) \ f(x) = \frac{1}{\mu(K)}\int_K \mu (dx) \ f(x) = \int_K \mu(dx) \ f(x) .
\end{align*}
For completeness, we have the following final inequality
\begin{align*}
\left| \int_K \mu_n (dx) \ f (x) - \int_X \mu(dx) \ f(x) \right| \leq \frac{\mu_n (X \setminus K)}{\mu_n (K)} \left| \int_X \mu_n (dx) \ f(x) \right|.
\end{align*}
Combining together all three inequalities, the result follows.
\end{proof}
\noindent
We will need that the relative entropy between two absolutely continuous probability measures.
\begin{definition} Let $X$ be a Polish space, and let $\mu$ and $\nu$ be probability measures on $X$. If $\mu$ is absolutely continuous with respect to $\nu$, the relative entropy $\mathcal{H}(\mu || \nu)$ is given by
\begin{align*}
\mathcal{H}(\mu || \nu) := \int_X d \mu \ln \frac{d \mu}{d \nu} .
\end{align*}
If $\mu$ is not absolutely continuous with respect to $\nu$, we set $\mathcal{H} (\mu || \nu) = \infty$.
\end{definition} 
\noindent
We will need the following properties of relative entropy.
\begin{theorem} \label{thm:relative entropy properties}
Let $X$ be a Polish space, and let $\mu$ and $\nu$ be probability measures on $X$ such that $\mu$ is absolutely continuous with respect to $\nu$. 
\begin{itemize}
\item For any $\mu$ and $\nu$ satisfying the assumptions 
\begin{align*} 
\mathcal{H}(\mu || \nu ) \geq 0 .
\end{align*}
\item  For any $\mu$ and $\nu$ satisfying the assumptions
\begin{align*}
\sup_{f \in M_b (X), \ || f ||_\infty \leq 1} |\mu [f] - \nu [f]| \leq \sqrt{\frac{\mathcal{H}(\mu || \nu )}{2}}, 
\end{align*}
where $M_b (X)$ is the space of measurable bounded functions on $X$.
\item If $X = Y^n$, where $Y$ is another Polish space, and $\nu = \otimes_{k=1}^n \lambda$, where $\lambda$ is a probability measure on $Y$, it follows that
\begin{align*}
\mathcal{H}_I (\mu || \nu) + \mathcal{H}_J (\mu || \nu) \leq \mathcal{H}_{I \cup J} (\mu || \nu ) + \mathcal{H}_{I \cap J} (\mu || \nu) ,  
\end{align*}
where $I,J \subset \{ 1,2,...,n\}$, and $\mathcal{H}_I (\mu || \nu)$ is denotes the relative entropy of the $I$:th marginal distributions of $\mu$ and $\nu$. 
\end{itemize}
\end{theorem}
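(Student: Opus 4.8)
\noindent
My plan is to prove the three bullets in order, with the nonnegativity serving as the basic estimate from which both the Pinsker bound and the submodularity inequality are derived. Write $g := d\mu/d\nu$, so that $g \geq 0$ $\nu$-a.e.\ and $\int_X g\, d\nu = 1$. For the first bullet I would invoke the elementary scalar inequality $t\log t \geq t - 1$ for $t \geq 0$ (with $0\log 0 := 0$), substitute $t = g$, and integrate against $\nu$ to obtain $\mathcal{H}(\mu\|\nu) = \int_X g\log g\, d\nu \geq \int_X (g - 1)\, d\nu = 0$; equivalently this is Jensen's inequality for the convex function $t \mapsto t\log t$ and the probability measure $\nu$. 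The same computation records the identity $\mathcal{H}(\mu\|\nu) = \int_X (g\log g - g + 1)\, d\nu$, which I use below.

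\noindent
For the second bullet I would first identify the left-hand side with (twice) the total variation distance $\|\mu - \nu\|_{\mathrm{TV}} = \tfrac12 \int_X |g - 1|\, d\nu$, the extremizer being the admissible $\pm 1$-valued function $f = \mathbbm{1}_{\{g \geq 1\}} - \mathbbm{1}_{\{g < 1\}}$. The assertion is then Pinsker's inequality $\|\mu - \nu\|_{\mathrm{TV}}^2 \leq \tfrac12 \mathcal{H}(\mu\|\nu)$, which I would prove from the pointwise bound $3(t - 1)^2 \leq (2t + 4)(t\log t - t + 1)$ for $t \geq 0$ (a one-variable calculus check) combined with Cauchy--Schwarz: $(\int_X |g - 1|\, d\nu)^2 \leq (\int_X \tfrac{(g-1)^2}{2g+4}\, d\nu)(\int_X (2g + 4)\, d\nu) \leq \tfrac13 \mathcal{H}(\mu\|\nu) \cdot 6 = 2\,\mathcal{H}(\mu\|\nu)$, where I used $\int_X g\, d\nu = 1$ together with the identity from the first step; rearranging gives the bound.

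\noindent
For the third bullet I would argue by disintegration. Set $A := I \setminus J$, $B := J \setminus I$, $C := I \cap J$, so that $I = A \cup C$, $J = B \cup C$ and $I \cup J = A \cup B \cup C$; since $X = Y^n$ is Polish the relevant regular conditional distributions of $\mu$ along coordinate blocks exist, and since $\nu = \bigotimes_{k=1}^n \lambda$ is a product the $\nu$-conditional of one coordinate block given a disjoint block is simply the corresponding $\nu$-marginal. The chain rule for relative entropy then gives $\mathcal{H}_{A \cup C}(\mu\|\nu) = \mathcal{H}_C(\mu\|\nu) + \mathbb{E}_{\mu_C}[\mathcal{H}(\mu_{A\mid C} \| \lambda^{\otimes A})]$ and $\mathcal{H}_{A\cup B\cup C}(\mu\|\nu) = \mathcal{H}_{B\cup C}(\mu\|\nu) + \mathbb{E}_{\mu_{B\cup C}}[\mathcal{H}(\mu_{A\mid B\cup C} \| \lambda^{\otimes A})]$. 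Substituting these two identities into the claimed inequality, the terms $\mathcal{H}_C(\mu\|\nu)$ and $\mathcal{H}_{B\cup C}(\mu\|\nu)$ cancel, and it reduces to $\mathbb{E}_{\mu_C}[\mathcal{H}(\mu_{A\mid C}\|\lambda^{\otimes A})] \leq \mathbb{E}_{\mu_{B\cup C}}[\mathcal{H}(\mu_{A\mid B\cup C}\|\lambda^{\otimes A})]$; writing both expectations as integrals against $\mu_{A\cup B\cup C}$, the common term $-\mathbb{E}_{\mu_A}[\log \lambda^{\otimes A}]$ cancels and the difference of the two sides equals $\mathbb{E}_{\mu_{B\cup C}}[\mathcal{H}(\mu_{A\mid B\cup C}\|\mu_{A\mid C})]$, which is $\geq 0$ by the first bullet applied to the conditional measures. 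If any of the four relative entropies appearing in the inequality is infinite, then by the data-processing monotonicity of relative entropy so is $\mathcal{H}_{I\cup J}(\mu\|\nu)$ and the inequality is trivial, so one may assume all four are finite.

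\noindent
I expect the third bullet to be the only step needing genuine care: beyond the formal chain-rule manipulations one must justify the disintegration on $Y^n$, the interchanges of integration, and the validity of the cancellations on the $\mu$-null sets where the various Radon--Nikodym densities vanish. (An alternative is to reduce to the discrete strong subadditivity of Shannon entropy, itself equivalent to the nonnegativity of a conditional relative entropy, and then pass to the general case by approximation.) The first two bullets are, by contrast, immediate once the two scalar inequalities $t\log t \geq t - 1$ and $3(t - 1)^2 \leq (2t + 4)(t\log t - t + 1)$ are in hand.
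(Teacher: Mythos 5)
The paper does not actually prove this result: it cites Georgii for the first and third bullets and Sason--Verd\'u for Pinsker, so your self-contained argument is a genuinely different route. Your proofs of the first and third bullets are sound. The tangent-line (equivalently Jensen) argument gives nonnegativity, and for submodularity you correctly exploit the product structure of $\nu$ to apply the chain rule twice, cancel the common terms, and reduce the claim to nonnegativity of a conditional relative entropy; the technical caveats you flag about disintegration and the absolute-continuity casework are exactly the right ones to worry about.

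The second bullet, however, contains a genuine discrepancy, although the fault actually lies in the theorem statement. Since $\sup_{\|f\|_\infty\le 1}|\mu[f]-\nu[f]|=\int_X|g-1|\,d\nu=2\|\mu-\nu\|_{\mathrm{TV}}$, the stated bound $\sqrt{\mathcal{H}(\mu\|\nu)/2}$ would require $\|\mu-\nu\|_{\mathrm{TV}}\le\tfrac12\sqrt{\mathcal{H}(\mu\|\nu)/2}$, which is a factor of two stronger than Pinsker's inequality and in fact false: for $\nu$ uniform on $\{0,1\}$ and $\mu=(\tfrac12+\epsilon,\tfrac12-\epsilon)$ with small $\epsilon>0$ the left-hand side is $2\epsilon$ while $\sqrt{\mathcal{H}/2}\approx\epsilon$. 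Your calculation correctly establishes $\bigl(\int_X|g-1|\,d\nu\bigr)^2\le 2\,\mathcal{H}(\mu\|\nu)$, i.e.\ the correct bound $\sup_{\|f\|_\infty\le 1}|\mu[f]-\nu[f]|\le\sqrt{2\,\mathcal{H}(\mu\|\nu)}$, but your proposal conflates this with the displayed inequality when you say the assertion "is then Pinsker's inequality $\|\mu-\nu\|_{\mathrm{TV}}^2\le\tfrac12\mathcal{H}(\mu\|\nu)$": the theorem's left-hand side is $2\|\mu-\nu\|_{\mathrm{TV}}$, not $\|\mu-\nu\|_{\mathrm{TV}}$. The constant in the theorem should read $\sqrt{2\,\mathcal{H}(\mu\|\nu)}$, and the same correction is needed where the bound is invoked in the proof of \cref{thm:fundamental inequality}; since the paper only uses the right-hand side to conclude that a quantity vanishes in the limit, the downstream results are unaffected, but as written your proof establishes the corrected statement rather than the one displayed.
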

\noindent
The first and third properties properties are discussed and given proofs in \cite{Georgii2011}. The second property is sometimes referred to as Pinsker's inequality and references to proofs and other details concerning this inequality can be found in \cite{Sason2016}.
\\
\\
We can now give a proof of the fundamental inequality connecting the constrained and non-constrained ensemble probability measures.
\begin{proof}[Proof of \cref{thm:fundamental inequality}] Using \cref{def:fundamental approximation}, we can compute the integral over only the first $2$ variables leaving the other $n-2$ variables fixed. We compute
\begin{align*}
&\int_{\mathbb{R}^2} d \phi \ g_\varepsilon \left( \sum_{i = 1}^n \frac{|\phi_i| + \phi_i}{2} - \frac{\rho n + mn}{2} \right)  g_\varepsilon \left( \sum_{i = 1 }^n \frac{|\phi_i| - \phi_i}{2} - \frac{\rho n - mn}{2}  \right)  \\
&= g_\varepsilon \left( \sum_{i = 3 }^n \frac{|\phi_i| - \phi_i}{2} - \frac{\rho n - mn}{2}  \right) \int_{[0, \infty)^2} d \phi \ g_\varepsilon \left( \phi_1 + \phi_2 - \left( \frac{\rho  n + m n}{2} - \sum_{i=3}^n \frac{|\phi_i| + \phi_i}{2} \right)\right)  \\
&+ 2 \int_{[0, \infty)^2} d \phi \ g_\varepsilon \left( \phi_1 - \left( \frac{\rho  n + m n}{2} - \sum_{i=3}^n \frac{|\phi_i| + \phi_i}{2} \right)\right) g_\varepsilon \left( \phi_2 - \left( \frac{\rho  n - m n}{2} - \sum_{i=3}^n \frac{|\phi_i| - \phi_i}{2} \right)\right) \\
&+   g_\varepsilon \left( \sum_{i = 3 }^n \frac{|\phi_i| + \phi_i}{2} - \frac{\rho n + mn}{2}  \right) \int_{[0, \infty)^2} d \phi \ g_\varepsilon \left( \phi_1 + \phi_2 - \left( \frac{\rho  n - m n}{2} - \sum_{i=3}^n \frac{|\phi_i| - \phi_i}{2} \right)\right) .
\end{align*}
Taking the limit, it follows that
\begin{align*}
&\lim_{\varepsilon \to 0^+} \int_{\mathbb{R}^2} d \phi \ g_\varepsilon \left( \sum_{i = 1}^n \frac{|\phi_i| + \phi_i}{2} - \frac{\rho n + mn}{2} \right)  g_\varepsilon \left( \sum_{i = 1 }^n \frac{|\phi_i| - \phi_i}{2} - \frac{\rho n - mn}{2}  \right) \\ &= 2  \mathbbm{1} \left( \frac{\rho  n + m n}{2} - \sum_{i=3}^n \frac{|\phi_i| + \phi_i}{2} \geq 0 \right) \mathbbm{1} \left( \frac{\rho  n - m n}{2} - \sum_{i=3}^n \frac{|\phi_i| - \phi_i}{2} \geq 0 \right) .
\end{align*}
Accounting for the normalization, the $(n-2)$:th marginal of the microcanonical probability measure is given by
\begin{align*}
\nu_n (m, \rho) (d \phi_{n-2}) = \frac{ \mathbbm{1} \left( \frac{\rho  n + m n}{2} - \sum_{i=3}^n \frac{|\phi_i| + \phi_i}{2} \geq 0 \right) \mathbbm{1} \left( \frac{\rho  n - m n}{2} - \sum_{i=3}^n \frac{|\phi_i| - \phi_i}{2} \geq 0 \right)}{Z_n (m n, \rho n)} d \phi_{n-2} ,
\end{align*}
where $d \phi_{n-2}$ is the $(n-2)$-dimensional Lebesgue measure. Note the factor of $2$ vanishes due to the presence of a factor of $\frac{1}{2}$ in the partition function. It follows that
\begin{align*}
\frac{d \nu_n (m, \rho)}{d \eta_n (\beta, \mu)} (\phi_{n-2}) &= \frac{Q_{n-2} (\beta, \mu)}{e^{-\beta \sum_{i=3}^{n} \phi_i - \mu \sum_{i=3}^n |\phi_i|}}  \\ &\times \frac{  \mathbbm{1} \left( \frac{\rho  n + m n}{2} - \sum_{i=3}^n \frac{|\phi_i| + \phi_i}{2} \geq 0 \right) \mathbbm{1} \left( \frac{\rho  n - m n}{2} - \sum_{i=3}^n \frac{|\phi_i| - \phi_i}{2} \geq 0 \right)}{Z_n ( m n, \rho n)} d \phi_{n-2} .
\end{align*}
The relative entropy is then directly computed to be
\begin{align*}
\mathcal{H}_{n-2} (\nu_n (m, \rho) || \eta_n (\beta, \mu)) = \beta \nu_{n} (m, \rho) \left[ M_{n-2} \right] + \mu \nu_{n} (m, \rho) \left[ N_{n-2} \right] + \ln Q_{n - 2} (\beta, \mu) - \ln Z_n (m n, \rho n) . 
\end{align*}
Using label permutation invariance, one can directly compute that
\begin{align*}
\nu_{n} (m, \rho) \left[ M_{n-2} \right] = (n - 2) m, \ \nu_{n} (m, \rho) \left[ N_{n-2} \right] = (n - 2) \rho , \ \ln Q_{n - 2} (\beta, \mu) = (n - 2) f(\beta, \mu) .
\end{align*}
In summary, we have
\begin{align*}
\frac{1}{n - 2} \mathcal{H}_{n-2} (\nu_n (m, \rho) || \eta_n (\beta, \mu)) = \beta m + \mu \rho + f (\beta, \mu) - \frac{n}{n - 2} s_n (m, \rho) .
\end{align*}
To continue, by \cref{thm:relative entropy properties}, it follows that
\begin{align*}
\sup_{f \in C_b (\mathbb{R}^I), \ || f ||_\infty \leq 1} |\nu_n (m, \rho) [f] - \eta_n (\beta, \mu) [f]| \leq \sqrt{\frac{\mathcal{H}_I (\nu_n (m, \rho) || \eta_n (\beta, \mu))}{2}} .
\end{align*}
By label permutation invariance, it follows that 
\begin{align*}
\mathcal{H}_I (\nu_n (m, \rho) || \eta_n (\beta, \mu)) = \mathcal{H}_{[|I|]} (\nu_n (m, \rho) || \eta_n (\beta, \mu)) .
\end{align*}
Since $I$ is finite, it follows that there exists $k \in \mathbb{N}$ such that $(k - 1) |I| \leq n - 2 < k |I|$. Since $\eta_n (\beta, \mu)$ is a product measure, using \cref{thm:relative entropy properties}, it follows that
\begin{align*}
\mathcal{H}_I (\nu_n (m, \rho) || \eta_n (\beta, \mu)) &= \frac{1}{k - 1} \sum_{j=1}^{k - 1} \mathcal{H}_{[|I|] + (j-1) |I|} (\nu_n (m, \rho) || \eta_n (\beta, \mu)) \\ &\leq \frac{\mathcal{H}_{n-2} (\nu_n (m, \rho) ||\eta_n (\beta, \mu))}{k - 1} \\
&\leq |I| \frac{\mathcal{H}_{n-2} (\nu_n (m, \rho) || \eta_n (\beta, \mu))}{n - 2 - |I|} .
\end{align*}
Combining these inequalities together, it follows that
\begin{align*}
&\sup_{f \in C_b (\mathbb{R}^I), \ || f ||_\infty \leq 1} |\nu_n (m, \rho) [f] - \eta_n (\beta, \mu) [f]| \\ &\leq \sqrt{\frac{|I| (n - 2)}{2(n - 2 - |I|)} \left( \beta m + \mu \rho + f(\beta, \mu) - \frac{n}{n - 2} s_n (m, \rho) \right)} ,
\end{align*}
as desired. 
\end{proof}
\subsection{Large deviations and weak convergence} \label{sec:Large deviations and weak convergence}
\noindent
We begin with the standard key definitions of large deviations theory. Note that these definitions are either the same or slightly modified versions of the same results and definitions found in \cite{Hollander2008}. In addition, the result concerning convexity are either provided in \cite{Hollander2008}, or we refer to \cite{Rockafellar1997} for more detailed analysis of convex objects.
\\
\\
 In the following $\{ P_n \}_{n = 1}^\infty$ is a sequence of probability measures on a Polish space $X$.
\begin{definition} A function $I : X \to [0, \infty]$ is called a rate function if it satisfies the following properties
\begin{itemize}
\item $I(x) < \infty$ for all $x \in X$.
\item $I$ is lower semi-continuous.
\item $I$ has compact level sets.
\end{itemize}
\end{definition}
\noindent 
In the following, we use the notation $I(A) := \inf_{x \in A} I(x)$.
\begin{definition} A sequence of probability measure $\{ P_n \}_{n = 1}^\infty$ is said to satisfy a large deviations principle with rate function $I$ if it satisfies the following properties
\begin{itemize}
\item For all closed sets $C \subset X$, we have
\begin{align*}
\limsup_{n \to \infty} \frac{1}{n} \ln P_n (C) \leq - I (C) .
\end{align*}
\item For all open sets $O \subset X$, we have
\begin{align*}
\liminf_{n \to \infty} \frac{1}{n} \ln P_n (O) \geq - I (O) .
\end{align*}
\end{itemize}
\end{definition}
\noindent
Now, we specialize to probability distributions on $\mathbb{R}^d$. In the following, let $\{ m_n \}_{n=1}^\infty$ be a sequence of random variables on $\mathbb{R}^d$, and we set $P_n (A) := \mathbb{P}(m_n \in A)$. The moment generating functions $\varphi_n  : \mathbb{R}^d \to (0, \infty]$ are given by $\varphi_n (t) := \mathbb{E} e^{ \left< t, m_n \right>}$. In the following, we assume the existence of a function $\Lambda : \mathbb{R}^d \to [- \infty, \infty]$ given by
\begin{align*}
\Lambda (t) := \lim_{n \to \infty} \frac{1}{n} \ln \varphi_n (n t) ,
\end{align*}
and that this function satisfies $0 \in \operatorname{int}(\mathcal{D} (\Lambda))$ where $\mathcal{D}(\Lambda) := \{ t \in \mathbb{R}^d : \Lambda(t) < \infty \}$. For such a function, it follows that $\Lambda$ is convex and $\Lambda(t) > - \infty$ for all $t \in \mathbb{R}^d$. A convex function $\Lambda : \mathbb{R}^d \to [- \infty, \infty]$ is called proper if $\Lambda (t) > - \infty$ for all $t \in \mathbb{R}^d$, and there exists at least one point $t_0 \in \mathbb{R}$ such that $\Lambda (t_0) < \infty$. It is clear that when $\Lambda$ is the limit of the scaled logarithmic moment generating functions, then it is a proper convex function.
\noindent
We will need the Legendre transform of $\Lambda$.
\begin{definition} The Legendre transform $\Lambda^* : \mathbb{R}^d \to [- \infty, \infty]$ of a $\Lambda : \mathbb{R}^d \to [-\infty, \infty]$ is given by
\begin{align*}
\Lambda^* (x) := \sup_{t \in \mathbb{R}^d} \{ \left< x, t \right> - \Lambda (t) \} .
\end{align*} 
\end{definition}
\noindent
For $\Lambda$ given by the limit of the scaled logarithm, it follows that $\Lambda^*$ is a convex rate function. In particular, we see that the range of $\Lambda^*$ must be contained in $[0, \infty)$. 
\\
\\
To specify the form of the Gärtner-Ellis theorem, that we wish to utilize, we need the concept of essential smoothness.
\begin{definition} A proper convex function $\Lambda : \mathbb{R}^d \to (-\infty, \infty]$ is called essentially smooth if it satisfies the following properties
\begin{itemize}
\item $\operatorname{int} (\mathcal{D} (\Lambda)) \not = \emptyset$.
\item $\Lambda$ is differentiable on $\operatorname{int} (\mathcal{D} (\Lambda))$.
\item Either $\mathcal{D}(\Lambda) = \mathbb{R}^d$ or, for any $t^* \in \partial \mathcal{D}(\Lambda)$, it follows that $\lim_{t \to t^*} || \nabla [\Lambda] (t)|| = \infty$.
\end{itemize} 
\end{definition}
\noindent
We can now give the essentially smooth form of the Gärtner-Ellis theorem.
\begin{theorem}
Let $\Lambda : \mathbb{R}^d \to (0, \infty]$ be an essentially smooth lower semi-continuous function.
\\
\\
It follows that $\{ P_n \}_{n=1}^\infty$ satisfies a large deviations principle with rate function $\Lambda^*$. 
\end{theorem}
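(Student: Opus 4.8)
This theorem is the Gärtner–Ellis theorem in its essentially smooth formulation, and the plan is to follow the now-classical two-sided route (exponential Chebyshev for the upper bound, change of measure for the lower bound); see \cite{Hollander2008} for a complete treatment consistent with the definitions used here. \textbf{Step 1: exponential tightness and the upper bound.} First I would observe that $0 \in \operatorname{int}(\mathcal{D}(\Lambda))$ makes $\{ P_n \}$ exponentially tight: picking $\theta > 0$ with $\pm \theta e_j \in \mathcal{D}(\Lambda)$ for each coordinate $j$, the exponential Chebyshev inequality applied to $e^{n \left< \theta e_j, m_n \right>}$ and to $e^{- n \left< \theta e_j, m_n \right>}$ bounds $P_n(\{ x : |x_j| \geq L \})$ by a quantity decaying like $e^{- n (\theta L - C)}$, and intersecting over $j$ yields compacts with exponentially negligible complements. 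For the upper bound proper, fix a closed set $C$; for each $x$ and each $t$, Markov's inequality gives
\[
\limsup_{n \to \infty} \frac{1}{n} \ln P_n\left( \overline{B}(x, \delta) \right) \leq - \inf_{y \in \overline{B}(x, \delta)} \left< t, y \right> + \Lambda(t) ,
\]
so sending $\delta \downarrow 0$ and optimising over $t$ gives the local estimate $\limsup_n \frac{1}{n} \ln P_n(\overline{B}(x,\delta_x)) \leq - \Lambda^*(x) + \varepsilon$ for suitably small $\delta_x$. A finite subcover handles compact sets, and exponential tightness then promotes the bound to all closed sets.

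\textbf{Step 2: the lower bound at exposed points.} The harder half is showing, for every open $O$ and every $x \in O$, that $\liminf_n \frac{1}{n} \ln P_n(O) \geq - \Lambda^*(x)$, equivalently that $\liminf_n \frac{1}{n} \ln P_n(B(x,\delta)) \geq - \Lambda^*(x)$ for all $\delta > 0$ and all $x$ with $\Lambda^*(x) < \infty$. I would first establish this when $x$ is an exposed point of $\Lambda^*$ whose exposing hyperplane $t = t(x)$ lies in $\operatorname{int}(\mathcal{D}(\Lambda))$, via the exponential change of measure $\frac{d \tilde P_n}{d P_n}(y) := e^{n \left< t, y \right>} / \varphi_n(nt)$. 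This yields
\[
\frac{1}{n} \ln P_n(B(x,\delta)) \geq - \left< t, x \right> - \delta \| t \| + \frac{1}{n} \ln \varphi_n(nt) + \frac{1}{n} \ln \tilde P_n(B(x,\delta)) ,
\]
whose first three terms tend to $- \left< t, x \right> + \Lambda(t) = - \Lambda^*(x)$ because $t$ exposes $x$. It then remains to show $\tilde P_n(B(x,\delta)) \to 1$: the tilted sequence has limiting logarithmic moment generating function $s \mapsto \Lambda(t + s) - \Lambda(t)$, differentiable at $0$ with gradient $\nabla \Lambda(t) = x$, so applying the Step 1 upper bound to the (still exponentially tight) family $\{ \tilde P_n \}$ on the closed set $B(x,\delta)^c$ forces exponential decay of $\tilde P_n(B(x,\delta)^c)$, i.e. a weak law of large numbers $m_n \to x$ under $\tilde P_n$.

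\textbf{Step 3: reduction to exposed points — the main obstacle.} To pass from exposed points to arbitrary $x$ with $\Lambda^*(x) < \infty$ I would invoke the convex-analytic input (Rockafellar, \cite{Rockafellar1997}): for an essentially smooth, lower semi-continuous proper convex $\Lambda$, the gradient map $\nabla \Lambda$ is a homeomorphism of $\operatorname{int}(\mathcal{D}(\Lambda))$ onto $\operatorname{int}(\mathcal{D}(\Lambda^*))$, every point of the image being an exposed point of $\Lambda^*$ exposed by its $\nabla\Lambda$-preimage. Since $\operatorname{int}(\mathcal{D}(\Lambda^*))$ is dense in $\mathcal{D}(\Lambda^*)$ and $\Lambda^*$ is lower semi-continuous, I can pick exposed points $x_k \to x$ with $\Lambda^*(x_k) \to \Lambda^*(x)$; for $k$ large $B(x_k,\delta/2) \subset B(x,\delta)$, so Step 2 at $x_k$ gives the desired bound at $x$ in the limit. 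The steepness clause in the definition of essential smoothness is exactly what ensures no boundary point of $\mathcal{D}(\Lambda^*)$ is lost in this approximation; carrying out this reduction — including the density and bijectivity claims — is the step I expect to demand the most care.

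\textbf{Step 4: $\Lambda^*$ is a rate function.} Finally I would record that $\Lambda^*$, being a Legendre transform, is convex and lower semi-continuous; it is nonnegative because $\Lambda(0) = \lim_n \frac{1}{n} \ln \varphi_n(0) = 0$ forces $\Lambda^*(x) \geq - \Lambda(0) = 0$; and its level sets are compact, by combining lower semi-continuity with the coercive bound $\Lambda^*(x) \geq \theta \| x \|_\infty - C$ coming from the same directional estimates used for exponential tightness in Step 1. Assembling Steps 1–4 gives the large deviations principle for $\{ P_n \}$ with rate function $\Lambda^*$.
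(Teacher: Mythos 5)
The paper does not prove this theorem at all: it is cited as a standard form of the G\"artner--Ellis theorem, with the surrounding text pointing to \cite{Hollander2008} for the definitions and results of large deviations theory. So you are supplying an argument where the paper supplies a reference; the relevant question is whether your sketch is sound.

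Steps 1, 2, and 4 are the standard route and are fine: exponential Chebyshev in coordinate directions gives exponential tightness and a local upper bound, a finite cover handles compacts, tightness promotes to closed sets, change of measure handles exposed points, and the coercive directional estimates give the compact level sets of $\Lambda^\ast$. The gap is in Step~3. You invoke ``$\nabla\Lambda$ is a homeomorphism of $\operatorname{int}(\mathcal D(\Lambda))$ onto $\operatorname{int}(\mathcal D(\Lambda^\ast))$, every image point being exposed'' citing Rockafellar. That result (Rockafellar, Thm.~26.5) requires $\Lambda$ to be a \emph{Legendre-type} function, i.e.\ essentially smooth \emph{and} strictly convex on $\operatorname{int}(\mathcal D(\Lambda))$. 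Essential smoothness alone does not make $\nabla\Lambda$ injective (take $\Lambda$ constant on an interval and quadratic outside it), and without injectivity the bijection onto $\operatorname{int}(\mathcal D(\Lambda^\ast))$ fails. The theorem as stated only assumes essential smoothness, so the lemma you invoke does not apply. The correct repair in this generality is the perturbation argument of Dembo--Zeitouni (their Lemma~2.3.12): replace $\Lambda$ by the strictly convex $\Lambda_\varepsilon(t) := \Lambda(t) + \tfrac{\varepsilon}{2}\|t\|^2$, locate $\eta_\varepsilon$ with $\nabla\Lambda_\varepsilon(\eta_\varepsilon) = y$, use steepness to keep $\eta_\varepsilon$ in $\operatorname{int}(\mathcal D(\Lambda))$, and show $\nabla\Lambda(\eta_\varepsilon)$ is an exposed point of $\Lambda^\ast$ converging to~$y$ with $\Lambda^\ast(\nabla\Lambda(\eta_\varepsilon)) \to \Lambda^\ast(y)$. (For the paper's own applications this gap is harmless, since the functions $f$ actually fed into this theorem are explicitly shown to be of Legendre type, so your Rockafellar citation would suffice there --- but not for the theorem as literally stated.)
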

\noindent
It is typical to introduce the notion of strict convexity of a function, but we will instead directly introduce the notion of a Legendre-type function.
\begin{definition} A proper convex lower semi-continuous function $\Lambda : \mathbb{R}^d \to (-\infty, \infty]$ is said to be of Legendre-type if it is both essentially smooth and strictly concave on $\operatorname{int} (\mathcal{D} (\Lambda))$. 
\end{definition}
\noindent
The primary feature of Legendre-type functions that we will use is that the gradient of such a function $\Lambda$ is a bijection between $\operatorname{int} (\mathcal{D} (\Lambda))$ and $\operatorname{int} (\mathcal{D} (\Lambda^*))$. 
\\
\\
We can now prove the following general theorem. 
\begin{theorem}
Let $\{ Z_n \}_{n=1}^\infty$ be a sequence of functions $Z_n : n \mathcal{A} \to (0, \infty)$, where $\mathcal{A} \subset \mathbb{R}^d$ is a non-empty open convex set such that each $Z_n$ is log-concave, and 
\begin{align*}
\sup_{n \in \mathbb{N}} \left|\frac{1}{n} \ln Z_n (x n) \right| < \infty
\end{align*}
for each $x \in \mathcal{A}$. Denote by $s_n : \mathcal{A} \to (- \infty, \infty)$ the function given by
\begin{align*}
s_n (x) := \frac{1}{n} \ln Z_n (x n) .
\end{align*} 
In addition, suppose that the function $f : \mathbb{R}^d \to [- \infty, \infty]$ given by
\begin{align*}
f(t) := \lim_{n \to \infty} \frac{1}{n} \ln Q_n (t),
\end{align*}
exists, where $Q_n : \mathbb{R}^d \to (0, \infty]$ are given by
\begin{align*}
Q_n (t) := \int_{n \mathcal{A}} d X \ e^{-  \left< t, X \right>} Z_n (X) ,
\end{align*}
and there exists a non-empty open convex set $\mathcal{B} \subset \mathbb{R}^d$ such that $\mathcal{D} (Q_n) = \operatorname{int} (\mathcal{D} (f)) = \mathcal{B}$. 
\\
\\
If $f$ is a proper convex lower semi-continuous function of Legendre type such that $-\nabla [f] \mathcal{B} = \mathcal{A}$ then the function $s : \mathcal{A} \to \mathbb{R}$ given by the limit
\begin{align*}
s(x) := \lim_{n \to \infty} \frac{1}{n} \ln Z_n (x n) ,
\end{align*}
exists, and satisfies
\begin{align*}
s(x) := \inf_{t \in \mathbb{R}^d} \{ \left< t, x\right> + f(t)\}, \ \lim_{n \to \infty} \sup_{K \subset \mathcal{A}} |s_n (x) - s (x)| = 0 
\end{align*}
for any compact set $K \subset \mathcal{A}$. 
\end{theorem}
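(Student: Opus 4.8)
The plan is to combine a relative-compactness argument for the family $\{s_n\}$, justified by concavity and the pointwise uniform bound, with the Gärtner--Ellis theorem applied to a tilted family of probability measures, and to use the latter to force every subsequential limit of $\{s_n\}$ to coincide with the Legendre dual of $f$. First I would record that each $s_n$ is concave on the open convex set $\mathcal{A}$ and that $\sup_n|s_n(x)|<\infty$ for every $x\in\mathcal{A}$. A classical fact from convex analysis, see \cite{Rockafellar1997}, is that a pointwise bounded family of concave functions on an open convex set is automatically equi-Lipschitz, hence equibounded, on every compact subset of $\mathcal{A}$; by Arzel\`a--Ascoli the sequence $\{s_n\}$ is therefore precompact in $C(\mathcal{A})$ with the topology of locally uniform convergence, and every locally uniform subsequential limit is a finite concave function on $\mathcal{A}$. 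It then suffices to show that every locally uniformly convergent subsequence has the same limit, namely $s(x):=\inf_{t\in\mathbb{R}^d}\{\langle t,x\rangle+f(t)\}$.

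Fix a subsequence $\{n_k\}$ with $s_{n_k}\to\bar s$ locally uniformly and fix $t_0\in\mathcal{B}$. After the change of variables $X=nx$ one has $Q_n(t)=n^d\int_{\mathcal{A}}e^{n(s_n(x)-\langle t,x\rangle)}\,dx$, so $Q_n(t_0)<\infty$ and one may define probability measures $P_n^{t_0}$ on $\mathbb{R}^d$ with density proportional to $e^{n(s_n(x)-\langle t_0,x\rangle)}\mathbbm{1}_{\mathcal{A}}(x)$. Their normalized logarithmic moment generating functions satisfy $\frac1n\ln\mathbb{E}_{P_n^{t_0}}e^{n\langle t,X\rangle}=\frac1n\ln\big(Q_n(t_0-t)/Q_n(t_0)\big)\to\Lambda(t):=f(t_0-t)-f(t_0)$, with $\mathcal{D}(\Lambda)=t_0-\mathcal{B}$ open and containing $0$. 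Since $f$ is of Legendre type, so is $\Lambda$, and the essentially smooth Gärtner--Ellis theorem gives an LDP for $\{P_n^{t_0}\}$ with rate function $\Lambda^{*}$. Because $\nabla\Lambda(t)=-\nabla[f](t_0-t)$ sweeps out $-\nabla[f]\mathcal{B}=\mathcal{A}$ as $t$ ranges over $\operatorname{int}\mathcal{D}(\Lambda)$, the Legendre-type bijection yields $\operatorname{int}\mathcal{D}(\Lambda^{*})=\mathcal{A}$, so $\Lambda^{*}$ is finite and continuous at every $x_0\in\mathcal{A}$.

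Next I would evaluate $\frac{1}{n_k}\ln P_{n_k}^{t_0}(B(x_0,\delta))$ for $x_0\in\mathcal{A}$ and small $\delta>0$ in two ways. The normalization is $C_{n}:=Q_{n}(t_0)/n^d$ with $\frac1{n}\ln C_{n}\to f(t_0)$, and since $\overline{B(x_0,\delta)}\subset\mathcal{A}$ and $s_{n_k}\to\bar s$ uniformly there, a routine Laplace estimate gives $\frac1{n_k}\ln\int_{B(x_0,\delta)}e^{n_k(s_{n_k}-\langle t_0,\cdot\rangle)}\to\sup_{B(x_0,\delta)}\{\bar s-\langle t_0,\cdot\rangle\}$; letting $\delta\to0$ and using continuity of $\bar s$ gives $\lim_{\delta\to0}\lim_k\frac1{n_k}\ln P_{n_k}^{t_0}(B(x_0,\delta))=\bar s(x_0)-\langle t_0,x_0\rangle-f(t_0)$. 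On the other hand, the LDP together with the continuity of $\Lambda^{*}$ at $x_0$ sandwiches the same quantity between $-\inf_{B(x_0,\delta)}\Lambda^{*}$ and $-\inf_{\overline{B(x_0,\delta)}}\Lambda^{*}$, both of which tend to $-\Lambda^{*}(x_0)$. Hence $\bar s(x_0)=f(t_0)+\langle t_0,x_0\rangle-\Lambda^{*}(x_0)$, and unwinding $\Lambda^{*}(x)=\langle x,t_0\rangle+f(t_0)-\inf_{u}\{\langle x,u\rangle+f(u)\}$ gives $\bar s(x_0)=\inf_{u}\{\langle x_0,u\rangle+f(u)\}=s(x_0)$, independent of the subsequence. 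Since every subsequence of $\{s_n\}$ has a sub-subsequence converging locally uniformly to this fixed $s$, the whole sequence $\{s_n\}$ converges locally uniformly on $\mathcal{A}$ to $s(x)=\inf_{t}\{\langle t,x\rangle+f(t)\}$, which is the assertion.

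The part deserving the most care is the Laplace estimate and the tail behavior: reducing everything to bounded balls $B(x_0,\delta)$ keeps the upper and lower bounds elementary, but one still needs that no mass of $P_{n_k}^{t_0}$ escapes toward $\partial\mathcal{A}$, equivalently that the maximizer $-\nabla[f](t_0)$ of $\bar s-\langle t_0,\cdot\rangle$ over $\mathcal{A}$ actually lies in $\mathcal{A}$ --- which is again exactly where $-\nabla[f]\mathcal{B}=\mathcal{A}$ enters. The remaining inputs, namely precompactness of pointwise bounded concave families, the essentially smooth Gärtner--Ellis theorem, and the Legendre-type bijection $\nabla[f]:\mathcal{B}\to\mathcal{A}$, are all off-the-shelf and I would cite them without reproving.
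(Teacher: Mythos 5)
Your proposal is correct and follows essentially the same route as the paper's proof: relative compactness of the pointwise-bounded concave family $\{s_n\}$, tilting to form $P_n^{t_0}$ with normalized log-moment generating functions converging to $\Lambda(t)=f(t_0-t)-f(t_0)$, the essentially smooth Gärtner--Ellis theorem to get an LDP with rate $\Lambda^*$, the Legendre-type bijection to identify $\operatorname{int}\mathcal{D}(\Lambda^*)=\mathcal{A}$, and finally matching the LDP asymptotics of small balls against a Laplace estimate involving the subsequential limit $\bar s$, then sending $\delta\to 0$ to pin $\bar s$ down as the Legendre dual of $f$. The only cosmetic difference is that the paper works throughout with closed balls $\overline{B}(y,\delta)$ as continuity sets of the rate function, whereas you use both open and closed balls to sandwich; this does not affect the argument.
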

\begin{proof}
For the first step, let $t_0 \in \mathcal{B}$ be any base point, and we define the sequence of probability measures $\{ P_n \}_{n = 1}^\infty$ on $\mathbb{R}^d$ by setting
\begin{align*}
P_n (A) := \frac{1}{Q_n (t_0)} \int_{n (A \cap \mathcal{\mathcal{A}})} dX \ e^{- \left< t_0, X \right>} Z_n (X) ,
\end{align*}
where $A \subset \mathbb{R}^d$ is Borel measurable.   
\\
\\
The moment generating function $\varphi_n : \mathbb{R}^d \to (0, \infty]$ of the random variable $m_n$ on $\mathbb{R}^d$ with distribution given by $P_n$ is given by
\begin{align*}
\varphi_n (t) := \frac{Q_n (t_0 - \frac{t}{n})}{Q_n (t_0)} .
\end{align*}
The limit of the scaled logarithm moment generating function $\Lambda : \mathbb{R}^d \to [- \infty, \infty]$ is given by
\begin{align*}
\Lambda (t) := \lim_{n \to \infty} \frac{1}{n} \ln \varphi_n ( n t) = f(t_0 - t) - f (t_0) . 
\end{align*}
Since $\Lambda$ inherits its properties from $f$, it follows that $\Lambda$ exists, is a proper convex lower semi-continuous function of Legendre-type, and satisfies $0 = t_0 - t_0 \in \operatorname{int} (\mathcal{D} (\Lambda)) = t_0 - \operatorname{int} (\mathcal{D}(f)) = t_0 - \mathcal{B}$. It follows that $\{ P_n \}_{n \in \mathbb{N}}$ satisfies a large deviations principle with rate function $\Lambda^*$. Since $\Lambda$ is of Legendre-type, it follows that $\operatorname{int} (\mathcal{D} (\Lambda^*)) = \nabla [\Lambda] (\operatorname{int} (\mathcal{D} (\Lambda))) = - \nabla [f] \mathcal{B} = \mathcal{A}$.
\\
\\
Let $y \in \operatorname{int} (\mathcal{D} (\Lambda^*)) = \mathcal{A}$. Since $\Lambda^*$ is convex, it follows that it is continuous on $\mathcal{A}$ and thus the compact balls $\overline{B}(y, \delta)$ for small enough $\delta > 0$ are continuity sets from which it follows that
\begin{align*}
\lim_{n \to \infty} \frac{1}{n} \ln P_n (\overline{B} (y, \delta)) = - \Lambda^* (\overline{B}(y, \delta)) .
\end{align*}
For the second step, since each $s_n$ is concave and the collection $\{ s_n \}_{n \in \mathbb{N}}$ is pointwise uniformly bounded, it follows that the collection $\{ s_n \}_{n \in \mathbb{N}}$ is relatively compact in the compact-open topology of  continuous functions. Let $\{ s_{n_k} \}_{k = 1}^\infty$ be any locally uniformly convergent subsequence with limiting function $s'$. Since $\overline{B}(y, \delta)$ is a compact set, it follows that
\begin{align*}
\lim_{k \to \infty} \frac{1}{n_k} \ln \int_{n_k \overline{B}(y, \delta)} d X \ e^{- \left< t_0, X \right>} Z_{n_k} (X) &= \lim_{k \to \infty} \frac{1}{n_k} \ln \int_{ \overline{B}(y, \delta)} d x \ n_k e^{n_k \left( s_{n_k} (x) -  \left< y_0, x\right> \right)}  \\
&= \sup_{x \in \overline{B}(y, \delta)} \{ s' (x) - \left< t_0, x \right> \} .
\end{align*}
Then we have
\begin{align*}
\lim_{k \to \infty} \frac{1}{n_k} \ln P_{n_k} (\overline{B}(y, \delta)) = \sup_{x \in \overline{B} (y, \delta)} \{ s' (x) - \left< t_0, x \right>\} - f (t_0) .
\end{align*}
By combining this result with the large deviations principle, we deduce that
\begin{align*}
\sup_{x \in \overline{B} (y, \delta)} \{ s' (x) - \left< t_0, x \right>\} - f (t_0) = - \Lambda^* (\overline{B}(y, \delta)) .
\end{align*}
Now, since both functions inside the supremum and infimum respectively are continuous, letting $\delta \to 0^+$, we obtain
\begin{align*}
s' (y) - \left< \beta_0, y \right> - f (t_0) = - \Lambda^* (y) \iff s' (y) =  \inf_{t \in \mathbb{R}^d} \{ \left< y,t \right> + f (t) \} .
\end{align*}
Since $s'$ was the locally uniform limit of an arbitrary convergent subsequence $\{ s_{n_k }\}_{k = 1}^\infty$, the above result implies that this holds for any such $s'$, and thus the limit of any convergent subsequence is the same from which it follows that
\begin{align*}
\lim_{n \to \infty} s_n (x) = \inf_{t \in \mathbb{R}^d} \{ \left< t, x \right> + f (t) \},
\end{align*}
for $x \in \operatorname{int} (\mathcal{D} (\Lambda^*)) = \mathcal{A}$, and since the $s_n$ are concave and pointwise uniformly bounded, this convergence is automatically locally uniform. 
\end{proof}
\noindent
Let us also give a quick proof of the following weak convergence result concerning large deviations principles.
\begin{theorem} \label{thm:limiting points large deviations}
Let $\{ P_n \}_{n = 1}^\infty$ be a sequence of probability measures on $X$ satisfying a large deviations principle with rate function $I$.
\\
\\
It follows that
\begin{align*}
L \left( \{ P_n \}_{n=1}^\infty \right) \subset \{ P \in \mathcal{P} (X) : \operatorname{supp} (P) \subset I^{-1} \{ 0 \} \} ,
\end{align*}
where $\mathcal{P} (X)$ is the space of Borel probability measures on $X$.
\end{theorem}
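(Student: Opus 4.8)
The plan is to argue by contradiction, combining the upper bound in the large deviations principle with the portmanteau characterization of weak convergence. Fix a complete metric $d$ compatible with the topology of $X$, let $P \in L\left( \{ P_n \}_{n=1}^\infty \right)$, and choose a subsequence $\{ P_{n_k} \}_{k=1}^\infty$ with $P_{n_k} \to P$ weakly. Suppose, for contradiction, that $\operatorname{supp}(P) \not\subset I^{-1}\{ 0 \}$, i.e.\ that there is a point $x \in \operatorname{supp}(P)$ with $I(x) > 0$. The first step is a localization: since $I$ is lower semi-continuous, there is a $\rho > 0$ with $I(y) \geq \tfrac12 I(x)$ for all $y$ in the open ball $B(x,\rho)$, and hence, taking any $\delta \in (0,\rho)$, the closed ball $C := \overline{B}(x,\delta)$ satisfies $I(C) = \inf_{y \in C} I(y) \geq \tfrac12 I(x) > 0$.

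Next I would apply the large deviations upper bound to the closed set $C$, which gives $\limsup_{n\to\infty} \tfrac1n \ln P_n(C) \leq -I(C) \leq -\tfrac12 I(x) < 0$, and therefore $P_n(C) \to 0$ as $n \to \infty$; in particular $P_{n_k}(C) \to 0$. On the other hand, weak convergence $P_{n_k} \to P$ and the open-set half of the portmanteau theorem give $P\bigl(B(x,\delta)\bigr) \leq \liminf_{k\to\infty} P_{n_k}\bigl(B(x,\delta)\bigr) \leq \liminf_{k\to\infty} P_{n_k}(C) = 0$, using $B(x,\delta) \subset C$. But $x \in \operatorname{supp}(P)$ means every open neighbourhood of $x$ has strictly positive $P$-measure, so $P\bigl(B(x,\delta)\bigr) > 0$, a contradiction. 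Hence every $x \in \operatorname{supp}(P)$ has $I(x) = 0$, which is the assertion. Here I use that $X$, being Polish, is separable, so that $\operatorname{supp}(P)$ is well defined and $P(\operatorname{supp}(P)) = 1$.

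The argument is essentially routine; the only points needing a little care are the directions of the two inequalities — the large deviations upper bound is stated for \emph{closed} sets, whereas the portmanteau inequality that passes to the limit in the useful direction is the one for \emph{open} sets, so one must nest $B(x,\delta) \subset \overline{B}(x,\delta)$ to exploit both — and the passage from lower semi-continuity of $I$ at $x$ to a uniform positive lower bound of $I$ on an entire closed ball about $x$, which forces shrinking the radius. No use is made of the compactness of the level sets of $I$.
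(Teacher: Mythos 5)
Your proof is correct and follows the same basic route as the paper's: apply the large deviations upper bound to a small closed ball on which $I$ is bounded below by a positive constant, deduce that $P_n$ of that ball tends to $0$, and combine this with the open-set half of the portmanteau theorem and the definition of support. The one genuine difference lies in how you obtain the positive lower bound on $I$ over the ball: you invoke lower semi-continuity directly at the point $x$ to push a positive value of $I$ to a whole neighbourhood, whereas the paper instead argues that $I$ attains its minimum on a closed ball disjoint from $I^{-1}\{0\}$ — a step that quietly relies on compactness of the level sets, since a closed ball in a general Polish space need not itself be compact. Your route avoids that entirely, and your closing remark that compactness of the level sets is never used is an accurate observation about your argument. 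The paper additionally opens with a paragraph establishing that $I^{-1}\{0\}$ is nonempty and compact and appeals to exponential tightness and Prokhorov's theorem; neither is needed for the stated containment, and you are right to omit them.
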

\begin{proof}
Let us first show that $I^{-1} \{ 0 \}$ is non-empty and closed. Since $I$ has compact level sets, it follows that $I^{-1} [0,c]$ are compact for $c > 0$, but possibly empty. If they are not empty, then $I^{-1} \{ 0 \} = \bigcap_{n=1}^\infty I^{-1} \left[ 0, \frac{1}{n}\right]$, and it follows directly that $I^{-1} \{ 0 \}$ is non-empty and compact. However, if $I^{-1} [0, c]$ is empty for some $c > 0$, observe that
\begin{align*}
0 = \lim_{n \to \infty} I (I^{-1} [0,n]) = \lim_{n \to \infty} \inf_{x \in I^{-1} [0,n]} I(x) =  \lim_{n \to \infty} \inf_{x \in I^{-1} (c,n]} I(x) > c > 0 ,
\end{align*}
which is a contradiction, and thus $I^{-1} [0,c]$ are non-empty for every $c > 0$, and subsequently $c = 0$. Note that the first line of the above proof by contradiction follows from the fact that $\{ P_n \}_{n = 1}^\infty$ satisfies a large deviations principle. 
\\
\\
Let $y \not \in I^{-1} \{ 0 \}$ be such that $\overline{B}(y, \delta)$ is disjoint from $I^{-1} \{ 0 \}$ for small enough $\delta > 0$. Note that
\begin{align*}
\limsup_{n \to \infty} \frac{1}{n} \ln P_n (\overline{B}(y, \delta)) \leq - I (\overline{B}(y, \delta)) < 0 .
\end{align*}
The last strict inequality follows since by lower semi-continuity $I$ attains its minimum on any non-empty compact set, and $I$ is strictly positive on the set $\overline{B}(y, \delta)$. It follows that
\begin{align*}
P_n (B (y, \delta)) \leq P_n (\overline{B}(y, \delta)) \leq e^{n \sup_{k \geq n} \frac{1}{n} \ln P_n (\overline{B}(y, \delta))}, 
\end{align*}
so that
\begin{align*}
\lim_{n \to \infty} P_n (B(y, \delta)) = 0 .
\end{align*}
Since the sequence of probability measures satisfies a large deviations principle, it is exponentially tight which implies that it is uniformly tight in the weak sense. Let $\{ P_{n_k} \}_{k = 1}^\infty$ be any weakly convergent subsequence with limiting probability measure $P$. Let $\overline{B} (y, \delta)$ be as before, by weak convergence, it follows that
\begin{align*}
P (B(y, \delta)) \leq \liminf_{k \to \infty} P_{n_k} (B(y, \delta)) = 0 .
\end{align*}
Since $y \not \in I^{-1} \{ 0 \}$ is arbitrary, it follows that
\begin{align*}
\left( I^{-1} \{ 0 \} \right)^c \subset (\operatorname{supp} (P))^c \iff \operatorname{supp} (P) \subset I^{-1} \{ 0 \}  .
\end{align*}
\end{proof}
\noindent
For the purposes of this paper, the most important corollary is the case where $I^{-1} \{ 0 \}$ consists of a single point.
\begin{corollary} Let $\{ P_n \}_{n=1}^\infty$ be a sequence of probability measures on $X$ satisfying a large deviations principle with rate function $I$ such that $I(x^*) = 0$ for exactly one $x^* \in X$.
\\
\\
It follows that
\begin{align*}
\lim_{n \to \infty} P_n = \delta_{x^*}
\end{align*}
weakly. 
\end{corollary}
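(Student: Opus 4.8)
The plan is to deduce this directly from \cref{thm:limiting points large deviations} together with a relative compactness argument. The hypothesis that $I$ vanishes at exactly one point means $I^{-1} \{ 0 \} = \{ x^* \}$, and the only Borel probability measure on $X$ whose support is contained in the singleton $\{ x^* \}$ is the Dirac mass $\delta_{x^*}$. Hence \cref{thm:limiting points large deviations} tells us that $\delta_{x^*}$ is the unique possible weak limit point of the sequence $\{ P_n \}_{n=1}^\infty$, i.e. $L(\{ P_n \}_{n=1}^\infty) \subset \{ \delta_{x^*} \}$.

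It then remains to upgrade "unique limit point" to "convergence". First I would recall, exactly as in the proof of \cref{thm:limiting points large deviations}, that a sequence of probability measures on a Polish space satisfying a large deviations principle, whose rate function by our convention has compact level sets, is exponentially tight and hence uniformly tight in the weak sense; by Prokhorov's theorem the sequence $\{ P_n \}_{n=1}^\infty$ is therefore relatively compact in the weak topology. Since $X$ is Polish, $\mathcal{P}(X)$ equipped with the weak topology is metrizable, for instance by the Lévy--Prokhorov metric, so a relatively compact sequence with a unique accumulation point must converge to that point. Combining these two observations gives $\lim_{n \to \infty} P_n = \delta_{x^*}$ weakly, as claimed.

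There is essentially no obstacle here: all the substance is carried by \cref{thm:limiting points large deviations}, and the only points requiring a moment's care are the passage from the large deviations principle to relative compactness, i.e. exponential tightness of an LDP with a good rate function on a Polish space, which is anyway already invoked in the proof of \cref{thm:limiting points large deviations}, and the elementary metric-space fact that a relatively compact sequence possessing a single limit point is convergent.
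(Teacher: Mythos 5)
Your argument is correct and matches the paper's own (unstated but indicated) proof: the paper explicitly describes this corollary as following from \cref{thm:limiting points large deviations} together with Prokhorov's theorem, which is precisely the combination you use. The only filled-in detail — that a relatively compact sequence in a metrizable space with a single accumulation point converges — is standard and correctly invoked.
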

\noindent
The proof of this statement is an application of the previous theorem in combination with Prokhorov's theorem. 
\\
\\
Another important corollary is the following result concerning the case where $I^{-1} \{ 0 \}$ consists of finitely many points.
\begin{corollary} \label{thm:weight split}Let $\{ P_n \}_{n=1}^\infty$ be a sequence of probability measures on $X$ satisfying a large deviations principle with rate function $I$ such that the set $M^* := I^{-1} \{ 0 \}$ is finite.
\\
\\
It follows that
\begin{align*}
\int_X P_n (dx) \ f(x) = \sum_{x^* \in M^*} \frac{P_n (\overline{B}(x^*, \delta))}{P_n (A_\delta)} f (x^*) + o (1) 
\end{align*}
for any $0 < \delta < \min_{x^*, y^* \in M^*} d(x^*, y^*)$.
\end{corollary}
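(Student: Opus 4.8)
\noindent
The plan is to deduce the statement from \cref{thm:limiting points large deviations} by a routine ``every subsequence has a further convergent subsequence'' argument, exploiting that every weak limit point of $\{P_n\}_{n=1}^\infty$ is a purely atomic measure supported on the finite set $M^* = I^{-1}\{0\}$ and that the closed balls appearing in the statement are continuity sets of every such limit.

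First I would record that, as already observed in the proof of \cref{thm:limiting points large deviations}, the large deviations principle forces $\{P_n\}_{n=1}^\infty$ to be exponentially tight, hence uniformly tight in the weak sense, so that by Prokhorov's theorem it is relatively compact for the topology of weak convergence on $\mathcal{P}(X)$; that same result also shows that $M^*$ is non-empty and (here) finite and that any weak limit point $P$ of the sequence satisfies $\operatorname{supp}(P) \subset M^*$, which since $M^*$ is finite means $P = \sum_{x^* \in M^*} P(\{x^*\})\,\delta_{x^*}$.

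Next, fix $\delta$ with $0 < \delta < \min_{x^*,y^* \in M^*} d(x^*,y^*)$ (the minimum over the empty index set being $+\infty$ when $M^*$ is a singleton) and set $A_\delta := \bigcup_{x^* \in M^*} \overline{B}(x^*,\delta)$. For such $\delta$ the topological boundary of $\overline{B}(x^*,\delta)$ is contained in the sphere of radius $\delta$ about $x^*$, which is disjoint from $M^*$ (the centre lies in the interior, and every other element $y^*$ of $M^*$ has $d(x^*,y^*) > \delta$); hence $\overline{B}(x^*,\delta)$ and $A_\delta$ are continuity sets of every weak limit point $P$, with $P(\overline{B}(x^*,\delta)) = P(\{x^*\})$ and $P(A_\delta) = P(M^*) = 1$. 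Now run the subsequence argument: given any subsequence of $\mathbb{N}$, relative compactness yields a further subsequence $\{n_k\}_{k=1}^\infty$ with $P_{n_k} \to P$ weakly for some $P$ as above; then $\int_X P_{n_k}(dx)\,f(x) \to \sum_{x^*\in M^*} P(\{x^*\}) f(x^*)$ by definition of weak convergence ($f$ bounded continuous), while $P_{n_k}(\overline{B}(x^*,\delta)) \to P(\{x^*\})$ and $P_{n_k}(A_\delta)\to 1$ by the portmanteau theorem, so the finite sum $\sum_{x^*\in M^*}\frac{P_{n_k}(\overline{B}(x^*,\delta))}{P_{n_k}(A_\delta)} f(x^*)$ converges to the same limit $\sum_{x^*\in M^*} P(\{x^*\}) f(x^*)$. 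Thus the difference of the two sides of the claimed identity tends to $0$ along $\{n_k\}$, and since every subsequence of $\mathbb{N}$ admits such a sub-subsequence, this difference is $o(1)$ along the whole sequence.

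I do not expect a genuine obstacle: the result is a soft corollary of \cref{thm:limiting points large deviations} together with tightness. The only points meriting a little care are that the normalising factor $P_n(A_\delta)$ is asymptotically harmless — which I handle through the continuity-set property $P_n(A_\delta)\to 1$ rather than an exponential upper bound, though either works — and that the conclusion must hold for each fixed admissible $\delta$ and not merely after letting $\delta\to 0$; the subsequence argument delivers precisely this, because the atomic structure of the limit makes $P(\overline{B}(x^*,\delta))$ equal to $P(\{x^*\})$ for every admissible $\delta$.
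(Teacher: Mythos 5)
Your proof is correct, and it takes a genuinely different route from the paper's. The paper fixes the same $\delta$ and writes $P_n$ as a convex mixture
\begin{align*}
P_n = P_n(A_\delta) \sum_{x^* \in M^*} \frac{P_n(\overline{B}(x^*,\delta))}{P_n(A_\delta)}\, P_n|_{\overline{B}(x^*,\delta)} + P_n(A_\delta^c)\, P_n|_{A_\delta^c},
\end{align*}
uses the LDP upper bound to kill the $A_\delta^c$ piece (so $P_n(A_\delta)\to 1$), and then invokes the preceding single-zero corollary to conclude $P_n|_{\overline{B}(x^*,\delta)} \to \delta_{x^*}$ weakly for each $x^*$. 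That last step implicitly presumes the conditional measures on each ball themselves satisfy a large deviations principle whose rate function vanishes only at $x^*$ — a standard but unstated reduction. You avoid this altogether: exponential tightness gives Prokhorov relative compactness, \cref{thm:limiting points large deviations} classifies every weak limit point as an atomic measure on the finite set $M^*$, the admissible $\delta$ makes $\overline{B}(x^*,\delta)$ and $A_\delta$ continuity sets of every such limit, and the "every subsequence has a further convergent subsequence" device closes the argument. Both proofs establish the same claim for bounded continuous $f$, but yours is self-contained given \cref{thm:limiting points large deviations}, while the paper's is more constructive in displaying the mixture decomposition — a form it then reuses in the heuristics around the weight ratios. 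Either is acceptable; yours fills a small gap the paper leaves open.
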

\begin{proof}
Let $\delta < \min_{x^*, y^* \in M^*} d(x^*,y^*)$. We decompose $X$ as follows
\begin{align*}
X = A_\delta \cup  A^c_\delta ,
\end{align*}
where
\begin{align*}
A_\delta := \bigcup_{x^* \in M^*} \overline{B}(x^*, \delta) .
\end{align*}
Using this decomposition, we have
\begin{align*}
P_n = P_n (A_\delta) \sum_{x^* \in M^*} \frac{P_n( \overline{B}(x^*, \delta))}{P_n(A_\delta)} P_n |_{\overline{B}(x^*, \delta)} + P_n (A_\delta^c) P_n |_{A_\delta^c} . 
\end{align*}
Using the large deviations principle, it follows that
\begin{align*}
\lim_{n \to \infty} P_n (A_\delta) = 1 ,
\end{align*}
and, by using the previous corollary, it follows that
\begin{align*}
\lim_{n \to \infty} P_n|_{\overline{B}(x^*, \delta)} = \delta_{x^*}
\end{align*}
weakly, where $x^* \in M^*$. Using these limits together, we obtain
\begin{align*}
\lim_{n \to \infty} \left| \int_X P_n (dx) \ f(x) - \sum_{x^* \in M^*} \frac{P_n (\overline{B}(x^*, \delta))}{P_n (A_\delta)} f (x^*) \right| = 0 .
\end{align*}
\end{proof}
\subsection{Infinite volume entropies and states} \label{sec:Infinite volume entropies and states}
Next, we prove the regularity and boundedness of the finite volume entropies.
\begin{proof}[Proof of \cref{thm:microcanonical entropy regularity}]
From \cref{def:microcanonical partition function}, we see that the microcanonical partition function is a homogeneous bivariate polynomial of degree $n-2$. Let us introduce the change of coordinates $z : \mathcal{A} \to (0, \infty)^2$ given by
\begin{align*}
z (M,N) := (x(M,N), y (M,N)) = \left( \frac{N + M}{2}, \frac{N - M}{2} \right) .
\end{align*}
It follows that $Z_n (M,N) = \frac{1}{2} P_n (z (M,N))$, where $P_n : (0, \infty)^2 \to (0, \infty)$ is given by
\begin{align*}
P_n (x,y) := \sum_{k=1}^{n - 1} {n \choose k} \frac{x^k}{(k-1)!} \frac{y^{n - k - 1}}{(n - k - 1)!} .
\end{align*}
Using the properties of the binomial coefficient, we can manipulate $P_n$ into the following form 
\begin{align*}
P_n (x,y) = n (n - 1) \sum_{k=0}^{n - 2} {n - 2 \choose k} \frac{x^{k}}{(k+1)!} \frac{y^{n - 2 - k}}{(n - 1  - k)!} .
\end{align*}
Let us denote the coefficients of the above manipulated polynomial by $\{ c_k \}_{k=0}^{n - 2}$. For $k \in \mathbb{N}$, using the simple relation
\begin{align*}
(k + 1)! (k - 1)! > k!, 
\end{align*}
it follows that
\begin{align*}
\frac{c_k^2}{{n - 2 \choose k}^2} > \frac{c_{k+1}}{{n - 2 \choose k + 1}} \frac{c_{k - 1}}{{n - 2 \choose k - 1}}
\end{align*}
for $0 < k < n - 2$. Using \cite[Example 2.3]{Braenden2020}, this implies that the sequence of coefficients $\{ c_k \}_{k=0}^{n - 2}$ is ultra log-concave, which yields that $P_n$ is Lorentzian, which shows that $P_n$ is log-concave, see \cite[Theorem 2.30]{Braenden2020} and the definition of completely log-concave polynomials due to \cite{Anari2021}. Since $Z_n$ is the composition of an invertible linear map, simple scaling by a factor of $2$, and a log-concave polynomial it follows that $Z_n$ is log-concave.
\\
\\
For boundedness, by \cref{thm:relative entropy properties}, we have $\mathcal{H}_{n - 2} (\nu_n (m, \rho) || \eta_n (\beta, \mu)) \geq 0$, from which it follows that
\begin{align*}
s_n (m, \rho) \leq \frac{n}{n - 2} s_n (m, \rho) \leq f(\beta, \mu) + \beta m + \mu \rho ,
\end{align*}
which shows that the family of entropies is pointwise bounded above. As for a lower bound, it is enough to use the following trivial lower bound
\begin{align*}
Z_n (m n, \rho n) \geq \frac{1}{2} \frac{n!}{(n-1)!} \frac{\left( \frac{\rho n + m n}{2}\right)^{n - 2}}{(n - 2)!}, 
\end{align*}
from which we obtain
\begin{align*}
\frac{1}{n} \ln Z_n ( m n, \rho n) \geq  \frac{1}{n} \ln \frac{1}{2} + \frac{n - 2}{n} \ln \frac{\rho + m}{2} + \frac{1}{n} \ln \frac{n^{n-1}}{(n-2)!} .
\end{align*}
It follows that
\begin{align*}
\liminf_{n \to \infty} \frac{1}{n} \ln Z_n (m n, \rho n) \geq \ln \frac{\rho + m}{2} + 1 ,
\end{align*}
as desired.
\end{proof}
\noindent
We continue by consider the properties of the limiting entropy $f(\beta, \mu)$.
\begin{proof}[Proof of \cref{thm:limiting microcanonical entropy}] First, we observe that
\begin{align*}
f(\beta, \mu) = \ln \int_{-\infty}^\infty d\phi \ e^{- \beta \phi - \mu |\phi|} .
\end{align*}
From this form, it is apparent that $f$ is strictly convex on $\mathcal{A}$ and thus is a proper convex function on $\mathbb{R}^2$. For lower semi-continuity, if $(\beta,\mu) \in \mathbb{R}^2 \setminus \overline{\mathcal{A}}$, then $f$ is lower semi-continuous for trivial reasons, in addition, since $f$ is continuous on $\mathcal{A}$, it is also necessarily lower semi-continuous there. For the points in $(\beta, \mu) \in \overline{\partial \mathcal{A}}$, it is clear that these points are of the form $(\pm \mu', \mu')$ for $\mu' \geq 0$. It is easy to check that $\lim_{(\beta, \mu) \to (\pm \mu', \mu)} f(\beta, \mu) = \infty$, since $f(\beta, \mu)$ is either equal to infinity, or it is increasing without bound for points inside $\mathcal{A}$ approaching $(\pm \mu', \mu')$. 
\\
\\
As for the other properties, the non-empty interior of the domain of finiteness of $f$ is given by $\mathcal{A}$. The mapping $f$ is differentiable in $\mathcal{A}$. For steepness, which is the third property of being essentially smooth, observe that
\begin{align*}
|| \nabla [f] (\beta, \mu)|| = \frac{1}{\frac{1}{\mu + \beta} + \frac{1}{\mu - \beta}} \sqrt{\frac{2}{(\mu + \beta)^4} + \frac{2}{(\mu - \beta)^4}} .
\end{align*}
Since all norms on $\mathbb{R}^2$ are equivalent, it follows that there exists a constant $C > 0$ such that
\begin{align*}
\left( \frac{1}{(\mu + \beta)^4} + \frac{1}{(\mu - \beta)^4} \right)^\frac{1}{4} \geq C \left( \frac{1}{\mu + \beta} + \frac{1}{\mu - \beta} \right) .
\end{align*}
Using this estimate, it follows that
\begin{align*}
|| \nabla [f] (\beta, \mu)|| \geq \sqrt{2} C \left( \frac{1}{\mu + \beta} + \frac{1}{\mu - \beta} \right) .
\end{align*}
From this estimate it is now clear that if $(\beta, \mu) \to (\pm \mu', \mu')$ for $\mu' \geq 0$ for points inside $\mathcal{A}$, then clearly $\lim_{(\beta, \mu) \to (\pm \mu', \mu')} || \nabla [f] (\beta, \mu)|| = \infty$, which shows steepness.
\\
\\
In summary, we find that $f$ is a proper convex lower semi-continuous function of Legendre type.
\\
\\
For the next few computational steps, it is useful to introduce the change of variables $g : \mathbb{R}^2 \to \mathbb{R}^2$ given by $(\beta, \mu) \mapsto g (\beta, \mu) = (\mu + \beta, \mu - \beta)$ so that for $(\beta, \mu) \in \mathcal{A}$, we have
\begin{align*}
f (\beta, \mu) = \ln \left( \frac{1}{g_1 (\beta, \mu)}  + \frac{1}{g_2 (\beta, \mu)}\right) .
\end{align*}
We can now equivalently consider the function $f' : (0, \infty)^2 \to \mathbb{R}$ given by
\begin{align*}
f' (g_1, g_2) = \ln \left( \frac{1}{g_1} + \frac{1}{g_2} \right) ,
\end{align*}
so that $f \circ g^{-1} = f'$. For the function $f'$ it is easy to verify that
\begin{align*}
- \nabla [f'] (g_1, g_2) = \left( \frac{g_2}{g_1 (g_2 + g_1)}, \ \frac{g_1}{g_2 (g_2 + g_1)} \right) ,
\end{align*}
and the inverse map can be computed from
\begin{align*}
&(0, \infty)^2 \ni (a,b) = - \nabla [f'] (g_1, g_2) \\ &\iff (g_1, g_2) = \left( \frac{1}{\sqrt{a} (\sqrt{a} + \sqrt{b})}, \ \frac{1}{\sqrt{b} (\sqrt{a} + \sqrt{b})} \right) = (- \nabla [f'])^{-1} (a,b)  .
\end{align*}
This shows that $(- \nabla [f']) (0, \infty)^2 = (0, \infty)^2$. Finally, for $(a,b) \in (0, \infty)^2$, one can observe that
\begin{align*}
\inf_{(g_1,g_2) \in (0,\infty)^2} \{ a g_1 + b g_2  + f' (g_1, g_2) \} &= (- \nabla [f'])^{-1}_1 (a,b) a + (- \nabla [f'])^{-1}_2 (a,b) b + (f \circ (- \nabla [f']) \left( a, b \right) \\
&= 1 + \ln \left( (\sqrt{a} + \sqrt{b})^2 \right) .
\end{align*}
To return to the function $f$, we have
\begin{align*}
(- \nabla [f]) \mathcal{A} = (D[g])^T ((- \nabla [f']) g (\mathcal{A})) =  (D[g])^T ((- \nabla [f']) (0, \infty)^2)  = (D[g])^T (0, \infty)^2 = \mathcal{A} ,
\end{align*}
where $D[g]$ is the derivative of the map $g$. We can also compute the following
\begin{align*}
\inf_{(\beta, \mu) \in \mathbb{R}^2} \{ \beta m + \mu \rho + f (\beta, \mu) \} &= \inf_{(\beta, \mu) \in \mathcal{A}} \{ \beta m + \mu \rho + f (\beta, \mu) \} \\
&= \inf_{(g_1, g_2) \in g (\mathcal{A}) = (0, \infty)^2} \{  g^{-1}_1 (g_1, g_2) m + g^{-1}_2 (g_1, g_2) \rho +( f \circ g^{-1}) (g_1, g_2) \} \\
&= \inf_{(g_1, g_2) \in (0, \infty)^2} \left\{  \frac{g_1 - g_2}{2} m + \frac{g_1 + g_2}{2} \rho +f' (g_1, g_2) \right\}  \\
&= \inf_{(g_1, g_2) \in (0, \infty)^2} \left\{ \frac{\rho + m}{2} g_1 + \frac{\rho - m}{2} g_2 + f' (g_1, g_2) \right\}  \\
&= 1 + \ln \left( \left( \sqrt{\frac{\rho + m}{2}} + \sqrt{\frac{\rho - m}{2}}\right)^2 \right) .
\end{align*}
To finish, note that we can simply compute the gradient
\begin{align*}
- \nabla [f] (\beta, \mu) = \left( - \frac{2 \beta}{\mu^2 - \beta^2}, \ \frac{\mu^2 + \beta^2}{\mu (\mu^2 - \beta^2)}\right) ,
\end{align*}
but its inverse map is simpler to solve from the composite function $f'$. Doing so, we obtain
\begin{align*}
\mathcal{A} \ni (m, \rho) = - \nabla [f] (\beta, \mu) \iff (\beta, \mu) = \left(  - \frac{\rho}{m} \frac{1}{\sqrt{\rho^2 - m^2}} + \frac{1}{m}, \ \frac{1}{\sqrt{\rho^2 - m^2}}\right) = (- \nabla [f])^{-1} (m, \rho).
\end{align*} 
Compiling together all of these results, we find that $f$ is a proper convex lower semi-continuous function of Legendre type which satisfies $ (- \nabla [f]) \mathcal{A} = \mathcal{A}$, and, for $(m, \rho) \in \mathcal{A}$, we have
\begin{align*}
\inf_{(\beta, \mu) \in \mathbb{R}^2} \{ \beta m + \mu \rho + f (\beta, \mu) \} &= \inf_{(\beta, \mu) \in \mathcal{A}} \{ \beta m + \mu \rho + f (\beta, \mu) \} \\
&= \beta (m, \rho) m + \mu (m, \rho) \rho + f (\beta (m, \rho), \mu (m, \rho)) \\
&= 1 + \ln \left( \left( \sqrt{\frac{\rho + m}{2}} + \sqrt{\frac{\rho - m}{2}}\right)^2 \right) ,
\end{align*}
where
\begin{align*}
(\beta (m, \rho), \mu (m, \rho)) = (- \nabla [f])^{-1} (m, \rho) = \left(  - \frac{\rho}{m} \frac{1}{\sqrt{\rho^2 - m^2}} + \frac{1}{m}, \ \frac{1}{\sqrt{\rho^2 - m^2}}\right) .
\end{align*}
\end{proof}
\noindent
We begin with the proof of the half-constrained ensemble limiting entropy.
\begin{proof}[Proof of \cref{thm:half-constrained ensemble free energy}] Fix $\beta \in \mathbb{R}$, and consider the mapping $Q_n (g^\beta, \cdot) : (0, \infty) \to \mathbb{R}$ given by
\begin{align*}
Q_n (g^{\beta}, \rho) := \int_{\mathbb{R}^n} d \phi \ e^{- \beta M_n (\phi)} \delta (N_n (\phi) - \rho n) ,
\end{align*}
which, like \cref{def:partition function rigorous}, is to be understood as 
\begin{align*}
Q_n (g^\beta, \rho) = e^{- \beta \rho n} Z_n (\rho n, \rho n)  + e^{\beta \rho n} Z_n (- \rho n, \rho n) + \int_{-\rho}^\rho dm \ n e^{- \beta m n} Z_n (m n, \rho n) .
\end{align*}
By direct computation, using \cref{def:microcanonical partition function}, it follows that
\begin{align*}
\lim_{n \to \infty} \frac{1}{n} \ln \left(  e^{- \beta \rho n} Z_n (\rho n, \rho n) \right) = - \beta \rho + \ln \rho + 1, \ \lim_{n \to \infty} \frac{1}{n} \ln \left(  e^{ \beta \rho n} Z_n (- \rho n, \rho n) \right) = \beta \rho + \ln \rho + 1 .
\end{align*}
As for the mapping
\begin{align*}
\rho \mapsto \int_{-\rho}^\rho dm \ n e^{- \beta m n} Z_n (m n, \rho n) = \int_{\mathbb{R}} dm \ n \mathbbm{1}(|m| < \rho) e^{- \beta m n} Z_n (m n, \rho n) ,
\end{align*}
it is enough to notice that the individual mappings in the integrand
\begin{align*}
\mathbb{R}^2 \ni (m, \rho) \mapsto \left( \mathbbm{1}( |m| < \rho), e^{- \beta m n}, \ Z_n (m n, \rho n) \right) 
\end{align*}
are log-concave functions. To be more precise, the indicator function is the indicator of a convex set and is thus log-concave, the exponential function is trivially log-concave by direct computation, and, finally, the microcanonical partition function, which is to be understood as the microcanonical partition function on $\mathcal{A}$ extended beyond this set by setting its value to $0$, is log-concave by  \cref{thm:microcanonical entropy regularity}. It follows that that the mapping 
\begin{align*}
\rho \mapsto \int_{-\rho}^\rho dm \ n e^{- \beta m n} Z_n (m n, \rho n)
\end{align*}
is log-concave by the Prékopa–Leindler inequality or Prékopa's theorem, see \cite[Section 9]{Gardner2002}, since it is the marginal of a log-concave function. 
\\
\\
For pointwise uniform boundedness, we begin by observing that
\begin{align*}
 e^{- |\beta| \rho n} \int_{- \rho}^\rho dm \ n Z_n (m n, \rho n) \leq \int_{-\rho}^\rho dm \ n e^{- \beta m n} Z_n (m n, \rho n)  \leq e^{|\beta| \rho n} \int_{- \rho}^\rho dm \ n Z_n (m n, \rho n)
\end{align*}
and
\begin{align*}
\int_{- \rho}^\rho dm \ n Z_n (m n, \rho n) = \rho^{n-1} n^{n - 1}  \int_{- 1}^1 dm \ Z_n (m, 1) .
\end{align*}
We will use the beta function $B(z_1,z_2)$ given by
\begin{align*}
B(z_1,z_2) := \int_{0}^1 dt \ t^{z_1 - 1} (1 - t)^{z_2 - 1}
\end{align*}
for $\operatorname{Re}(z_1), \operatorname{Re}(z_2)  > 0$. By a change of variables, one can see that
\begin{align*}
B(z_1,z_2) = \frac{1}{2} \int_{- 1}^{1} dt \ \left( \frac{1 + t}{2} \right)^{z_1 - 1} \left( \frac{1  - t}{2} \right)^{z_2 - 1} .
\end{align*}
For integer values, we have the following identity
\begin{align*}
B(m,n) = \frac{(m-1)! (n-1)!}{(m + n - 1)!}
\end{align*}
from which it follows that
\begin{align*}
\int_{-1}^1 dm \ Z_n (m,1) = \sum_{k=1}^{n - 1} {n \choose k} \frac{B(k, n - k)}{ (k-1)! (n - k - 1)!} = \frac{1}{(n-1)!} \sum_{k=1}^{n-1} {n \choose k} = \frac{2^{n} - 2}{(n-1)!} .
\end{align*}
In summary, we have
\begin{align*}
e^{- |\beta| \rho n} \rho^{n-1} n^{n - 1} \frac{2^n - 2}{(n-1)!} \leq \int_{-\rho}^\rho dm \ n e^{- \beta m n} Z_n (m n, \rho n) \leq  e^{|\beta| \rho n} \rho^{n-1} n^{n - 1} \frac{2^n - 2}{(n-1)!} .
\end{align*}
Computing the limits, it follows that
\begin{align*}
- \infty < \liminf_{n \to \infty} \frac{1}{n} \ln \int_{-\rho}^\rho dm \ n e^{- \beta m n} Z_n (m n, \rho n) \leq \limsup_{n \to \infty} \frac{1}{n} \ln \int_{-\rho}^\rho dm \ n e^{- \beta m n} Z_n (m n, \rho n) < \infty ,
\end{align*}
from which the uniform pointwise boundedness follows.
\\
\\
For $\mu > |\beta|$, we can directly compute that
\begin{align*}
\int_0^\infty d \rho \ n e^{- \mu \rho n} \int_{-\rho}^\rho dm \ n e^{- \beta m n} Z_n (m n, \rho n) = \left( \frac{1}{\mu + \beta} + \frac{1}{\mu - \beta} \right)^n - \left( \frac{1}{\mu + \beta} \right)^n-  \left( \frac{1}{\mu - \beta} \right)^n .
\end{align*} 
For any other value of $\mu$, it is clear that the above integral is infinite. It follows that the limit and subsequent mapping given by
\begin{align*}
\mu \mapsto \lim_{n \to \infty} \frac{1}{n} \ln \int_0^\infty d \rho n e^{- \mu \rho n} \int_{-\rho}^\rho dm \ n e^{- \beta m n} Z_n (m n, \rho n) = f(\beta, \mu),
\end{align*}
exists and has a domain of finiteness given by the half-infinite interval $(|\beta|, \infty)$. By using the properties of the full map $(\beta, \mu) \mapsto f (\beta, \mu)$, already verified and computed in \cref{thm:free energy properties}, one can verify that the mapping $\mu \mapsto f (\beta, \mu)$ for fixed $\beta$ is a proper convex lower semi-continuous function of Legendre type that satisfies $- D[f(\beta, \cdot)] = (0, \infty)$. By \cref{thm:entropy convergence}, for any $\rho > 0$, it follows that
\begin{align*}
\lim_{n \to \infty} \frac{1}{n} \ln \int_{-\rho}^\rho dm \ n e^{- \beta m n} Z_n (m n, \rho n) = \inf_{\mu > |\beta|} \{ \mu \rho + f (\beta, \mu) \} .
\end{align*}
To continue, by \cref{thm:free energy properties}, we have
\begin{align*}
f(\beta, \mu) = \inf_{(m, \rho) \in \mathcal{A}} \{ \beta m + \mu \rho - s (m, \rho) \} = \inf_{\rho > 0} \left\{ \mu \rho + \inf_{|m| < \rho} \{ \beta m - s (m, \rho) \}\right\} ,
\end{align*}
so that
\begin{align*}
\inf_{\mu > |\beta|} \{ \mu \rho + f (\beta, \mu) \} = -\inf_{|m| < \rho} \{ \beta m - s (m, \rho) \} = \sup_{|m| < 1} \{ s (m, \rho) - \beta m \} .
\end{align*}
For the rate function, the scaled logarithmic moment generating function $\Lambda : \mathbb{R} \to [- \infty, \infty]$ of a sequence of random variables with distributions given by $\left\{ \kappa_n^\beta \right\}_{n \in \mathbb{N}}$ is given by
\begin{align*}
\Lambda(t) := \lim_{n \to \infty} \frac{1}{n} \ln \frac{Q_n (\beta - t)}{Q_n (\beta)} &= \sup_{|m| < 1} \{ s (m, 1) - (\beta - t) m \} - \sup_{|m| < 1} \{ s (m, 1) - \beta m \} \\
&= \sup_{|m| < 1} \{ tm - (-(s (m, 1) - \beta m)) \} - \sup_{|m| < 1} \{ s (m, 1) - \beta m \} .
\end{align*}
We can identify the first term on the last line as the convex conjugate of the restriction of a proper convex lower semi-continuous function of Legendre type with an interior of the domain of finiteness given by $(-1,1)$. From the form of the function $s(m,1)$, for $m \in (-1,1)$, we immediately see that
\begin{align*}
\lim_{m \to {\pm 1}^\mp} s (m,1) = 1 .
\end{align*}
Defining $s (\pm 1, 1) = 1$ yields a continuous extension of $s(m,1)$ from $(-1,1)$ to $[-1,1]$, and we will consider it so from now on. The extended mapping given by
\begin{align*}
\mathbb{R} \ni m \mapsto \begin{cases} s (m,1), &\ m \in [-1,1] , \\
-\infty, &\ m \not \in [-1,1] ,    \end{cases}
\end{align*}
is upper semi-continuous, and we will consider this the redefinition of $s (m,1)$ to be understood now as not necessarily finite function on $\mathbb{R}$. Compiling all of this together, it follows that the mapping $\mathbb{R} \ni m \mapsto - (s (m, 1) - \beta m)$ defines a proper convex lower semi-continuous function of Legendre type, and thus the convex conjugate is involutive from which it follows that
\begin{align*}
\Lambda^* (m) = \sup_{|m| < 1} \{ s (m,1) - \beta m\} - (s (m,1) - \beta m) ,
\end{align*}
which is the rate function of $\{ \kappa_n^\beta \}_{n \in \mathbb{N}}$.
\end{proof}
\noindent
We prove the proof of the limit point result.
\begin{proof}[Proof of \cref{thm:partial limiting point}]
Using \cref{thm:limiting points large deviations}, let $\{ \kappa^g_{n_k}\}_{k \in \mathbb{N}}$ be a weakly convergent subsequence with a limit $\kappa$. Since $M^* (\psi^g)$ is a compact subset of $(-1,1)$, it follows that there exists $a := \min M^* (\psi^g)$ and $b := \max M^* (\psi^g)$. There exists $\delta > 0$ such that $\operatorname{supp} (\kappa) \subset M^* (\psi^g) \subset [a - \delta, b + \delta] \subset (-1,1)$. Since $\operatorname{supp} (\kappa) \subset [a - \delta, b + \delta]$, we deduce that $\kappa ([a - \delta, b + \delta]) = 1$, and, since $\partial [a - \delta, b + \delta] \cap \operatorname{supp} (\mu) \subset \{ a - \delta, b + \delta\} \cap M^* (\psi^g) = \emptyset$, we see that $\kappa (\partial [a - \delta, b + \delta]) = 0$. It follows that $[a - \delta, b + \delta] \subset (-1,1)$ is a continuity set of $\kappa$, and we can apply \cref{thm:general convergence} along this subsequence with \cref{thm:locally uniform convergence expectations} to obtain the result. 
\end{proof}
\subsection{Asymptotics of the weights} \label{sec:Asymptotics of the weights}
\noindent 
We first establish the Laplace-type representation of the microcanonical partition function.
\begin{proof}[Proof of \cref{thm:microcanonical partition function integral}]
The microcanonical partition function can be written as
\begin{align*}
Z_n (m n, \rho n) = \frac{2 n^{n - 2} n!}{(\rho^2 - m^2) n^2} \sum_{k=1}^{n - 1} \frac{\left(\frac{\rho + m}{2} \right)^k}{(k-1)! k!} \frac{\left(\frac{\rho - m}{2} \right)^{n-k}}{(n - k - 1)! (n - k)!} ,
\end{align*}
which one can recognize as the convolution of two sequences with some factors in front. We consider the generating function $G : \mathbb{C} \to \mathbb{C}$ given by
\begin{align*}
G(z) &:= \sum_{n=2}^\infty \frac{(\rho^2 - m^2) n^2}{2 n^{n - 2} n!} Z_n ( m n, \rho n) \left( \frac{z^2}{4} \right)^{n} \\ &= \left( \sum_{n=2}^\infty \frac{\left( \frac{1}{4} \left( \sqrt{\frac{\rho + m}{2}} z \right)^2 \right)^{n}}{n!(n-1)!}\right) \left( \sum_{n=2}^\infty \frac{\left( \frac{1}{4} \left( \sqrt{\frac{\rho - m}{2}} z \right)^2 \right)^{n}}{n!(n-1)!}\right) .
\end{align*} 
One can verify that the convolution yields a Cauchy product, and that the power series on the right define entire functions with absolutely convergent power series. We have the standard relation between the derivatives of $G$ and its power series coefficients
\begin{align*}
\frac{G^{(2n)} (0)}{(2n)!} = \frac{(\rho^2 - m^2) n^2}{2 n^{n - 2} n! 4^{n}} Z_n ( m n, \rho n) \iff Z_n ( m n, \rho n) = \frac{2^{2n+1} n^{n - 2} n!}{(\rho^2 - m^2) n^2} \frac{G^{(2n)} (0)}{(2n)!} .
\end{align*} 
Next, using the modified Bessel function of the first kind $I_\nu (z)$ given by
\begin{align*}
I_\nu (z) := \left( \frac{1}{2} z \right)^\nu \sum_{n=0}^\infty \frac{\left( \frac{z^2}{4} \right)^n}{n! \Gamma (\nu + n + 1)} ,
\end{align*}
where $\nu \in \mathbb{Z}$, and we have
\begin{align*}
G(z) = \frac{1}{4} \sqrt{\frac{\rho^2 - m^2}{4}} z^2 I_{-1} \left( \sqrt{\frac{\rho + m}{2}} z\right) I_{-1} \left( \sqrt{\frac{\rho - m}{2}} z\right) .
\end{align*}
Using the integral representation, see \cite[Chapter 9]{Abramowitz1974}, given by
\begin{align*}
I_{\nu} (z) := \frac{1}{\pi} \int_0^\pi d \theta \ \cos(\nu \theta) e^{z \cos \theta}, 
\end{align*}
we see that
\begin{align*}
G(z) = \frac{1}{4} \sqrt{\frac{\rho^2 - m^2}{4}} z^2 \frac{1}{\pi^2} \int_0^\pi d \theta_1 \int_0^\pi d \theta_2 \ \cos \theta_1 \cos \theta_2 e^{z \left( \sqrt{\frac{\rho + m}{2}} \cos \theta_1 + \sqrt{\frac{\rho - m}{2}} \cos \theta_2 \right)} .
\end{align*}
Taking derivatives, using the general Leibniz rule, we obtain
\begin{align*}
&G^{(2n)} (0) \\ &= \frac{1}{2} \sqrt{\frac{\rho^2 - m^2}{4}}  {2 n \choose 2} \frac{1}{\pi^2} \int_0^\pi d \theta_1 \int_0^\pi d \theta_2 \ \\ &\times \cos \theta_1 \cos \theta_2 \left(  \sqrt{\frac{\rho + m}{2}} \cos \theta_1 + \sqrt{\frac{\rho - m}{2}} \cos \theta_2 \right)^{2 n - 2} ,
\end{align*}
from which it follows that
\begin{align*}
&Z_n (m n, \rho n)  \\ &= \frac{2^{2n - 1} n^{n - 2} n!}{(2n)! \sqrt{\rho^2 - m^2} n^2} {2n \choose 2}  \frac{1}{\pi^2} \int_0^\pi d \theta_1 \int_0^\pi d \theta_2 \ \\ &\times \cos \theta_1 \cos \theta_2 \left(  \sqrt{\frac{\rho + m}{2}} \cos \theta_1 + \sqrt{\frac{\rho - m}{2}} \cos \theta_2 \right)^{2 n - 2}  .
\end{align*}
By using the given from of the overloaded $s$ function and simplifying, we obtain the desired representation.
\end{proof}
\noindent
We present the proof of the local asymptotics of the overloaded ${\psi^g}$ function.
\begin{proof}
By computing the critical points of the overloaded ${\psi^g}$ function, we see that there is precisely one critical point in the given set in the assumptions, and it is given by $(m^*,0,0)$. For this particular critical point, it is easy to see that any odd partial derivative with respect to either $\theta_1$ or $\theta_2$ is vanishing.
\\
\\
By developing ${\psi^g}$ to second order in $(\theta_1, \theta_2)$, and $(2k)$:th order in $m$, it follows that
\begin{align*}
{\psi^g} (m^* + m, \theta_1, \theta_2) &= {\psi^g} (m^*) + \frac{1}{2} \partial_2^2 [{\psi^g}] (m^*, 0, 0) \theta_1^2 + \frac{1}{2} \partial_{3}^2 [{\psi^g}] (m^*, 0, 0) \theta_2^2 + \frac{1}{(2k)!} \partial^{2k}[\psi^g] (m^*) m^{2k}\\
&+ \sum_{|\alpha| = 3, \ \alpha_1 \not \in \{ 2, 3 \}} R_\alpha (m, \theta_1, \theta_2) (m, \theta_1, \theta_2)^\alpha + R_{(2k + 1,0,0)} (m, \theta_1, \theta_2) m^{2k + 1} ,
\end{align*}
where
\begin{align*}
R_\alpha (m, \theta_1, \theta_2) = \frac{|\alpha|}{\alpha!} \int_0^1 dt \ (1 - t)^{|\alpha| - 1} \partial_\alpha [{\psi^g}] ((m^*, 0,0) + t (m, \theta_1, \theta_2)) .
\end{align*}
\end{proof}
\noindent
We can now prove the full Laplace method for the mixture measures.
\begin{proof} Let us first remark that in the following proof, we will frequently use the statement for small enough $\delta > 0$ something holds. In the context of this proof, we repeat this to imply that there is a series of finite choice of $\delta > 0$ small enough such that all the conditions required will hold. In reality this proof should be worked through "backwards" so that the choice of $\delta > 0$ is clear.
\\
\\
We begin by noting that
\begin{align*}
&\frac{(2n)! n^2 \pi^2}{2^{2n - 1} n^{n - 2} n! {2n \choose 2} e^{-(n-1)}} \int_{m^* - \delta}^{m + \delta} dm \ e^{n (g(m) + s_n (m,1))} \\ &= \int_{m^* - \delta}^{m + \delta} dm \int_{0}^\pi d \theta_1 \int_0^\pi d \theta_2 \ \frac{\cos \theta_1 \cos \theta_2 e^{g(m)}}{\sqrt{1 - m^2}} e^{(n-1) ({\psi^g}(m, \theta_1, \theta_2))}
\end{align*}
and by using the symmetries of the trigonometric functions, it follows that
\begin{align*}
 &\int_{m^* - \delta}^{m^* + \delta} dm \int_0^\pi d \theta_1 \int_0^\pi d \theta_2 \frac{e^{g (m)} \cos \theta_1 \cos \theta_2}{\sqrt{1 - m^2}} e^{(n-1) {\psi^g} (m, \theta_1, \theta_2)} \\
&= \int_{m^* - \delta}^{m^* + \delta} dm \int_{- \frac{\pi}{2}}^{\frac{\pi}{2}} d \theta_1 \int_{- \frac{\pi}{2}}^{\frac{\pi}{2}} d \theta_2 \frac{e^{g (m)} \cos \theta_1 \cos \theta_2}{\sqrt{1 - m^2}} e^{(n-1) {\psi^g} (m, \theta_1, \theta_2)} \\
&- \int_{m^* - \delta}^{m^* + \delta} dm \int_0^{\frac{\pi}{2}} d \theta_1 \int_0^{\frac{\pi}{2}} d \theta_2 \frac{e^{g (m)} \sin \theta_1 \cos \theta_2}{\sqrt{1 - m^2}} e^{(n-1) {\psi^g} (m, \theta_1 + \frac{\pi}{2}, \theta_2)}  \\ &-\int_{m^* - \delta}^{m^* + \delta} dm \int_0^{\frac{\pi}{2}} d \theta_1 \int_0^{\frac{\pi}{2}} d \theta_2 \frac{e^{g (m)} \cos \theta_1 \sin \theta_2}{\sqrt{1 - m^2}} e^{(n-1) {\psi^g} (m, \theta_1 , \theta_2 + \frac{\pi}{2})} .
\end{align*}
We want to show that the first integral on the second line of this manipulation is exponentially dominant. To save space, denote the integrals as follows
\begin{align*}
I_1 (n) :=  \int_{m^* - \delta}^{m^* + \delta} dm \int_{- \frac{\pi}{2}}^{\frac{\pi}{2}} d \theta_1 \int_{- \frac{\pi}{2}}^{\frac{\pi}{2}} d \theta_2 \frac{e^{g (m)} \cos \theta_1 \cos \theta_2}{\sqrt{1 - m^2}} e^{(n-1) {\psi^g} (m, \theta_1, \theta_2)} \ , \\
I_2 (n) := \int_{m^* - \delta}^{m^* + \delta} dm \int_0^{\frac{\pi}{2}} d \theta_1 \int_0^{\frac{\pi}{2}} d \theta_2 \frac{e^{g (m)} \sin \theta_1 \cos \theta_2}{\sqrt{1 - m^2}} e^{(n-1) {\psi^g} (m, \theta_1 + \frac{\pi}{2}, \theta_2)} \ , \\
I_3 (n) := \int_{m^* - \delta}^{m^* + \delta} dm \int_0^{\frac{\pi}{2}} d \theta_1 \int_0^{\frac{\pi}{2}} d \theta_2 \frac{e^{g (m)} \cos \theta_1 \sin \theta_2}{\sqrt{1 - m^2}} e^{(n-1) {\psi^g} (m, \theta_1 , \theta_2 + \frac{\pi}{2})}  \ .
\end{align*} 
For the terms $I_2$ and $I_3$, observe that
\begin{align*}
\left| \sqrt{\frac{1 + m}{2}} \sin \alpha - \sqrt{\frac{1 - m}{2}} \cos \beta \right| \leq \max \left\{  \sqrt{\frac{1 + m}{2}}, \sqrt{\frac{1 - m}{2}} \right\} < \sqrt{\frac{1 + m}{2}} +  \sqrt{\frac{1 - m}{2}}
\end{align*} 
for any $\alpha, \beta \in [0, \frac{\pi}{2}]$ and $m \in (m^* - \delta, m^* + \delta)$. Using this property, one can check that 
\begin{align*}
M_2 (\delta) &:= \max_{(m, \theta_1, \theta_2) \in (m^* - \delta, m^* + \delta) \times [0, \frac{\pi}{2}] \times [0, \frac{\pi}{2}]} {\psi^g} \left(m, \theta_1 + \frac{\pi}{2}, \theta_2 \right) \\ &\leq \max_{m \in (m^* - \delta, m^* + \delta)} \left\{ g(m) + 1 + \ln \left( \left(  \max \left\{  \sqrt{\frac{1 + m}{2}}, \sqrt{\frac{1 - m}{2}} \right\}\right)^2 \right) \right\} .
\end{align*}
By continuity of the function inside the maximum, one can check that
\begin{align*}
\lim_{\delta' \to 0^+} M_2 (\delta') < M_1 (\delta) :=  \max_{(m, \theta_1, \theta_2) \in (m^* - \delta, m^* + \delta) \times [- \frac{\pi}{2}, \frac{\pi}{2}] \times [- \frac{\pi}{2}, \frac{\pi}{2}]} {\psi^g} \left(m, \theta_1, \theta_2 \right) = {\psi^g}(m^*) ,
\end{align*}
from which it follows that for small enough $\delta > 0$, we have $M_2 (\delta) < M_1 (\delta)$. One can verify in the same way that
\begin{align*}
M_3 (\delta) := \max_{(m, \theta_1, \theta_2) \in (m^* - \delta, m^* + \delta) \times [0, \frac{\pi}{2}] \times [0, \frac{\pi}{2}]} {\psi^g} \left(m, \theta_1 , \theta_2 + \frac{\pi}{2} \right) < M_1 (\delta)
\end{align*}
for small enough $\delta > 0$. For such $\delta$, it follows that
\begin{align*} 
\lim_{n \to \infty} \frac{1}{n} \ln I_{2/3} (n) = M_{2/3} (\delta) < M_1 (\delta) = \lim_{n \to \infty} \frac{1}{n} \ln I_{1} (n),
\end{align*}
which shows that $I_1 (n)$ exponentially dominates $I_{2/3} (n)$.
\\
\\
To continue, we have
\begin{align*}
\frac{n^{\frac{1}{2k} + 1} (I_1 (n) - I_2 (n) - I_3 (n))}{e^{n M_1}} = \frac{n^{\frac{1}{2k} + 1} I_1 (n)}{e^{n M_1}} - n^{\frac{1}{2k} + 1} e^{n \left( \frac{1}{n} \ln I_{2} (n) - M_1 \right)} - n^{\frac{1}{2k} + 1} e^{n \left( \frac{1}{n} \ln I_{3} (n) - M_1 \right)} .
\end{align*}
It is now clear that in the limit the terms on the right of the $I_1(n)$ term vanish since they are exponentially small. As for the limit of the integral $I_1 (n)$, it is solved by a routine application of Laplace's method using the asymptotics developed in \cref{thm:tilting function taylor polynomial}. First, however, we must split the integral $I_1 (n)$ with respect to the angular variables. Denote
\begin{align*}
f(m, \theta_1, \theta_2) := \frac{e^{g(m)} \cos \theta_1 \cos \theta_2}{\sqrt{1 - m^2}} .
\end{align*} 
Since ${\psi^g}$ attains it unique maximum at $(m^*, 0, 0)$, it follows that
\begin{align*}
&\lim_{n \to \infty} \frac{1}{n} \ln \int_{m^* - \delta}^{ m^* + \delta} dm \ \int_{\left([- \delta, \delta] \times [- \delta, \delta] \right)^c} d \theta_1 d \theta_2 \ f (m, \theta_1, \theta_2) e^{(n-1) {\psi^g} (m, \theta_1, \theta_2)} \\ &= \sup_{m \in [m^* - \delta, m^* + \delta] \times \left([- \delta, \delta] \times [- \delta, \delta] \right)^c} {\psi^g} (m, \theta_1, \theta_2) < M_1 .
\end{align*} 
If we denote
\begin{align*}
I_{1, \delta} (n) := \int_{m^* - \delta}^{ m^* + \delta} dm \ \int_{-\delta}^\delta d \theta_1 \int_{- \delta}^\delta d \theta_2 \ f (m, \theta_1, \theta_2) e^{(n-1) {\psi^g} (m, \theta_1, \theta_2)} ,
\end{align*}
we have
\begin{align*}
\frac{n^{\frac{1}{2k} + 1} I_1 (n)}{e^{n M_1}} = \frac{n^{\frac{1}{2k} + 1} I_{1, \delta} (n)}{e^{n M_1}} + n^{\frac{1}{2k} + 1} e^{n \left( \frac{1}{n} \ln (I_1 (n) - I_{1, \delta} (n)) - M_1 \right)} .
\end{align*}
Again, since the right hand side contains exponentially decreasing terms, the asymptotics will be determined by the first term on the right. Finally, by changing variables, observe that
\begin{align*}
&\frac{n^{\frac{1}{2k} + 1} I_{1, \delta} (n)}{e^{(n-1) M_1}} \\ &= \int_{- \delta n^\frac{1}{2k}}^{\delta n^{\frac{1}{2 k}}} dm \int_{- \frac{\pi}{2} n^{\frac{1}{2}}}^{ \frac{\pi}{2} n^{\frac{1}{2}}} d \theta_1 \int_{- \frac{\pi}{2} n^{\frac{1}{2}}}^{ \frac{\pi}{2} n^{\frac{1}{2}}} d \theta_2 \ f \left(m^* + \frac{m}{n^{\frac{1}{2k}}}, \frac{\theta_1}{n^{\frac{1}{2}}}, \frac{\theta_2}{n^{\frac{1}{2}}} \right) e^{(n-1) \left( {\psi^g} \left(m^* + \frac{m}{n^{\frac{1}{2k}}}, \frac{\theta_1}{n^\frac{1}{2}}, \frac{\theta_2}{n^{\frac{1}{2}}}\right) - {\psi^g} (m^*)\right)} . 
\end{align*}
If one looks at the remainder term displayed in \cref{thm:tilting function taylor polynomial}, one finds that
\begin{align*}
&\left| \sum_{|\alpha| = 3, \ \alpha_1 \not \in \{ 2, 3 \}} R_\alpha (m, \theta_1, \theta_2) (m, \theta_1, \theta_2)^\alpha \right| \\ &\leq \max_{(m, \theta_1, \theta_2) \in [- \delta, \delta]^3, \ |\alpha| = 3, \ \alpha_1 \not \in \{ 2, 3 \}|} |R_\alpha (m, \theta_1, \theta_2)| \sum_{|\alpha| = 3, \ \alpha_1 \not \in \{ 2, 3\}} |(m, \theta_1, \theta_2)^\alpha| \\
&\leq  \max_{(m, \theta_1, \theta_2) \in [- \delta, \delta]^3, \ |\alpha| = 3, \ \alpha_1 \not \in \{ 2, 3 \}|} |R_\alpha (m, \theta_1, \theta_2)| (A |\theta_1|^3 + B \theta_1^2 |\theta_2| + C |\theta_1| \theta_2^2 + D |\theta_2|^3 + E |m| |\theta_1| |\theta_2|) \\
&\leq  \left( \delta F \max_{(m, \theta_1, \theta_2) \in [- \delta, \delta]^3, \ |\alpha| = 3, \ \alpha_1 \not \in \{ 2, 3 \}|} |R_\alpha (m, \theta_1, \theta_2)|  \right) (\theta_1^2 + \theta_2^2) ,
\end{align*}
and
\begin{align*}
\left| R_{(2k + 1,0,0)} (m, \theta_1, \theta_2) m^{2k + 1} \right| \leq  \left( \delta \max_{(m, \theta_1, \theta_2) \in [- \delta, \delta]^3} |R_{(2k+1,0,0)} (m, \theta_1, \theta_2)| \right) m^{2k} ,
\end{align*}
where $A,B,C,D,E,F > 0$ are all positive constants. For $\delta$ satisfying
\begin{align*}
\delta F \max_{(m, \theta_1, \theta_2) \in [- \delta, \delta]^3, \ |\alpha| = 3, \ \alpha_1 \not \in \{ 2, 3 \}|} |R_\alpha (m, \theta_1, \theta_2)| < \max \left \{- \frac{1}{2} \partial_2^2 [{\psi^g}] (m^*, 0, 0)m, - \frac{1}{2} \partial_{3}^2 [{\psi^g}] (m^*, 0, 0) \right\} ,
\end{align*}
and
\begin{align*}
\delta \max_{(m, \theta_1, \theta_2) \in [- \delta, \delta]^3} |R_{(2k+1,0,0)} (m, \theta_1, \theta_2)|\leq - \frac{1}{(2k)!} \partial^{2k} [\psi^g](m^*) .
\end{align*}
Ultimately, for $\delta > 0$ chosen small enough so as to satisfy the finite number of conditions given previously, using the error bounds above, by dominated convergence, it follows that
\begin{align*}
&\lim_{n \to \infty} \int_{- \delta n^\frac{1}{2k}}^{\delta n^{\frac{1}{2 k}}} dm \int_{- \frac{\pi}{2} n^{\frac{1}{2}}}^{ \frac{\pi}{2} n^{\frac{1}{2}}} d \theta_1 \int_{- \frac{\pi}{2} n^{\frac{1}{2}}}^{ \frac{\pi}{2} n^{\frac{1}{2}}} d \theta_2 \ f \left(m^* + \frac{m}{n^{\frac{1}{2k}}}, \frac{\theta_1}{n^{\frac{1}{2}}}, \frac{\theta_2}{n^{\frac{1}{2}}} \right) e^{(n-1) \left( {\psi^g} \left(m^* + \frac{m}{n^{\frac{1}{2k}}}, \frac{\theta_1}{n^\frac{1}{2}}, \frac{\theta_2}{n^{\frac{1}{2}}}\right) - {\psi^g} (m^*)\right)} \\
&= f(m^*, 0, 0) \int_{\mathbb{R}^3} d \theta_1 d \theta_2 d m \ e^{\frac{1}{2} \partial_2^2 [{\psi^g}] (m^*, 0, 0) \theta_1^2 + \frac{1}{2} \partial_{3}^2 [{\psi^g}] (m^*, 0, 0) \theta_2^2 + \frac{1}{(2k)!} \partial^{2k} [\psi^g] (m^*) m^{2k}} .
\end{align*}
Combining all of these results together, it follows that
\begin{align*} &\lim_{n \to \infty}\frac{n^{\frac{1}{2k} + 1}\int_{m^* - \delta}^{m + \delta} dm \ e^{n (g(m) + s_n (m,1))}}{e^{n {\psi^g}(m^*)}} \frac{(2n)! n^2 \pi^2}{2^{2n - 1} n^{n - 2} n! {2n \choose 2} e^{-(n-1)}}  \\ &= \frac{e^{g(m^*)}}{e^{{\psi^g} (m^*)} \sqrt{1 - {m^*}^2}} \int_{\mathbb{R}^3} d \theta_1 d \theta_2 d m \ e^{\frac{1}{2} \partial_2^2 [{\psi^g}] (m^*, 0, 0) \theta_1^2 + \frac{1}{2} \partial_{3}^2 [{\psi^g}] (m^*, 0, 0) \theta_2^2 + \frac{1}{(2k)!} \partial^{2k} [\psi^g] (m^*) m^{2k}} .
\end{align*}
\end{proof}

\end{document}